\newtheorem{theorem}{Theorem}
\newtheorem{lemma}{Lemma}
\title{Reaching Consensus for Asynchronous Distributed Key Generation}
\date{\today}
\author{Ittai Abraham}
\affiliation{
\institution{VMware Research}
\city{Herzliya}
\country{Israel}}
\author{Philipp Jovanovic}
\affiliation{\institution{University College London}
\city{London}
\country{United Kingdom}}
\author{Mary Maller}
\affiliation{\institution{Ethereum Foundation}
\city{London}
\country{United Kingdom}}
\author{Sarah Meiklejohn}
\affiliation{\institution{University College London}
\city{London}
\country{United Kingdom}}
\affiliation{\institution{Google}
\city{London}
\country{United Kingdom}}
\author{Gilad Stern}
\affiliation{
\institution{The Hebrew University in Jerusalem}
\city{Jerusalem}
\country{Israel}}
\author{Alin Tomescu}
\affiliation{
\institution{VMware Research}
\city{Palo Alto, CA}
\country{USA}}
\newcommand{\mary}[1]{}
\newcommand{\philipp}[1]{}
\newcommand{\alin}[1]{}
\newcommand{\ittai}[1]{}
\newcommand{\gilad}[1]{}
\newcommand{\sarah}[1]{}
\newcommand{\tocite}[1]{}
\newcommand{\proposal}{\mathsf{prop}}
\newcommand{\validate}{\mathsf{validate}}
\newcommand{\Verify}{\mathsf{Verify}}
\newcommand{\Gather}{\mathsf{Gather}}
\newcommand{\GatherVerify}{\mathsf{GatherVerify}}
\newcommand{\PE}{\mathsf{PE}}
\newcommand{\PEVerify}{\mathsf{PEVerify}}
\newcommand{\AsyncHS}{\mathsf{NWH}}
\newcommand{\keyCorrect}{\mathsf{keyCorrect}}
\newcommand{\lockCorrect}{\mathsf{lockCorrect}}
\newcommand{\commitCorrect}{\mathsf{commitCorrect}}
\newcommand{\viewChange}{\mathsf{viewChange}}
\newcommand{\processErrors}{\mathsf{processFaults}}
\newcommand{\processMessages}{\mathsf{processMessages}}
\newcommand{\agg}{\mathsf{vrf}\_\mathsf{dkg}}
\newcommand{\checkValidity}{\mathsf{checkValidity}}
\newcommand{\generateShare}{\mathsf{\DKG\Share}}
\newcommand{\aggregateShares}{\mathsf{DKGAggregate}}
\newcommand{\DKGVerify}{\mathsf{\DKG\Verify}}
\newcommand{\verifySecret}{\mathsf{\DKG\Share\Verify}}
\newcommand{\sign}{\mathsf{sign}}
\newcommand{\verifySignature}{\mathsf{verifySignature}}
\newcommand{\Eval}{\mathsf{Eval}}
\newcommand{\EvalShareVerify}{\mathsf{EvalShareVerify}}
\newcommand{\evalShare}{\mathsf{eval\_share}}
\newcommand{\ADKG}{\mathsf{ADKG}}
\newcommand{\dkgShares}{\mathsf{dkg\_shares}}
\newcommand{\evalShares}{\mathsf{eval\_shares}}
\newcommand{\evals}{\mathsf{evals}}
\newcommand{\startEval}{\mathsf{start\_eval}}
\newcommand{\evaluation}{\mathsf{evaluation}}
\newcommand{\gatherTuple}[1]{(#1,(\proposal_{#1},\agg_{#1}))}
\newcommand{\randpick}{\xleftarrow{\$}}
\newcommand{\secp}{\lambda}
\newcommand{\Share}{\mathsf{Sh}}
\newcommand{\Aggregate}{\mathsf{Aggregate}}
\newcommand{\share}{\mathsf{share}}
\newcommand{\DKG}{\mathsf{DKG}}
\newcommand{\pk}{\mathsf{pk}}
\newcommand{\sk}{\mathsf{sk}}
\newcommand{\Prove}{\mathsf{Prove}}
\newcommand{\vrf}{\mathsf{vrf}}
\newcommand{\Commit}{\mathsf{Commit}}
\newcommand{\Open}{\mathsf{Open}}
\newcommand{\dkg}{\mathsf{dkg}}
\newtheorem{definition}{Definition}
\begin{document}

\begin{abstract}

    We give a protocol for Asynchronous Distributed Key Generation (A-DKG) that is optimally resilient (can withstand $f<\frac{n}{3}$ faulty parties), has a constant expected number of rounds, has $\tilde{O}(n^3)$ expected communication complexity, and assumes only the existence of a PKI.
    Prior to our work, the best A-DKG protocols required $\Omega(n)$ expected number of rounds, and $\Omega(n^4)$ expected communication.
    
    Our A-DKG protocol relies on several building blocks that are of independent interest.
    We define and design a \textit{Proposal Election (PE)} protocol that allows parties to retrospectively agree on a valid \textit{proposal} after enough proposals have been sent from different parties.
    With constant probability the elected proposal was proposed by a nonfaulty party.
    In building our PE protocol, we design a \textit{Verifiable Gather} protocol which allows parties to communicate which proposals they have and have not seen in a verifiable manner.
    The final building block to our A-DKG is a \textit{Validated Asynchronous Byzantine Agreement (VABA)} protocol.
    We use our PE protocol to construct a VABA protocol that does not require leaders or an asynchronous DKG setup.  Our VABA protocol can be used more generally when it is not possible to use threshold signatures.
    
\end{abstract}

\maketitle

\renewcommand{\shortauthors}{Abraham et al.}

\section{Introduction}\label{sec:introduction}

In this work we study \textit{Decentralized Key Generation} in the \textit{Asynchronous} setting (A-DKG). Our protocol works in the authenticated model, assumes a Public Key Infrastructure (PKI),  obtains optimal resilience (i.e., tolerates $f<\frac{n}{3}$ malicious parties),
and terminates in $O(1)$ expected rounds using just $\tilde{O}(n^3)$ expected \textit{words}, where a word can contain a constant number of values and cryptographic signatures.
Previously, the best protocol for A-DKG with optimal resilience is by Kokoris-Kogias, Malkhi, and Spiegelman~\cite{KokorisMS20} and it requires $\Omega(n)$ expected number of rounds and $\Omega(n^4)$ expected number of words. 

A DKG protocol allows a set of $n$ parties to collectively generate a public key such that its corresponding secret key is secret-shared between all $n$ parties.  Actions that require the secret key such as decrypting or signing can be performed by any $f+1$ cooperating parties but not by $f$ or fewer.  Unlike in secret sharing protocols, there is no trusted dealer.  
Two key applications of DKGs are threshold encryption and threshold signature schemes.
Threshold encryption can be used to restrict employees' access to databases or to decrypt election results. Threshold signatures can be used to implement random beacons~\cite{syta17scalable}, reduce the complexity of consensus algorithms~\cite{AbrahamMS19}, or more recently to outsource management of secrets on a public blockchain to multiple, semi-trusted authorities~\cite{KokorisAG18}.
One of the challenges in constructing a DKG is that there might be multiple DKG transcripts that would pass verification, and parties must agree on which DKG transcript to eventually use in their application.
This ultimately boils down to a consensus problem in which no preprocessing is possible.
In this work, we are interested in improving the consensus layer of DKG protocols. 
We are careful to avoid the use of any primitive that requires reaching agreement on the output of a DKG (e.g., threshold signatures) in order to instantiate our consensus algorithm.

Kate, Huang, and Goldberg~\cite{KHG12}
observed in an influential paper that many DKGs are unsuitable for use over the Internet due to their reliance on synchrony assumptions and time-outs.
Unstable communication channels are common over the Internet and it is hard to be certain that all players in the system will have seen all messages before moving onto the next round.
Kate, Huang and Goldberg~\cite{KHG12} presented a weakly-synchronous DKG with $O(n^4)$ complexity.
However, their solution relies heavily on leaders who may be adaptively targeted,
and they still require time-outs to distinguish optimistic scenarios from worst-case scenarios.
Recently Kokoris-Kogias, Malkhi and Spiegelman \cite{KokorisMS20} presented a fully asynchronous solution which is leaderless and has $O(n^4)$ expected communication complexity.
The actions of honest parties in their protocol are event-driven and there are no timeouts.

In this work, we are able to improve on the results of Kokoris-Kogias et al.
We design a fully asynchronous consensus algorithm for reaching agreement on the outcome of a DKG that is leaderless and has $\tilde{O}(n^3)$ complexity.
Our solution is secure under the presence of Byzantine adversaries that may corrupt fewer than $\frac{n}{3}$ parties.
Our results are achieved without the use of binary agreements, which is one of the reasons why we are able to improve complexity.
We see this as an important improvement in the design of DKGs that are suitable for use over the Internet as well as a small step towards removing the ``slow'' connotation from the word ``asynchronous''.

\subsection{Our Contributions:}
Our primary contributions are as follows:
\begin{itemize}
    \item Assuming a PKI setup, we present a protocol for solving Asynchronous Distributed Key Generation, that is resilient to $f < \frac{n}{3}$ Byzantine parties, and runs in expected $O(1)$ rounds, where the non-faulty parties send an expected $\tilde{O}(n^3)$ words.
    
    \item We present a new  \textit{Validated Asynchronous Byzantine Agreement} (VABA) protocol that uses a PKI but does \textit{not} use a DKG. Our new VABA protocol can reach agreement on inputs of size $m$ words, in $O(1)$ expected rounds, using just $\tilde{O}(m n^2 +n^3)$ expected words, and is resilient to an adversary controlling at most $f<\frac{n}{3}$ parties.
    Our VABA protocol is the key building block in obtaining our A-DKG.
    
    \item We define and instantiate a new primitive which we call a \textit{Proposal Election} (PE) protocol.
    Our proposal election allows us to avoid relying on leaders.
    Roughly speaking, in Proposal Election, every party inputs some externally valid value and, with constant probability, all parties output the same value that was proposed by a non-faulty party.
    Our Proposal Election runs in $O(1)$ rounds and $\tilde{O}(n^3)$ words and is the key building block in obtaining our VABA protocol.
    
    \item We define and instantiate an extension of the Gather primitive by Canetti and Rabin~\cite{feldman1988optimal,CR93, AAD04} to a \textit{Verifiable Gather} protocol. Our verifiable gather protocol guarantees the existence of some core set, such that all parties output some verifiable super set of this core.  To limit the adversary, only outputs that contains this core pass verification. 
    Our verifiable gather is the key building block in obtaining our proposal election.
\end{itemize}

\subsection{Our techniques}
We obtain our A-DKG using a combination of two advances. 
The first is an \textit{Aggregatable Publicly Verifiable Secret Sharing} (APVSS) scheme by Gurkan \text{et al.}~\cite{GurkanJMMST21} that uses a PKI. 
The second is a \textit{Validated Asynchronous Byzantine Agreement} (VABA) protocol (as defined by Cachin,  Kursawe, Petzold, and Shoup \cite{CachinKPS01}) that uses a PKI but does not use a DKG, which is new to this paper.  
Without a DKG, all previous constant expected time agreement protocols had to rely on a \textit{weak} abstraction (that has a \textit{constant} probability of error) of coin tossing: Feldman and Micali for synchrony \cite{FeldmanM97} and Canetti and Rabin for asynchrony \cite{CR93}.
Our work is also based on this paradigm of using a weak building block. At first sight it may seem that $O(n^4)$ words is the best one can hope for in this paradigm. To obtain an A-DKG with expected $\tilde{O}(n^3)$ word complexity, we identify three barriers, which this work overcomes using novel techniques.

\paragraph{First barrier: aggregate many secret sharings.} Even in  synchronous settings, the weak coin of \cite{FeldmanM97} requires at least $n-f$ parties, such that each such party has at least $f+1$ secrets to be attached to it. If each secret requires a separate \textit{Verifiable Secret Sharing} (VSS) invocation,  we get  $\Omega((f+1)(n-f) |VSS|) = \Omega(n^2 |VSS|)$ word complexity where $|VSS|$ is the word complexity of VSS.
Since VSS, whether asynchronous or not, requires $|VSS|=\Omega(n^2)$ words~\cite{BackesDK13,DolevR82}, we get $\Omega(n^4)$ just to attach enough secrets to enough parties. To overcome this barrier we use an Aggregatable PVSS~\cite{GurkanJMMST21}, which allows to attach $\Omega(n)$ secrets to $\Omega(n)$ parties using just $O(n)$ Reliable Broadcasts \cite{Bracha84,CachinT05a} of $O(n)$-sized APVSS transcripts for a total of $\tilde{O}(n^3)$ word complexity.

\paragraph{Second barrier: Weak Common Coin is too weak.}
Suppose every party can have a random secret sharing attached to it using a total of $\tilde{O}(n^3)$ words. In the classic Binary Asynchronous Byzantine Agreement protocol, these secrets are translated to a weak binary common coin and this coin is used to break ties in case that not all parties have the same input.
The challenge for a VABA protocol aiming for $O(1)$ expected time is the need to randomly elect an externally valid proposal with constant probability. Using a weak common coin to do this election seems challenging. Consider the case where the externally valid inputs are $O(n)$ bits long. We do not know of any way to elect a valid proposal with constant probability using a weak common coin (for example, one could use $\log n$ coins to elect a leader, but due to the constant error probability this will have an error probability that is polynomially close to one).

We suggest a new approach that bypasses the weak coin abstraction. Instead, we proceed to extend the Gather primitive of Canetti and Rabin \cite{feldman1988optimal,CR93,AAD04} to a \textit{Verifiable} Gather protocol. Recall that a Gather protocol does not solve consensus but instead guarantees the existence of some core set, such that all parties output some super set of this core. Roughly speaking, the goal of our new Verifiable Gather primitive is to introduce a verification protocol to essentially force the adversary to also only output super sets of this core (in the sense that other outputs will not pass the verification).

We show how to combine Verifiable Gather with random secret sharing \cite{KokorisMS20} and an efficient Reliable Broadcast \cite{Bracha84, Bracha87,CachinT05a} to obtain a new primitive we call Proposal Election. Roughly speaking, in Proposal Election, every party inputs some externally valid value, and with constant probability, all parties output the same value that was proposed by a non-faulty party.
Our Proposal Election runs in $O(1)$ rounds and $\tilde{O}(n^3)$ words.  

Conceptually, our Proposal Election abstraction can be viewed as the validated (multi-valued) generalization of the weak common coin approach. Technically, our Proposal Election (PE) exposes a new validation abstraction that efficiently enables electing a common externally valid value with constant probability. 
Crucially, parties can also verify that other parties provide the uniquely elected value if the election process succeeded. This significantly limits the adversary's behaviour and forces it to essentially act honestly or remain silent.

\paragraph{Third barrier: efficient VABA, using PE} Our final challenge for asynchronous DKG is obtaining a VABA protocol for messages of size $m$ (where $m=\Theta(n)$ words, is the size of a PVSS) using PE at a cost of just $\tilde{O}(mn^2 + n^3)=\tilde{O}(n^3)$ words per view and just $O(1)$ expected views (due to the constant success probability of PE), where each view consists of just a constant number of rounds.
There are two natural approaches. The first is to use known optimally resilient \textit{validated multi-valued} techniques from known  VABA protocols. Unfortunately, the known VABA protocols of Cachin, Kursawe, and Shoup~\cite{CachinKS05} and Abraham, Malkhi, and Spiegelman~\cite{AbrahamMS19} require a DKG where all parties agree on the output (except for negligible error) and do not seem to work with the constant error probability of PE. The work of Cachin, Kursawe, Lysyanskaya and Strobl~\cite{CKLS02} uses an existing DKG to refresh to a new DKG using $\Omega(n^4)$ words. The work of Zhou, Schneider and Van Renesse~\cite{APSS} suggest a refresh protocol with exponentially high communication complexity.

The second natural approach is to use  \textit{binary} agreement techniques. Indeed, the application of Bracha's consensus technique \cite{Bracha84} (with our PE protocol) requires $\Omega(n)$ invocations of Reliable Broadcast per bit, for a total of $\Omega(mn^3)=\Omega(n^4)$ words when $m=\Omega(n)$ (and this solution only obtains weak validity).

We overcome this third barrier with a new consensus protocol called \textit{No Waitin' HotStuff} (NWH). As its name implies, NWH is a new member of the HotStuff family of consensus protocols \cite{HS19, AbrahamMS19,DUMBO20,AbrahamS20} which obtains $\tilde{O}(n^3 +m n^2)$ expected words and $O(1)$ expected rounds in the asynchronous setting, using PE, and without relying on a DKG.

Intuitively, in each view of NWH, a new invocation of PE is used as a "virtual leader". 
For safety, NWH uses the by-now-standard \textit{Key-Lock-Commit} paradigm of HotStuff \cite{HS19,AbrahamMS19}. 
The main novelty of NWH is in its liveness guarantees and its ability to change view in asynchrony in a constant number of asynchronous rounds even if the "virtual leader" acts maliciously.
NWH obtains liveness in full asynchrony using our PE's properties and a new mechanism that forces parties (even malicious parties) to essentially send only validated responses. 
In case of a non-faulty "virtual leader", the PE properties guarantee that all non-faulty parties see the \textit{same} output from the leader and that this input was an input of a non-faulty party. In this case, the NWH protocol forces the faulty parties to essentially only act as omission-faulty (hence a decision is guaranteed to be reached in such a view).
In case of a faulty "virtual leader", the PE properties guarantee that all non-faulty parties eventually see \textit{some} output from the leader (might not be the same), and the NWH protocol guarantees that only a safe decision will be made or, if none can be reached, eventually a view change will occur in a constant number of rounds. The combination of NWH with the constant probability of success for PE guarantee termination in an expected constant number of asynchronous rounds. 
NWH manages to obtain these safety and liveness properties to obtain a VABA protocol for messages of size $m$ words with $\tilde{O}(mn^2 + n^3)$ expected message complexity and $O(1)$ expected rounds. 

\paragraph{A Note on Adaptive Adversaries} All our results hold for a static adversary. However, we note that given an aggregatable PVSS scheme that is secure against adaptive adversaries, our VABA protocol and therefore our A-DKG protocol would also be secure against adaptive adversaries. This is the same type of reduction as in \cite{CachinKS05,AbrahamMS19} where the protocol is adaptivly secure if its underlying cryptographic primitives are adaptivly secure. The PVSS scheme of \cite{GurkanJMMST21} is only proved security in the static model. Obtaining an adaptively-secure aggregatable PVSS remains an open question. 

\subsection{Related Work}
Our work assumes a PKI and obtains a Validated ABA protocol. However, many of our techniques can be seen as (non-trivial) extensions of the work done in the information theoretic model (where there are private channels, but no PKI nor any computational bounds on the adversary).
In the information theoretic model, the natural validity property is weaker and it is natural to focus on the binary case.
Any solution for consensus in the asynchronous model must have infinite executions~\cite{FLP85}. Ben-Or~\cite{BenO83} showed how randomization can be used to obtain a finite expected running time and Bracha~\cite{Bracha84} showed how to do this with optimal resilience. Reducing the expected number of rounds to a constant was obtained by 
Canetti and Rabin~\cite{CR93}. They provide the first ABBA with optimal resilience and constant expected time. It requires at least $\Omega(n^8)$ words in expectation (possibly more, but we did not verify). This was improved by Patra, Choudhary, and Rangan~\cite{AAR09} to expected $\tilde{O}(n^4)$ words for ABBA.
The protocols of Canetti and Rabin~\cite{CR93}, their extensions and those that rely on cryptographic assumptions all have a non-zero probability of non-termination.
In the information theoretic setting it is possible to efficiently solve \textit{Asynchronous Binary Byzantine Agreement (ABBA)} with optimal resilience and zero probability of non-termination \cite{ADH08}, and this can be done with just $\tilde{O}(n^6)$ expected words and $O(n)$ rounds \cite{BCP18}.

The verifiable weak proposal election primitive is an extension of the idea of a weak common coin, which was introduced in the synchronous setting by Feldman and Micali\cite{FeldmanM97}.
A weak common coin is a primitive simulating a common shared randomness source.
The coin is weak in the sense that with some probability the parties might not agree on the value.
Feldman later extended this result to the asynchronous setting \cite{feldman1988optimal}.
Katz and Koo improve on the synchronous result \cite{KatzK06}.

A DKG can be viewed as a specific form of a Multi-Party Computation (MPC) protocol. In that sense, the work of Ben-Or, Canetti and Goldreich~\cite{BCG93} obtains perfect security for $n>4f$ and the work of Ben-Or, Kelmer and Rabin~\cite{BKR94} obtains statistical security and optimal resilience of $n>3f$. Both protocols use ABBA as a building block and have very high word complexity. 
Modern MPC protocols in the asynchronous model use a DKG~\cite{BH07, HiNiPr08,CP15}, so they could benefit from the results of our work.
Another related work that may benefit from protocol is the work of Gągol, Leśniak, Straszak and Świętek \cite{gagol2019aleph}.

\section{Definitions and Assumptions}\label{sec:definitions}

\subsection{Network and Threat Model} \label{sec:definitions:threatmodel}

This work deals with protocols for $n$ parties with point-to-point communication channels.
The network is assumed to be asynchronous, which means that there is no bound on message delay, but all messages must arrive in finite time.
The protocols below are designed to be secure against a Byzantine adversary controlling up to $f<\frac{n}{3}$ parties.
This work uses several cryptographic assumptions as "perfect" black-boxes, meaning we assume that an adversary cannot break them.
As described in \cite{AbrahamMS19,CachinKPS01,CachinKS05}, with high probability all protocols require polynomially many uses of the cryptographic primitives, so the protocols remain secure in the face of a computationally bounded adversary with all but a negligible probability.
As described in the introduction, the protocols themselves are secure against adaptive adversaries given an instantiation of the cryptographic primitives which is secure against such an adversary.
However, currently there are no known adaptively secure instantiations for all of the primitives we require.
Similar to the protocols of  \cite{AbrahamMS19,CachinKS05}, the protocols presented can be seen as reductions from one task to another that preserve security against adaptive adversaries.

\subsection{Reliable Broadcast} \label{sec:definitions:broadcast}

A \textit{Reliable Broadcast} is an asynchronous protocol with a designated \textit{dealer}. The dealer has some \textit{input value} $M$ from some known domain $\mathcal{M}$ and each party may \textit{output} a value in $\mathcal{M}$. 
A Reliable Broadcast protocol has the following properties assuming all nonfaulty parties participate in the protocol:
    \begin{itemize}
        \item \textbf{Validity.} If the dealer is nonfaulty, then every nonfaulty party that completes the protocol outputs the dealer's input value, $M$. 
        \item \textbf{Agreement.} If two nonfaulty parties output some value, then it's the same value.
        \item \textbf{Termination.} If the dealer is nonfaulty, then all nonfaulty parties complete the protocol and output a value.
        Furthermore, if some nonfaulty party completes the protocol, every nonfaulty party completes the protocol.
    \end{itemize}
A \textit{Validated Reliable Broadcast} protocol is a Reliable Broadcast protocol variant where each party has access to a common \textit{validate} function, $\validate:\mathcal{M} \to \{0,1\}$. We say that $M \in \mathcal{M}$  is \textit{externally valid} if $\validate(M)=1$.
In a Validated Reliable Broadcast protocol, the dealer has an externally valid input. 
A Validated Reliable Broadcast protocol has the following additional property: 
\begin{itemize}
    \item \textbf{External Validity.} If a nonfaulty party outputs a value, then this value is externally valid.
\end{itemize}
See Appendix~\ref{sec:broadcast} for a Reliable Broadcast protocol and a Validated Reliable Broadcast protocol with word complexity of $\tilde{O}(n^2 + mn)$, where $m$ is the number of words in any value in $\mathcal{M}$.

\subsection{Verifiable Gather} \label{sec:definitions:gather}

\textit{Gather} is a natural \textit{multi-dealer} extension of Reliable Broadcast where every party is also a dealer. The output of a gather protocol is a \textit{gather-set}. A gather-set consists of \textit{at least} $n-f$ pairs $(j,x)$, such that $j\in [n]$, $x\in \mathcal{M}$, and  each index $j$ appears at most once.
For any given gather-set $X$, we define its index-set $Indices(X)=\{j|\exists (j,x)\in X\}$ to be the set of indices that appear in $X$.

Intuitively speaking, the goal of Gather is to have some common \textit{core} gather-set such that all parties output a super-set of this core. 
Note that a Gather protocol does not solve consensus and different parties may output different super-sets of the core. 
For \textit{Verifiable Gather}, the goal is to limit the power of the adversary to generate inconsistent outputs. 
Intuitively, for any gather-set produced by the adversary, if it passes some \textit{verification protocol}, it must also be a super-set of the common core.

Formally, a verifiable gather protocol consists of a pair of protocols $(\Gather,\Verify)$ and takes as input an external validity function $\validate$ which all parties have access to.
For $\Gather$, each party $i \in [n]$ has an externally valid \textit{input} $x_i$. 
Each party may decide to \textit{output} a gather-set $X_i$.
After outputting the gather-set, parties must continue to update their local state according to the $\Gather$ protocol in order for the verification protocol to continue working.

The properties of $\Gather$ (assuming all nonfaulty start):

\begin{itemize}
    \item \textbf{Binding Core.} Once the first nonfaulty party outputs a value from the $\Gather$ protocol there exists a core gather-set $X^*$ such that if a nonfaulty party $i$ outputs the gather set $X_i$, then $X^*\subseteq X_i$.
    \item \textbf{Internal Validity.} If $(j,x) \in X^*$ and $j$ is nonfaulty at the time the first nonfaulty party completed the $\Gather$ protocol,
    then $x$ is the input of party $j$ in $\Gather$.
    \item \textbf{Termination of Output.} All nonfaulty parties  eventually output a gather-set.
\end{itemize}

The $\Verify$ protocol receives an index-set $I$ and outputs a gather-set $X$ such that $Indices(X)=I$.
It performs two actions at once: it verifies that the index set includes the indices of the binding core, and recovers the gather-set only from the indices and the internal state of the verifying party.
This allows parties to send relatively small index-sets instead of large gather-sets over the network.
The verification protocol limits the adversary to a very narrow set of behaviours, so that any \textit{verifiable} gather-set must contain the Binding core gather-set $X^*$.
A party $i$ can check any index-set $I$, which we denote by executing $\Verify_i(I)$.
If the execution of $\Verify_i(I)$ terminates and outputs a value, we say that $i$ has verified the index-set $I$.

The termination properties of $\Verify$ (given that all nonfaulty start $\Gather$):
\begin{itemize}
        \item \textbf{Completeness.} For any two nonfaulty parties $i,j$, if $j$ outputs $X_j$ from $\Gather$, then $\Verify_i(Indices(X_j))$ eventually terminates with the output $X_j$.
        
        \item \textbf{Agreement on Verification.} For any two nonfaulty $i,j$, and any index-set $I$, if $\Verify_i(Y)$ terminates with the output $X$ then $\Verify_{j}(I)$ eventually terminates with the output $X$. 
\end{itemize}

The correctness properties of the $\Verify$ protocol:
\begin{itemize}

        \item \textbf{Agreement.} All nonfaulty parties agree on values with common indexes. For any two nonfaulty $i,j$, and any index-sets $I,J$, if $\Verify_i(I)$ terminates with the output $X$ and $\Verify_{j}(J)$ terminates with the output $Y$, and $(k,x) \in X,(k,y) \in Y$, then $x=y$. 
        
        \item \textbf{Includes Core.} If $\Verify_i(I)$ terminates with the output $X$, then the gather-set $X$ contains the binding core gather-set $X^*$ (as defined in the Binding Core property of $\Gather$). 

        \item \textbf{External Validity.} If $\Verify_i(I)$ terminates with the output $X$ for some nonfaulty $i$, then for each $(j,x) \in X$, the value $x$ is externally valid.
\end{itemize}

Observe that the Includes Core and Completeness properties say that not only do all nonfaulty output a gather-set that includes the core but that any gather-set that passes verification contains the core $X^*$.

\subsection{Proposal Election} \label{sec:definitions:VWPE}

A perfect proposal election would allow each party to input a proposal and then have all parties output one common randomly elected proposal. \textit{Proposal Election} (PE) is an asynchronous protocol that tries to capture this spirit but obtains weaker properties. Intuitively, there is only a constant probability that the output  of PE is one common randomly elected proposal coming from a nonfaulty proposer. As in the Verifiable Gather (VG) protocol, we also add a verification protocol. Crucially, in the good event mentioned above, the only value that passes verification is this common elected proposal.
In the remaining cases, the adversary can control the output and even cause different parties to have different outputs. However, even in these cases we force the adversary to allow all parties to eventually output some verifying value.  
This PE is weak enough to be efficiently implementable and we will later show that it is strong enough to enable an efficient constant expected round VABA protocol.

As in VG, we assume a domain $\mathcal{M}$ and we are externally given a function $\validate$ that given any message $x \in \mathcal{M}$ can check the external validity of $x$.
A Proposal Election protocol consists of a pair of protocols $(\PE,\Verify)$.
Each nonfaulty party $i$ starts with an externally valid input $x_i$ to $\PE$.
The output of the $\PE$ protocol is a pair $(x,\pi)$ where $x\in\mathcal{M}$ and $\pi$ is a proof used in the $\Verify$ protocol. 
We model these protocols as having some ideal write-once state $x^*$. We assume $\bot$ is not externally valid and let $x^*\in\mathcal{M}\cup\{\perp\}$. Intuitively, if $x^* \neq \bot$ then the output of all parties will be $x^*$, but when $x^* =\bot$ then the adversary can cause different parties to output different verifying values.

\begin{itemize}
        \item $\alpha$-\textbf{Binding}. 
        For any adversary strategy, with probability $\alpha$, $x^*$ is set to an input of a party that behaved in a nonfaulty manner when it started the $\PE$ protocol.
\end{itemize}
In addition, the $\PE$ protocol has a natural termination property (assuming all nonfaulty start):

\begin{itemize}
        \item \textbf{Termination of Output.} All nonfaulty parties eventually output a pair $(x,\pi)$. 
\end{itemize}

A party $i$ can check any pair of proposal and proof, $(x,\pi)$, which we denote by executing $\Verify_i(x,\pi)$. 
If the execution of $\Verify_i(x,\pi)$ terminates, we say that $i$ has verified $x$. 
If the binding value $x^*$ is not $\perp$, then the only value for which the verify protocol can terminate is $x^*$.
This limits the adversary to essentially either reporting $x^*$, or remaining silent.
The termination properties of $\Verify$ (given that all nonfaulty start $\PE$):

\begin{itemize}
        \item \textbf{Completeness.} For any two nonfaulty $i,j$, the output $(x,\pi)$ of party $j$ from  $\PE$ will eventually be verified by party $i$, i.e. $\Verify_i(x,\pi)$ eventually terminates.
        
        \item \textbf{Agreement on Verification.} For any two nonfaulty $i,j$, and any value $x$ and proof $\pi$, if $\Verify_i(x,\pi)$  terminates then $\Verify_{j}(x,\pi)$ eventually terminates.
\end{itemize}    
Finally, the correctness properties of  $\Verify$:
\begin{itemize}
        \item \textbf{Binding Verification.} If $x^*\neq \perp$ then for every nonfaulty party $j$, and every $(x,\pi)$, if $\Verify_j(x,\pi)$ terminates then $x=x^*$.
        
        \item \textbf{External Validity.} If $\Verify_i(x,\pi)$ terminates then the value $x$ is externally valid.
\end{itemize}
We note that in the computational setting all these properties hold with all but negligible probability.

\subsection{Validated Asynchronous Byzantine Agreement}

In a Validated Asynchronous Byzantine Agreement protocol, there is some external validity function that every party has access to.
In addition, there exists some success parameter $\alpha\in(0,1)$ for the protocol.
Each nonfaulty party $i$ starts with some externally valid input $x_i$ and on termination must output a value.
A Validated Asynchronous Byzantine Agreement protocol has the following properties (assuming all nonfaulty start):
\begin{itemize}
        \item \textbf{Agreement.} All nonfaulty parties that complete the protocol output the same value.
        \item \textbf{Validity.} If a nonfaulty party outputs a value then it is externally valid.
        
        \item \textbf{$\alpha$-Quality.} With probability $\alpha$, the output value is chosen as one of the inputs $x_i$ (party $i$ was nonfaulty when it started the protocol). 
        
        \item \textbf{Termination.} All nonfaulty parties almost-surely terminate, i.e. with probability 1.
    \end{itemize}

\subsection{Cryptographic Abstractions} \label{sec:definitions:crypto}
This work introduces a novel distributed consensus algorithm which uses several cryptographic tools as black-boxes. 
In Section~\ref{sec:efficiency} we discuss how these tools can be instantiated with respect to tools that currently exist in the literature and evaluate the efficiency of our protocol with respect to these tools.
The instantiations of the cryptographic abstractions in this paper are all assumed from prior work, with the exception of an A-DKG protocol, which we define in this section and construct in \cref{sec:dkg}.

\subsubsection{Distributed Key Generation}
A distributed key generation algorithm is a method to generate public keys for threshold systems without a trusted third party.
It is assumed that the aggregation and verification algorithms keep state consisting of each party's public key.
A DKG consists of the following algorithms.
\begin{itemize}
    \item $\generateShare(\sk_i) \mapsto \dkg\share:$ A probabilistic algorithm run by Party $i$ that takes as input a secret key and outputs a \textit{DKG share}.
    The share also contains a description of the party who sent it.
    \item $\generateShare\Verify(\pk_i, \dkg\share) \mapsto \{0,1\}:$ A deterministic algorithm run by Party $j$ that returns $1$ if it is convinced that the DKG share of Party $i$ is valid.
    \item $\DKG\Aggregate(\mathcal{D}) \mapsto \dkg:$ An algorithm run by Party $i$ that takes as input a set $\mathcal{D}$ containing at least $2f + 1$ DKG shares from different parties and outputs a \textit{DKG transcript}.
    \item $\DKG\Verify(\dkg) \mapsto \{0,1\}:$  A deterministic algorithm that returns $1$ if and only if the DKG transcript  contains  DKG shares that pass verification from at least $2f+1$ different parties.
\end{itemize}
The non-inclusion of a reconstruction algorithm here is deliberate; we assume that the purpose of the DKG is to generate a public key for a threshold application and as such it is not clear that a reconstruction algorithm is useful.

A distributed key generation algorithm should be security preserving and correct.  
As the purpose of a distributed key generation algorithm is to generate a public key, secrecy guarantees are only meaningful in the context of the threshold scheme it is being used to instantiate.
Security preservation captures this notion: it means that provided no more than $f$ parties are corrupted, a threshold scheme under the DKG retains all properties of the standard scheme under the key generation algorithm. 
For the sake of this paper we only formally define security preservation for our threshold verifiable random function and instead refer to \cite{GurkanJMMST21} for a full definition of security preservation.

\begin{definition}
    An Asynchronous Distributed Key Generation protocol has the following properties:
    \begin{itemize}
        \item \textbf{Security Preservation.} A threshold scheme under the DKG retains all the properties of the standard scheme under the key generation algorithm, provided no more than $f$ parties are corrupted.
        \item \textbf{Correctness.} We have that:
        \[
        \DKG\Share\Verify(\pk_{i}, \generateShare(\sk_{i}) ) = 1
        \]
        Assume that every $\dkg\share_i \in \mathcal{D}$ is such that
        $\generateShare\Verify(\pk_i, \dkg\share_i) = 1$.
        Then
        \[
        \DKG\Verify( \DKG\Aggregate(\mathcal{D}) ) = 1.
        \]
    \end{itemize}
\end{definition}

An asynchronous DKG, which is the topic of this paper, is an interactive protocol allowing all parties to output the same aggregated DKG transcript.
Since the network is asynchronous, it is also important to make sure that the parties eventually complete the protocol.
Therefore, an A-DKG protocol has the following two properties if all nonfaulty parties participate in it:
\begin{itemize}
    \item \textbf{Agreement.} All parties that terminate output the same DKG , $\dkg$, such that $\DKG\Verify(\dkg)=1$.
    
    \item \textbf{Termination.} All nonfaulty parties almost-surely terminate, i.e. with probability 1.
\end{itemize}

\subsubsection{Threshold Verifiable Random Function} 
A threshold \textit{verifiable random function (VRF)} is an algorithm such that $(f + 1)$ parties can compute the output of the random function $\phi$ on some input,
but $f$ cannot. 
A threshold VRF must be unbiasable ($f$ parties cannot guess even a single bit of the outcome),
and robust ($f + 1$ honest parties always agree on the output).
We will instantiate the threshold VRF using the aggregatable DKG and VUF of Gurkan et al.~\cite{GurkanJMMST21}.

In addition to $(\DKG\Share, \DKG\Share\Verify, \DKG\Aggregate, \DKG\Verify)$ defined above, a threshold VRF consists of the following algorithms:
\begin{itemize}
    \item $\phi(\vrf\_\dkg, m) \mapsto \{0,1\}^{\secp}:$ A deterministic function that takes in a DKG transcript (which implicitly defines a secret key) and a message, and outputs a binary string.  We have that $\phi$ cannot be computed by less than $f+1$ parties.
    \item $\Eval\Share(\vrf\_\dkg, \sk_i,  m) \mapsto (\phi_i(m), \pi_i):$ A probabilistic algorithm run by Party $i$ that takes as input a DKG transcript, a secret key, and a message and returns an \textit{evaluation share} and a \textit{proof share}. Here $\phi_i$ is used to denote that this is a share of $\phi(m)$ as opposed to the full evaluation (likewise $\pi_i$).
    The share also contains a description of the party who sent it.
    \item $\Eval\Share\Verify(\vrf\_\dkg, \pk_i, m, \phi_i(m), \pi_i) \mapsto \{0,1\}:$  A deterministic algorithm run by Party $j$ that takes as input a VRF-DKG transcript, a public key, a message, an evaluation share, and a proof share from Party $i$ and returns $0/1$ to indicate rejection/acceptance.
    \item $\Eval(\vrf\_\dkg, m, \mathcal{F}) \mapsto (\phi(\vrf\_\dkg,m), \pi):$ An algorithm that takes as input a DKG transcript, a message, and a set $\mathcal{F}$ that contains evaluation and proof shares from $f+1$ different parties.  It outputs a function evaluation and an aggregated proof.
    \item $\Eval\Verify(\vrf\_\dkg, m, \phi(\vrf\_\dkg, m), \pi)\mapsto \{0,1\}$:
    A deterministic algorithm that takes as input a DKG transcript, a message, a function evaluation and a proof.  It outputs $0/1$ to indicate rejection/acceptance.
\end{itemize}

\begin{definition}
    A Threshold Verifiable Random Function has the following properties:
    \begin{itemize}
        \item \textbf{Unbiasability.}
        The function $\phi(\vrf\_\dkg,m)$ is distributed uniformly at random over all verifying DKGs and the message space $\mathcal{M}$. 
        Let $\vrf\_\dkg$ be an aggregated DKG transcript such that $\DKG\Verify(\vrf\_\dkg) = 1$.
        Then as long as no nonfaulty party computes $\Eval\Share(\vrf\_\dkg,m)$, then the adversary cannot guess a single bit of $\phi(\vrf\_\dkg,m)$.
        
        \item \textbf{Uniqueness.}
        For each $\vrf\_\dkg, m$, there is a single value $v = \phi(\vrf\_\dkg, m)$ such that
        there exists $\pi$ with \[\Eval\Verify(\vrf\_\dkg, m, v, \pi) = 1.\]
        
        \item \textbf{Correctness.} We have that:
        \[
        \Eval\Share\Verify(\vrf\_\dkg, \pk_i, m, \Eval\Share(\vrf\_\dkg, \sk_i, m)  ) = 1
        \]
         Assume that every $(\phi_i(\vrf\_\dkg, m), \pi_i) \in \mathcal{F}$ is such that 
         $\Eval\Share\Verify(\vrf\_\dkg, \pk_i, m, \phi_i(\vrf\_\dkg, m), \pi_i  ) = 1$.
         Then
         \[
         \Eval\Verify(\vrf\_\dkg, m, \Eval(\vrf\_\dkg, m, \mathcal{F})) = 1.
         \]
    \end{itemize}
\end{definition}
Unbiasability also assumes that no honest party has sent a reconstruction share for $\vrf\_\dkg$.
We have chosen not to explicitly state this in the definition because we have omitted a description of a reconstruction algorithm for the DKG.
When the purpose of the DKG is to generate a public key for a threshold VRF, no reconstruction takes place.

\subsubsection{Vector commitment}
A vector commitment is used to bind a party to a vector, such that they can later provably reveal any position in the vector.
A vector commitment consists of the following algorithms.
\begin{itemize}
    \item $\Commit(v) \mapsto c:$ Takes as input a vector $v$ and outputs a commitment $c$.
    \item $\Open\Prove(c, v, i) \mapsto \pi:$ Takes as input a commitment $c$ to a vector $v$ and an evaluation point $i$. 
    Outputs a proof that the $i$th entry of $v$ is $v_i$.
    \item $\Open\Verify(c, v_i, i, \pi) \mapsto 0/1:$ A deterministic algorithm that takes as input a commitment $c$, an opening $v_i$, an evaluation point $i$ and a proof $\pi$. It outputs $1$ if it is convinced that the $i$th entry of the vector committed in $c$ is $v_i$ and $0$ otherwise. 
\end{itemize}

In this work we only require the vector commitment to satisfy binding i.e. that an adversary cannot open a commitment to more than one value at any evaluation point.
It does not necessarily need to be hiding.
\begin{itemize}
    \item \textbf{Correctness.} $\forall$ vectors $v$, $\forall$ positions $i$, we have
    \[
    \Open\Verify( \Commit(v), v_i, i, \Open\Prove(c, v, i)) = 1.
    \]
    \item \textbf{Binding.} No adversary can compute a commitment $c$, an evaluation point $i$, two values $v_i$ and $w_i$ with $v_i \neq w_i$, and two proofs $\pi_v$ and $\pi_w$ such that
    \[
    \Open\Verify(c, v_i, i, \pi_v) = \Open\Verify(c, w_i, i, \pi_w) = 1.
    \]
\end{itemize}
\section{Verifiable Gather}\label{sec:gather}

As part of our proposal election protocol we require a ``reliable gather''.
Throughout the protocol, parties reliably broadcast values, which are later used to choose a winning proposal from among them.
Ideally, we would like the parties to agree on an exact set of parties and broadcasted values in order to make sure that they all elect a value from the same set.
However, exactly agreeing on the set is non-trivial and potentially expensive.
Therefore we slightly relax our requirements: there exists some core $C$ of size $n-f$ or greater such that the output of every nonfaulty party contains $C$.
Furthermore, we would like parties to be able to prove that they ``acted correctly'' and included $C$ in their output.

Throughout the protocol, parties broadcast messages using the Reliable Broadcast protocol $RB$ and validated broadcast messages using the Validated Reliable Broadcast protocol $VRB$.
In a slightly inaccurate high-level view, the protocol takes place in three rounds. 
In the beginning, all parties broadcast their inputs and wait to receive $n-f$ broadcasts from other parties.
After receiving those broadcasts, they broadcast sets of tuples containing values and the parties who sent them in the previous round.
They then wait to receive $n-f$ such sets, checking if the sets report the correct values.
After receiving $n-f$ of those sets, every party broadcasts the union of all of the reported sets.
Finally, after receiving $n-f$ such unions and checking that the reported sets are correct, every party outputs the union of those sets.
However, when dealing with large inputs, broadcasting sets of $O(n)$ values can be an unnecessarily expensive operation.
In order to avoid this overhead, parties only actually broadcast their values in the first round.
In any subsequent round, parties only refer to the broadcasted value by the party who sent the relevant broadcast, requiring only one word per value.

More accurately the protocol can be broken into three rounds:

\noindent \textbf{Round 1:} In the first round, party $i$ validated broadcasts its input value $x_i$ and waits to receive $n-f$ valid values from all parties.
Party $i$ stores the parties from whom it received broadcasts in a set $S_i$, and tuples of the form $(j,x_j)$ indicating that it received the value $x_j$ from $j$ in a set $R_i$.
\\

\noindent \textbf{Round 2:} After receiving $n-f$ values, each $i$ broadcasts $S_i$, which we think of as sets of the values $x_j$ referenced only by the party who sent each value.
Party $i$ then waits to receive $n-f$ $S$ sets from other parties, and accepts such a message after seeing that it received a value from each party in $S$.
After accepting a message with the set $S$ from $j$, $i$ adds $j$ to $T_i$.
We think of $T_i$ as containing all of the $S$ sets received from different parties, while it actually only references each set by the party who sent it.
\\

\noindent \textbf{Round 3:} Finally, once $T_i$ is of size $n-f$, $i$ broadcasts $T_i$ as well and waits to receive $n-f$ such sets.
Similarly to before, $i$ only accepts a message with a set $T$ if it accepted all of the $S$ messages it refers to.
After accepting a set $T_j$, $i$ explicitly computes the union of all of the $S$ sets $T_j$ is referring to in the following manner: $V_j=\bigcup_{k\in T_j} S_k$, and stores $(j,V_j)$ in $U_i$.
Once $i$ accepts $n-f$ different messages containing $T$ sets and updates $U_i$, it outputs $R_i$ which contains tuples of values and the parties who sent them.
It is important to note that when outputting $R_i$ it contains all of the element in all of the sets referred to by any accepted $T$ set, because parties wait to receive all relevant information before accepting a $T$ or an $S$ set.
Every party continues updating its internal state even after outputting a value.
\\

In the verification protocol for an index-set $I$,  party $i$ checks whether $X$ includes all of the values referred to by at least $n-f$ of the $T$ sets that it received and accepted.
In the following discussion we show that there exists some index $i^*$ that is included in at least $f+1$ of the $T$ sets broadcasted by parties.
Since every party waits to receive $T$ sets from at least $n-f$ parties before terminating, it will see at least one with that index, and thus include $S_{i^*}$ in its output.
This is true for any nonfaulty party, so $S_{i^*}$ can serve as a common-core in the output of all nonfaulty parties.
Similarly, when verifying an index-set $I$, $i$ makes sure that it contains the values referenced by the $T$ sets received from at least $n-f$ parties, and thus also includes $S_{i^*}$ in it.
Afterwards, the values corresponding to each index can easily be returned because they have been previously received by broadcast.

\begin{algorithm}\caption{$\Gather_i(x_i)$}
\begin{algorithmic}[1]
\State $R_i\gets\emptyset,S_i\gets\emptyset,T_i\gets\emptyset,U_i\gets\emptyset$
\State \textbf{validated broadcast} $\langle 1,x_i\rangle$ with external validity function returning $1$ on $\langle t,m\rangle$ iff $\validate(m)=1$
\Upon{receiving $\langle1,x_j\rangle$ from $j$}
    \State $R_i\gets R_i\cup\{(j,x_j)\},S_i\gets S_i\cup\{j\}$
    \If{$\left|S_i\right|=n-f$}
        \State \textbf{broadcast} $\langle 2,S_i \rangle$
    \EndIf
\EndUpon
\Upon{receiving $\langle 2,S_j \rangle$ from $j$ such that $\left|S_j\right|\geq n-f$}
    \Upon{$S_j\subseteq S_i$} 
        \State $T_i\gets T_i\cup \{j\}$
        \If{$\left| T_i\right|=n-f$}
            \State \textbf{broadcast} $\langle 3,T_i \rangle$ \Comment{$T$ sets reference $S$ sets}
        \EndIf
    \EndUpon
\EndUpon
\Upon{receiving $\langle 3,T_j \rangle$ from $j$ such that $\left|T_j\right|\geq n-f$}
    \Upon{$T_j\subseteq T_i$} \Comment{relevant $S$ sets and values are received} 
        \State $U_i\gets U_i\cup \{(j,\bigcup_{k\in T_j} S_k)\}$ \Comment{save all parties in the $S$ sets referenced by $T_j$}
        \If{$\left|U_i\right|=n-f$}
            \State \textbf{output} $R_i$, but continue updating internal sets and sending messages
        \EndIf
    \EndUpon
\EndUpon
\end{algorithmic}
\end{algorithm}

\begin{algorithm}\caption{$\GatherVerify_i(I)$}
    \begin{algorithmic}[1]
        \Upon {$\left|\left\{j|\exists (j,V_j)\in U_i, V_j\subseteq I\right\}\right|\geq n-f\land I\subseteq S_i$}
            \State $X\gets\{(j,x)\in R_i|j\in I\}$
            \State \textbf{output} $X$ and \textbf{terminate}
        \EndUpon
    \end{algorithmic}
\end{algorithm}

\subsection{Security Analysis}

\begin{lemma}\label{lem:weakCore}
    Assume some nonfaulty party completed the protocol.
    There exists some $i^*$ such that at least $f+1$ parties sent broadcasts of the form $\langle 3, T\rangle$ with $i^*\in T$.
\end{lemma}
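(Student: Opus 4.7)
The plan is to argue by a pigeonhole/counting argument over the $T$-sets observed by the completing party.

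First I would unpack what it means for a nonfaulty party $i$ to have completed the protocol. Completion requires $|U_i| = n-f$, which in turn requires $i$ to have accepted $\langle 3, T_j\rangle$ broadcasts from $n-f$ distinct parties $j_1,\dots,j_{n-f}$. The acceptance condition (from the outer \textbf{Upon} clause in $\Gather$) guarantees that every such $T_{j_k}$ satisfies $|T_{j_k}|\geq n-f$. So I may restrict attention to these $n-f$ senders, and I know $n-f$ broadcasts of the form $\langle 3, T\rangle$ with $|T|\geq n-f$ exist in the execution.

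Next I would count. Let $c_\ell$ denote the number of indices $k\in\{1,\dots,n-f\}$ such that $\ell\in T_{j_k}$. Summing sizes gives $\sum_{\ell\in[n]} c_\ell = \sum_{k=1}^{n-f}|T_{j_k}| \geq (n-f)^2$. If, for contradiction, every $c_\ell \leq f$, then $\sum_\ell c_\ell \leq nf$. Combining yields $(n-f)^2 \leq nf$, i.e.\ $n^2 - 3nf + f^2 \leq 0$. But $n>3f$ implies $n^2 - 3nf > 0$, so $n^2 - 3nf + f^2 > f^2 \geq 0$, a contradiction. Hence some index $i^*$ satisfies $c_{i^*}\geq f+1$, meaning $i^*$ lies in at least $f+1$ of the $T$-sets broadcast by distinct parties, which is exactly the claim.

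I do not anticipate a real obstacle here; this is a one-line pigeonhole once the two quantitative inputs (the $n-f$ accepted messages and the size-$\geq n-f$ constraint on each $T$) are extracted from the protocol. The only thing to be careful about is that both bounds are read off the \emph{acceptance} condition of $\Gather$ rather than from trusting the sender's declared size — the receiver only adds $j$ to $T_i$ (and correspondingly $(j,\bigcup_{k\in T_j}S_k)$ to $U_i$) after checking $|T_j|\geq n-f$ and $T_j\subseteq T_i$, so both counts are justified for a nonfaulty $i$.
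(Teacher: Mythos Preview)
Your proposal is correct and essentially identical to the paper's proof: both extract the $n-f$ accepted $\langle 3,T\rangle$ broadcasts with $|T|\geq n-f$, set up the double-count $(n-f)^2\leq nf$ under the negation, and derive the contradiction $n^2-3nf+f^2>0$ from $n>3f$. Your explicit counting notation $c_\ell$ is slightly cleaner than the paper's verbal formulation, but the argument is the same.
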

\begin{proof}
    Assume some nonfaulty party completed the protocol.
    Before completing the protocol, it found that $\left|U_i\right|\geq n-f$, and thus it received $n-f$ broadcasts of the form $\langle 3, T_j\rangle$ such that $\left|T_j\right|\geq n-f$.
    Let $I$ be the set of parties who sent those broadcasts.
    Now assume by way of contradiction that every index $k$ appears in at most $f$ of the broadcasted sets $T_j$ such that $j\in I$.
    Since there are a total of $n$ possible values, this means that the total number of elements in all sets is no greater than $nf$.
    On the other hand, there are $n-f$ such sets, each containing $n-f$ elements or more, resulting in at least $(n-f)^2$ elements overall.
    Combining these two observations:
    \begin{align*}
        (n-f)^2 &\leq nf\\
        n^2-2nf+f^2&\leq nf\\
        n^2-3nf+f^2&\leq 0
    \end{align*}
    However, by assumption $n>3f$, and thus:
    \begin{align*}
        0 &\geq n^2-3nf+f^2 \\
        & = n^2 - n\cdot (3f) +f^2\\
        &> n^2-n^2+f^2\\
        &= f^2\geq 0
    \end{align*}
    reaching a contradiction.
    Therefore, there exists at least one value $i^*$ such that for at least $f+1$ of the $\langle 3, T\rangle$ broadcasts sent, $i^*\in T$.
\end{proof}
\begin{lemma}\label{lem:valueReceived}
    If for some nonfaulty party $i$ $(j,V_j)\in U_i$, then $i$ received a $\langle1,x_k\rangle$ broadcast from every $k\in V_j$ such that $\validate(x_k)=1$.
\end{lemma}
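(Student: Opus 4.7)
The plan is to unfold the definition of $(j,V_j)\in U_i$ by tracing backwards through the three rounds of the protocol, and observe that each set inclusion check performed by $i$ before updating $U_i$, $T_i$, and $S_i$ forces $i$ to have already received the relevant valid round-1 broadcasts.

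First, I would observe what it means for $(j,V_j)$ to enter $U_i$. By inspecting the \textbf{upon} clause in $\Gather_i$ that updates $U_i$, party $i$ adds $(j,V_j)$ to $U_i$ only after (a) receiving a broadcast $\langle 3,T_j\rangle$ from $j$ with $|T_j|\geq n-f$, and (b) verifying that $T_j\subseteq T_i$. Moreover, at that moment $V_j$ is defined as $\bigcup_{k\in T_j} S_k$, where $S_k$ is the $S$-set that $i$ received from $k$ in a $\langle 2,S_k\rangle$ broadcast.

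Next, I would show $V_j\subseteq S_i$. For every $k\in T_j$, since $T_j\subseteq T_i$, we have $k\in T_i$. By the clause that updates $T_i$, $k$ was added only after $i$ received $\langle 2,S_k\rangle$ from $k$ and verified that $S_k\subseteq S_i$. Taking the union over $k\in T_j$ gives $V_j=\bigcup_{k\in T_j} S_k\subseteq S_i$.

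Finally, I would conclude using the round-1 update rule for $S_i$. Any $\ell\in S_i$ was added precisely because $i$ received a $\langle 1,x_\ell\rangle$ broadcast from $\ell$; moreover, this was a \emph{validated} reliable broadcast whose external validity function accepts $\langle 1,x_\ell\rangle$ iff $\validate(x_\ell)=1$. By the External Validity property of validated reliable broadcast, the nonfaulty party $i$ only outputs such values if $\validate(x_\ell)=1$. Combining with $V_j\subseteq S_i$, for every $k\in V_j$ party $i$ has received a $\langle 1,x_k\rangle$ broadcast with $\validate(x_k)=1$, which is exactly the claim. There is no real obstacle here; the main thing to be careful about is cleanly identifying which $S_k$ is meant in the definition of $V_j$ (namely, the one $i$ itself accepted from $k$) so that the chain $V_j\subseteq S_i$ is not conflated with $j$'s view.
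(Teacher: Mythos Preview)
Your proposal is correct and follows essentially the same approach as the paper's proof: both trace back through the three update clauses of $\Gather_i$ to establish $V_j\subseteq S_i$ via the chain $T_j\subseteq T_i$ and $S_k\subseteq S_i$ for each $k\in T_j$, then invoke the validated reliable broadcast's external validity for the round-1 messages. Your added remark about carefully identifying $S_k$ as the set $i$ itself accepted is a nice clarification that the paper leaves implicit.
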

\begin{proof}
    Observe some $(j,V_j)\in U_i$ and $k\in V_j$.
    Before adding $(j,V_j)$ to $U_i$, $i$ saw that $T_j\subseteq T_i$.
    This means that for every $l\in T_j$, $i$ first received a $\langle 2,S_l\rangle$ broadcast from $l$ such that $S_l\subseteq S_i$.
    By definition, $V_j=\bigcup_{l\in T_j} S_l$ and thus $V_j\subseteq S_i$.
    Before adding $k$ to $S_i$, $i$ must have received a $\langle1,x_k\rangle$ validated broadcast checking that $\validate(x_k)=1$, completing the proof.
\end{proof}

\begin{theorem}\label{thm:gather}
    The pair $(\Gather,\GatherVerify)$ is a verifiable reliable gather protocol resilient to $f<\frac{n}{3}$ Byzantine parties.
\end{theorem}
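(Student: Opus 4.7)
The plan is to decompose the theorem into the three $\Gather$ properties (Binding Core, Internal Validity, Termination of Output) and the five $\GatherVerify$ properties (Completeness, Agreement on Verification, Agreement, Includes Core, External Validity), and then derive each one from \cref{lem:weakCore}, \cref{lem:valueReceived}, and the agreement/validity/termination properties of the underlying $VRB$ and $RB$ subroutines. The first concrete step is to pin down the binding core. Let $p$ be the first nonfaulty party for which $|U_p|\geq n-f$ (equivalently, the first nonfaulty party to output), and let $i^*$ be the index guaranteed by \cref{lem:weakCore} at that moment. I would then set
\[
X^* \;=\; \{(k, v_k) : k \in S_{i^*}\},
\]
where $v_k$ is the value that any nonfaulty party accepts from $k$'s round-$1$ validated broadcast. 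This is well-defined because $p$ itself accepted such a broadcast from every $k\in S_{i^*}$ (accepting $i^*$'s round-$2$ broadcast forced $S_{i^*}\subseteq S_p$), and $VRB$ agreement pins down a unique value. External validity of $VRB$ makes each $v_k$ externally valid, and $VRB$ validity makes $v_k$ equal to $k$'s actual input whenever $k$ was nonfaulty before $p$ completed.

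Next I would verify each $\Gather$ property. Termination of Output is a standard liveness chase: all nonfaulty parties participate, $VRB$ and $RB$ broadcasts from nonfaulty senders eventually terminate at all nonfaulty parties, so each nonfaulty $S_i$, $T_i$, and $U_i$ eventually grows to size $n-f$. Internal Validity follows from the definition of $X^*$ and $VRB$ validity. For Binding Core, given any nonfaulty $q$ with $|U_q|\geq n-f$, the sets of $T$-senders accepted by $p$ and by $q$ each have size at least $n-f$, so by quorum intersection they share at least $n-2f\geq f+1$ senders. \cref{lem:weakCore} places $i^*$ in at least $f+1$ of the $T$-sets accepted by $p$, leaving at most $n-2f-1$ without $i^*$, so pigeonhole forces at least one $T_j$ in the intersection to contain $i^*$; then $S_{i^*}\subseteq V_j$, \cref{lem:valueReceived} guarantees $q$ received each round-$1$ broadcast for $k\in S_{i^*}$, and $RB$ agreement identifies each received value with $v_k$, giving $R_q\supseteq X^*$.

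Finally I would verify each $\GatherVerify$ property. External Validity is immediate since the output lies in $R_i$, and only validated round-$1$ broadcasts are inserted into $R_i$. Agreement follows from $RB$ agreement on round-$1$ broadcasts: a common index $k$ is paired with the unique value $v_k$ in every nonfaulty party's $R_\cdot$. Includes Core is the same pigeonhole argument as for Binding Core, now applied to the $n-f$ entries $(j,V_j)\in U_i$ with $V_j\subseteq I$ that witness termination of $\GatherVerify_i(I)$; one such $V_j$ must contain $S_{i^*}$, forcing $S_{i^*}\subseteq I$ and hence $X^*\subseteq X$. For Completeness and Agreement on Verification I would argue that every element added to a nonfaulty party's $S_\cdot$, $T_\cdot$, or $U_\cdot$ was triggered by an accepted $(V)RB$ broadcast; by $RB$ agreement and termination, any other nonfaulty party eventually mirrors these acceptances and hence eventually satisfies the verification trigger with an output that matches on all common indices.

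The main obstacle is the step from \cref{lem:weakCore}, which yields a per-party $i^*$, to a single globally-fixed $i^*$ defining $X^*$ for every later output and verification. The quorum-intersection/pigeonhole argument sketched above is short but has to be set up carefully: I must freeze $i^*$ at the correct moment (the first nonfaulty party to reach $|U_\cdot|\geq n-f$) and then show that every later nonfaulty output and every successful verification shares at least one accepted $T$-broadcast with $p$'s view that still contains $i^*$. A secondary care-point is that verifier state accumulates over time and is not a snapshot of any other party's state, so Completeness and Agreement on Verification must be argued by replaying broadcast acceptances through $RB$ agreement and termination rather than by copying state.
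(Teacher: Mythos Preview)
Your proposal is correct and follows essentially the same decomposition and proof structure as the paper: define $X^*$ via $S_{i^*}$ at the moment the first nonfaulty party outputs, then chase each property through the $VRB$/$RB$ guarantees and \cref{lem:weakCore}, \cref{lem:valueReceived}.

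The one place you diverge is the pigeonhole for Binding Core and Includes Core, and here you make your own life harder than necessary. You treat \cref{lem:weakCore} as a statement about $p$'s local view and therefore route through a quorum-intersection between $p$'s accepted $T$-senders and $q$'s (or the verifier's). But the lemma is already global: it says at least $f+1$ \emph{parties} sent a $\langle 3,T\rangle$ broadcast with $i^*\in T$, and by reliable-broadcast agreement each party's $T$-value is the same at every receiver. So the ``obstacle'' you flag is not one: once $i^*$ is fixed, any collection of $n-f$ accepted $T$-broadcasts (at any nonfaulty party, at any time) must hit one of those $f+1$ senders simply because $(n-f)+(f+1)>n$. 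This is exactly the paper's argument and it removes the need to intersect with $p$'s snapshot. Your intersection argument is not wrong, just redundant.
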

\begin{proof}
    Each property is proven separately.
    
    \textbf{Termination of Output.} 
    Assume that $\validate(x_i)=1$ for every nonfaulty $i$ and that all nonfaulty parties participate in the $\Gather$ protocol.
    The first thing they do is send a $\langle1,x_i\rangle$ message using a validated broadcast.
    By assumption, $\validate(x_i)=1$ for every nonfaulty $i$, and thus
    every nonfaulty $j$ receives the broadcast and updates $R_i$ and $S_i$.
    After receiving a $\langle1,x_j\rangle$ message from every nonfaulty $j$, $\left|S_i\right|=n-f$, so party $i$ sends the message $\langle2,S_i\rangle$.
    Afterwards, every nonfaulty party receives $\langle2,S_j\rangle$ from every nonfaulty $j$.
    Note that since $j$ sent $S_j$, it must have received a $\langle1,x_k\rangle$ validated broadcast from every $k\in S_j$.
    The message was received by validated broadcast, so $i$ eventually receives the same message and adds $k$ to $S_i$ as well.
    Therefore $i$ eventually sees that $S_j\subseteq S_i$ and adds $j$ to $T_i$.
    Finally, after $n-f$ such updates, $i$ broadcasts $T_i$.
    Using similar arguments, every nonfaulty party eventually adds some tuple of the form $(j,V_j)$ to $U_i$ for every nonfaulty $j$.
    Then $i$ sees that $\left|U_i\right|=n-f$ and outputs some value.
    A nonfaulty party $i$ only adds pairs of the form $(j,x)$ to $R_i$ after receiving a validated broadcast of the form $\langle 1, x\rangle$ from party $j$.
    This message was received by validated broadcast, so $\validate(x)=1$, and thus $x\in\mathcal{M}$ as well. 
    Every party can send only one such broadcast, and thus at all times throughout the protocol, $R_i$ consists of pairs $(j,x)$ such that $j\in [n]$ and $x\in\mathcal{M}$ and the index $j$ appears in $R_i$ at most once.
    In other words, $R_i$ is a gather-set throughout the protocol, including when $i$ outputs the set $X=R_i$.
    
    \textbf{Completeness.}
    Assume some nonfaulty party $i$ completes the $\Gather$ protocol and outputs $X_i$.
    Before adding $(k,x_k)$ to $R_i$ and $k$ to $S_i$, party $i$ first receives a $\langle1,x_k\rangle$ validated broadcast from $k$.
    Every nonfaulty $j$ eventually receives the same broadcast and adds $(k,x_k)$ to $R_j$ and $k$ to $S_j$ as well.
    Therefore, eventually $S_i\subseteq S_j$ for every nonfaulty $j$.
    Before adding $k$ to $T_i$, $i$ receives a broadcast $\langle 2, S_k\rangle$ such that $S_k\subseteq S_i$ and $\left|S_k\right|\geq n-f$.
    Since every nonfaulty $j$ eventually receives the same broadcast and $S_i\subseteq S_j$, $j$ also adds $k$ to $T_j$.
    Using similar arguments, before adding $(k,V_k)$ to $U_i$, $i$ receives a broadcast $\langle 3, T_k\rangle$ such that $T_k\subseteq T_i$ and $\left|T_k\right|\geq n-f$.
    Party $j$ eventually receives the same message, sees that the $T_k\subseteq T_i\subseteq T_j$ and $\left|T_k\right|\geq n-f$, and then computes $V_k$ using the exact same $S$ sets $i$ used when computing the set, because all values were received by broadcast.
    Therefore at that point $j$ adds $(k,V_k)$ to $U_j$.
    Now, at the time $i$ outputs a value from the $\Gather$ protocol, it sees that $\left|U_i\right|\geq n-f$, and outputs $R_i$.
    From Lemma~\ref{lem:valueReceived}, at that time for every $(j,V_j)\in U_i$ and $k\in V_j$, $i$ received some $\langle 1,x_k\rangle$ broadcast from party $k$ and thus $k\in S_i$.
    In other words, for every $(j,V_j)\in U_i$, $V_j\subseteq S_i$.
    At all times in the protocol, $Indices(R_i)=S_i$ because an index $k$ is added to $S_i$ at the same time a tuple $(k,x)$ is added to $R_i$.
    This means that if we observe $Indices(X_i)$, which equals $S_i$ at the time $i$ outputs $X_i$, for every $(k,V_k)\in U_i$, $V_k\subseteq S_i=Indices(X_i)$.
    Combining those two observations, every nonfaulty party $j$ eventually sees that for every $(k,V_k)\in U_i\subseteq U_j$, $V_k\subseteq Indices(X_i)$.
    At the time $i$ outputs a value from the $\Gather$ protocol, $\left|U_i\right|\geq n-f$ so there are eventually $n-f$ such tuples in $U_j$ as well.
    Furthermore, $Indices(X_i)=S_i\subseteq S_j$, which means $j$ eventually proceeds to the next line.
    At that time, $j$ computes $X=\{(j,x)\in R_j|j\in Indices(X_i)\}$.
    As stated above, $X_i$ equals $R_i$ at the time $i$ output $X_i$ from the $\Gather$ protocol, and $Indices(X_i)$ equals $S_i$ at that time.
    When $j$ sees that $Indices(X_i)\subseteq S_j$, it has already received a validated broadcast $\langle 1, x_k\rangle$ from every party $k\in Indices(X_i)$ and added $(k,x_k)$ to $R_j$.
    $R_j$ is a gather-set at all times, so this is the same tuple that $j$ added to its output from the $\GatherVerify$ protocol, $X$.
    This is the same broadcast $i$ received, so it added the same tuple $(k,x_k)$ to $R_i$ before outputting $X_i$.
    In other words, $j$ added the same tuple $(k,x_k)$ to $X$ that $i$ added to its output $X_i$.
    Party $j$ only adds tuples of the form $(k,x_k)$ if $k\in Indices(X_i)$, so those are all the tuples in $X$.
    
    \textbf{Agreement on Verification.} Assume that some nonfaulty party $i$ completes protocol $\GatherVerify_i(I)$ on an index-set $I$ and outputs a set $X$, and that all nonfaulty parties participate in the $\Gather$ protocol.
    At the time $i$ completed the protocol, $I\subseteq S_i$ and
    $\left|\left\{k|\exists (k,V_k)\in U_i, V_k\subseteq I\right\}\right|\geq n-f$.
    Let $j$ be some nonfaulty party that runs the protocol $\GatherVerify_j(I)$.
    Before $i$ added some element $(k,x_k)$ to $R_i$ and $k$ to $S_i$, it received a validated broadcast of the message $\langle 1, x_k\rangle$ from $k$.
    From the Termination and Correctness properties of the Validated Reliable Broadcast protocol, $j$ eventually receives that message from $k$ as well and thus $(k,x_k)\in R_j$ and $k\in S_j$ as well.
    In other words, eventually $R_i\subseteq R_j$ and $S_i\subseteq S_j$.
    Before adding an element $k$ to $T_i$, $i$ received a broadcast of a set $S_k$ from $k$ such that $\left|S_k\right|\geq n-f$ and $S_k\subseteq S_i$.
    From the Termination and Correctness properties of the Reliable Broadcast protocol, $j$ eventually receives the same message from $k$.
    As shown above, eventually $S_i\subseteq S_j$, and at that time $j$ adds $k$ to $T_j$ as well.
    Therefore, eventually $T_i\subseteq T_j$.
    Using similar arguments, if there exists some $(k,V_k)$ in $U_i$, then eventually $j$ adds some element $(k,V'_k)$ to $U_j$ as well.
    From the Correctness property of the Reliable Broadcast protocol, $i$ and $j$ receive the same sets $S_l$ from all parties, and thus when computing $V_k$ and $V'_k$, they both do so with the same values.
    This in turn means that they add the same tuple $(k,V_k)$ to their $U_i$ and $U_j$ sets and thus eventually $U_i\subseteq U_j$ as well.
    Combining all of those observations, eventually $I\subseteq S_i\subseteq S_j$.
    In addition, for every $(k,V_k)\in U_i$ such that $V_k\subseteq I$, eventually $(k,V_k)\in U_j$ as well.
    Since there are at least $n-f$ such tuples in $U_i$, there are eventually $n-f$ such tuples in $U_j$ as well.
    When both of those conditions hold, $j$ proceeds to the next line of the $\GatherVerify$ protocol.
    When $i$ completed the protocol, it saw that $I\subseteq S_i$ and thus it received a $\langle 1,x_k\rangle$ from every $k\in I$, and added a tuple $(k,x_k)$ to $R_i$.
    Using the same reasoning, $j$ received broadcasts from the same parties, and from the Agreement property of the validated reliable broadcast protocol, it received the same messages and added the same tuples to $R_j$.
    In other words, $j$ computed $X$ using the same values as $i$, so it output the same set $X$.
    
    \textbf{Agreement.}
    Let $i,j$ be two nonfaulty parties and $I,J$ be two sets such that $\GatherVerify_i(I)$ and $\GatherVerify_j(J)$ eventually terminate with the outputs $X$ and $Y$ respectively.
    Since $\GatherVerify$ terminates in both cases, $I\subseteq S_i, J\subseteq S_j$.
    From the way $i$ calculates $X$ and $j$ calculates $Y$, $X\subseteq R_i$ and $Y\subseteq R_j$.
    Observe a pair of tuples $(k,x)\in X\subseteq R_i,(k,y)\in Y\subseteq R_j$.
    Party $i$ only adds $(k,x)$ to $R_i$ after receiving a broadcast of $\langle 1,x\rangle$ from $k$, and party $j$ adds the tuple $(k,y)$ to $R_j$ after receiving a broadcast of $\langle 1,y\rangle$ from $k$.
    From the Agreement property of the validated reliable broadcast protocol, both $i$ and $j$ received the same broadcast of the form $\langle1, z\rangle$, and thus $x=y$.
    
    \textbf{Binding Core.} Assume the first nonfaulty party that completes the $\Gather$ protocol is $p^*$, and observe the index $i^*$ as defined in Lemma~\ref{lem:weakCore}.
    Party $p^*$ only adds a tuple $(k,V_k)$ to $U_{p^*}$ after receiving a $\langle3,T_k\rangle$ message from party $k$.
    Before completing the protocol, $p^*$ received $n-f$ such broadcasts, and from Lemma~\ref{lem:weakCore}, $f+1$ of the parties broadcast some message $\langle 3,T_k\rangle$ such that $i^*\in T_k$. 
    Therefore for some $(k,V_k)\in U_{p^*}$, $i^*\in T_k$.
    Note that $T_k\subseteq T_{p^*}$, so $i^*\in T_{p^*}$.
    Before adding $i^*$ to $T_{p^*}$, $p^*$ received a $\langle 2,S_{i^*}\rangle$ broadcast from party $i^*$ such that $S_{i^*}\subseteq S_{p^*}$ and $\left|S_{i^*}\right|\geq n-f$.
    Similarly, before adding $k\in S_{i^*}$ to $S_{p^*}$, $p^*$ first receives a $\langle 1,x^*_k\rangle$ broadcast from $k$.
    Let the binding-core $X^*$ be defined as follows: $X^*=\left\{(k,x^*_k)|k\in S_{i^*}\right\}$, i.e. pairs consisting of a party in $S_{i^*}$ and the value that $p^*$ received from that party via broadcast.
    Clearly $\left|X^*\right|\geq n-f$ because $\left|S_{i^*}\right|\geq n-f$.
    The fact that $X^*$ is a subset of every nonfaulty party's output from the protocol is a direct corollary of the Completeness and Includes Core properties of the $\Gather$ protocol. 
    
    \textbf{Internal Validity.} Let $p^*$ be the first nonfaulty party that completed the $\Gather$ protocol, as defined in the Binding Core property.
    Let $j$ be some party that was nonfaulty at that time such that there exists a tuple $(j,x)\in X^*$.
    Let $i^*$ be defined as it is in the Binding Core property and Lemma~\ref{lem:weakCore}.
    By definition, if $(j,x)\in X^*$, then $j$ is in the set $S_{i^*}$ that $p^*$ received from party $i^*$. 
    As shown in the Binding core property, at the time that $p^*$ completed the $\Gather$ protocol, it already received a $\langle 1, x^*_j\rangle$ message from party $j$, and $x$ is defined to be $x^*_j$.
    Now, since $j$ was nonfaulty at that time, it broadcasted the message $\langle 1, x_j\rangle$, with $x_j$ being its input to the protocol.
    Therefore, $x=x_j$ as required.
    
    \textbf{Include Core.} Let $i$ be some nonfaulty party and $I$ be some index set such that $\GatherVerify_i(I)$ terminates with the output $X$.
    Party $i$ found that $\left|\left\{k|\exists (k,V_k)\in U_j, V_k\subseteq I\right\}\right|\geq n-f$.
    As discussed above, party $i$ only adds $(j,V_j)$ to $U_i$ after receiving a $\langle 3,T_j\rangle$ message from $j$.
    Let $i^*$ be defined as it is in Lemma~\ref{lem:weakCore} and in the Binding Core property.
    Seeing as there are at least $f+1$ parties that sent broadcasts of the form $\langle 3,T\rangle$ with $i^*\in T$ and $n-f$ parties $j$ such that $(j,V_j)\in U_i$ and $V_j\subseteq I$, for at least one of those parties $i^*\in T_j$.
    By definition, $V_j=\bigcup_{k\in T_j} S_k$, and thus $S_{i^*}\subseteq V_j\subseteq I$.
    Therefore, for every $k\in S_{i^*}\subseteq I$, party $i$ adds a tuple $(k,x)$ to its output $X$.
    Finally, $X\subseteq R_i$, and $i$ only adds $(j,x)$ to $R_i$ after receiving a $\langle 1,x\rangle$ broadcast from $j$.
    Let $p^*$ be the first nonfaulty party that completed the $\Gather$ protocol as defined in the Binding Core property.
    Since $k\in S_{i^*}$, $p^*$ received a $\langle 1, x^*_k\rangle$ broadcast from $k$, so it must be the case that $x=x^*_k$ as defined in the Binding Core Property.
    In other words, for every $k\in S_{i^*}$, $(k,x^*_k)\in X$, and thus $X^*\subseteq X$.
    
    \textbf{External Validity.} Assume that for some nonfaulty $i$, $\GatherVerify_i(I)$ terminates.
    When $i$ completed protocol it outputs $\{(j,x)\in R_i|j\in I\}\subseteq R_i$
    Party $i$ adds $(j,x)$ to $R_i$ only after receiving a validated broadcast of $\langle1,x\rangle$ from $j$ checking that $\validate(x)=1$.
\end{proof}

\section{Proposal Election} \label{sec:weakleader}
In this section we construct a verifiable weak proposal election, which is related to the idea of a weak common coin.
With constant probability all nonfaulty parties output the proposal of a nonfaulty party, but in other cases parties might output different values.
The protocol is also externally validated, meaning that every party's output is externally valid.
In addition, the protocol is verifiable.
Like in the case of the Verifiable Gather protocol, this means that parties can prove to each other that the value they output is indeed a viable output from the protocol.
In the case that a single nonfaulty party's input is chosen, this means that this is the only value that will pass verification.
Our construction uses techniques inspired by Katz and Koo's synchronous weak leader election \cite{KatzK06}. 
They use verifiable secret sharing in order to determine the leader through a random coin whose value can only be obtained at the end of the protocol i.e. after reconstruction. 
We extend their results to the asynchronous setting by making use of a threshold verifiable random function (VRF) instantiated using a (local) DKG.
There is a VRF public key associated to every player, and this public key is entirely determined by that player (provided it contains sufficient secret key shares).  Parties cannot trivially reach consensus about a single DKG because they do not know if there are DKG transcripts that have been received by other parties, but not by them.

The protocol proceeds in four rounds and pseudocode is provided in \cref{alg:pe}.
In the first round, every party sends a VRF-DKG share to every other party.
If some party wishes their proposal to be considered it must input a pair consisting of their proposal and an aggregated VRF-DKG transcript into the $\Gather$ protocol.
This essentially forces parties to commit to those values because only one tuple of the form $\gatherTuple{j}$ may appear in any of the outputs from $\Gather$ for any given $j$.
After outputting the gather set $X$ from the $\Gather$ protocol, every party broadcasts $Indices(X)$, which is the set of indices with tuples in $X$.
After receiving an index-set for which $\GatherVerify$ terminates with the output $X$, parties send VRF evaluation shares for all tuples in $X$, if they haven't done so earlier.
Note that at this time all of the tuples in $X$ have already been committed to because of the Agreement property of the $\Gather$ protocol.
After receiving $n-f$ evaluation shares for each of the tuples in the output from the $\Gather$ protocol, every party evaluates the VRF at the appropriate values, and chooses the proposal with the highest corresponding VRF evaluation. 
We think of the PE protocol as succeeding if the maximal evaluation corresponds to a tuple in the binding core that corresponds to a value input by a nonfaulty party.
As will be shown below, this happens with a constant probability, and when that happens all parties output the corresponding proposal.

The protocol proceeds in a few conceptual rounds described below:
\\

\noindent  \textbf{Round 1:}
In Round $1$, each party samples and sends a VRF-DKG share for every other party.
The VRF will later be used to assign a number to each party.
Party $i$ waits to receive $n-f$ valid contributions from all other parties.
It then aggregates these VRF-DKG contributions into a verifying VRF-DKG transcript $\agg_i$.
\\

\noindent  \textbf{Round 2:}
In Round $2$ party $i$ calls the $\Gather$ protocol providing its original input $\proposal_i$ and the aggregated VRF-DKG transcript $\agg_i$ as input.
From the properties of the $\Gather$ protocol, each party will eventually output a set of tuples $\gatherTuple{j}$ indicating that $j$ input the pair $\proposal_j$ and $\agg_j$ to the protocol.
\\

\noindent  \textbf{Round 3:}
After outputting a gather-set from the $\Gather$ protocol, parties can start calculating the number assigned to each party.
Ideally, each party would send the gather-set they output from the protocol to all other parties, and they will help in evaluating all of the relevant values.
However, having another all-to-all communication round where parties send sets of $O(n)$ tuples containing $O(m)$ words each would incur an overhead of $O(mn^3)$ words to be sent.
Instead of doing that, every party only broadcasts the indices of tuples in its gather-set, which we think of as a request to start evaluating the VRF for each index.
\\

\noindent  \textbf{Round 4:}
After receiving an index-set $I$, every nonfaulty party calls the $\GatherVerify$ protocol on the set, and waits to output the tuples corresponding to those indices.
After that happens parties send their evaluation share for each tuple they haven't seen yet.
This is done by maintaining a set $\startEval$ which stores all of the seen tuples.
When a party completes the $\GatherVerify$ protocol with the output $X$, it first sends an evaluation share for every tuple in $X\setminus \startEval$, and only then updates $\startEval$ to contain $X$. 

Crucially, the proposal and aggregated VRF-DKG transcript are sent together, and parties start sending the VRF evaluation shares only after seeing the relevant aggregated VRF-DKG transcript included in a gather-set received as output from the $\GatherVerify$.
By sending the proposal and VRF-DKG transcript together, parties have to commit to their values before knowing which party's proposal is going to "win" the election.
From the properties of the $\Gather$ protocol, once a tuple $\gatherTuple{j}$ is in a gather-set output from $\GatherVerify$, no other party ever outputs a gather-set from $\GatherVerify$ with a different tuple corresponding to the index $j$.
By sending evaluation shares only then, nonfaulty parties guarantee that the faulty parties committed to their aggregated VRF-DKG transcript before knowing what number it evaluates to.
This guarantees that those evaluations cannot be biased by the faulty parties.

After receiving enough evaluation shares to compute $\phi(\agg_j,\langle j\rangle)$ for every $\gatherTuple{j}$ in their output from the $\Gather$ protocol, party $i$ chooses the index $\ell$ with the maximal value $\phi(\agg_\ell,\langle \ell\rangle)$ and outputs $\proposal_\ell$. In addition, $i$ outputs the indices of parties in their gather-set as proof.
\\

Intuitively, every party outputs a gather-set from the $\Gather$ protocol which determines the VRF evaluations taken into consideration.
If the VRF evaluation with the maximal value among all outputs from the $\Gather$ protocol corresponds to a tuple $(\ell^*,(\proposal_{\ell^*},\agg_{\ell^*}))$ in the binding core of the $\Gather$ protocol that was input by a nonfaulty party, then all nonfaulty parties will see that evaluation and pick $\proposal_{\ell^*}$ as their output.
Since the evaluations are sampled uniformly in an unbiased manner, this means that every party has the same probability of having the maximal evaluation being associated with it.
When counting the number of nonfaulty parties with tuples in the common core, we find that the probability of the aforementioned event is at least $\frac{1}{3}$.
This mechanism also allows to check whether a given proposal could have been the correct output from the $\PE$ protocol.
In order to convince a nonfaulty party that a value is a correct output from the $\PE$ protocol, it is enough to provide one's output from the $\Gather$ protocol.
Parties will then be able to check if that is a verifying gather-set and if the correct proposal was elected based on that output.
Instead of actually using the whole gather-set as proof, only the indices of tuples in it are sent as proof in order to reduce communication.
If the maximal evaluation is associated with a tuple in the binding-core, then only gather-sets containing that tuple will verify, which means that only $\proposal_{\ell^*}$ as defined above will verify.
\\

\noindent \textbf{Verification:}
The verification algorithm is given in \cref{alg:pe:verification}.
As stated above, in order for a value $x$ to verify with a proof $\pi$, parties require the indices of the gather-set with which it was computed.
They then check if the index-set verifies, if all the relevant tuples have been previously received, and if the evaluation of the VRF has been computed at all relevant points.
If all of those conditions hold, parties then make sure that $x$ is the proposal with the maximal associated VRF evaluation.

\begin{algorithm}\caption{$\PE_i(\proposal_i)$}\label{alg:pe}
\begin{algorithmic}[1]
\State $\dkgShares_i\gets\emptyset, 
X_i\gets \emptyset, \forall j\in[n]\  \evalShares_i[j] \gets \emptyset, evals_i\gets\emptyset, \startEval_i\gets\emptyset$
\State $(\share_{i,1}, \ldots, \share_{i,n}) \randpick \generateShare(\sk_i), \ldots, \generateShare(\sk_i)$ 
\State for every $j\in[n]$ send $\langle \dkg, \share_{i,j} \rangle $ to $j$

\Upon{receiving the first $\langle \dkg, \share_{j,i} \rangle$ from $j$ message such that $\verifySecret(\pk_j, \share_{j,i}) = 1$}
    \State $\dkgShares_i\gets \dkgShares_i\cup \{ \share_{j,i} \}$
        \If{$\left|\dkgShares_i\right|=n-f$}
            \State $\agg_i \gets \aggregateShares(\dkgShares_i)$
            \State \textbf{call} $ \Gather_i(\proposal_i,\agg_i)$ with the external validity function $\checkValidity$
        \EndIf
\EndUpon

\Upon{$\Gather_i$ outputting the set $X=\{\gatherTuple{j}\}$} \Comment{continue updating state according to $\Gather$}
    \State $X_i\gets X$
    \State $I_i\gets Indices(X_i)=\{k|\exists \gatherTuple{k}\in X_i\}$
    \State \textbf{broadcast} $\langle indices, I_i\rangle$
\EndUpon

\Upon{receiving the first $\langle indices, I_j\rangle$ message from $j$}
    \Upon{$\GatherVerify_i(I_j)$ terminating with output $X_j$ and $\Gather$ outputting some value}
        \ForAll{$(k,\proposal_k,\agg_k)\in X_j\setminus \startEval_i$}
            \State $(\evalShare_{k,i},\pi_{k,i})\gets \Eval\Share(\agg_k,\sk_i,\langle k\rangle)$
            \State send $\langle eval, k, \evalShare_{k,i},\pi_{k,i} \rangle$ to every party
        \EndFor
        \State $\startEval_i\gets \startEval_i\cup X_j$
    \EndUpon
\EndUpon

\Upon{receiving the first $\langle eval, k, \evalShare_{k,j},\pi_{k,j} \rangle$ broadcast from $j$ for any given $k$}
    \Upon{$\exists \gatherTuple{k}\in \startEval_i$}
        \If{$\EvalShareVerify(\agg_k,\pk_j,\langle k\rangle, \evalShare_{k,j},\pi_{k,j})=1$}
            \State $\evalShares_i[k]\gets \evalShares_i[k]\cup\{(\evalShare_{k,j},\pi_{k,j})\}$
            \If{$\left|\evalShares_i[k]\right|=n-f$}
                \State $(\evaluation_k,\pi_k)\gets \Eval(\agg_k,\langle k\rangle,\evalShares_i[k])$
                \State $\evals_i\gets  \evals_i\cup\{(k,\evaluation_k)\}$
            \EndIf
        \EndIf
    \EndUpon
\EndUpon
\Upon{$\forall (k,(\agg_k,\proposal_k))\in X_i\ \exists (k,\evaluation_k)\in \evals_i$ and $X_i\neq\emptyset$}
    \State $\ell\gets argmax_{k} \{\evaluation_k|\gatherTuple{k}\in S_i\}$ \Comment{i.e. $\ell$ has the maximal $\evaluation_\ell$}
    \State $\pi_i\gets Indices(X_i)=\{k|\exists \gatherTuple{k}\in X_i\}$
    \State \textbf{output} $(\proposal_\ell,\pi_i)$, but continue updating internal sets and sending messages
\EndUpon
\end{algorithmic}
\end{algorithm}

\begin{algorithm}\caption{$\checkValidity(\proposal,\agg)$}
\begin{algorithmic}[1]
    \If{$\validate(\proposal)=1$ and $\DKG\Verify(\agg)=1$}
        \State \textbf{return} 1
    \Else
        \State \textbf{return} 0
    \EndIf
\end{algorithmic}
\end{algorithm}

\begin{algorithm}\caption{$\PEVerify_i(x,\pi)$} \label{alg:pe:verification}
    \begin{algorithmic}[1]
        \Upon{$\forall k\in\pi\ \exists (k,\evaluation_k)\in \evals_i\land \exists \gatherTuple{k}\in \startEval_i$}
            \Upon{$\GatherVerify_i(\pi)$ terminating}
                \State $\ell\gets argmax_{k} \{\evaluation_k|k\in\pi\}$
                \If{$x=\proposal_\ell$}
                    \State \textbf{terminate}
                \EndIf
            \EndUpon
        \EndUpon
    \end{algorithmic}
\end{algorithm}

\subsection{Security Analysis}

The following lemmas show that the $\startEval$ and $\evals$ sets of different parties are eventually consistent with each other.
\begin{lemma}\label{lem:startEval}
    If all nonfaulty parties participate in the $\PE$ protocol, and some nonfaulty party $i$ outputs the set $X_i$ from the $\Gather$ protocol, then for every nonfaulty $j$ eventually $X_i\subseteq \startEval_j$.
    Furthermore, if for two nonfaulty parties $i,j$, $\gatherTuple{k}\in \startEval_i$ and $(k,(\proposal'_k,\agg'_k))\in \startEval_j$, then $(\proposal_k,\agg_k)=(\proposal'_k,\agg'_k)$.
\end{lemma}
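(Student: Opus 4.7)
The plan is to split the proof along the two claims, each of which should follow by a direct application of one of the properties of the Verifiable Gather protocol established in Theorem~\ref{thm:gather}.

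For the first claim, I would chase the control flow of $\PE$. A nonfaulty $i$ that outputs $X_i$ from $\Gather_i$ immediately enters the Upon block that broadcasts $\langle indices, I_i\rangle$ with $I_i = Indices(X_i)$. Since $i$ is nonfaulty, any nonfaulty $j$ eventually receives this (and it is the first, and only, such message from $i$), so $j$ enters the outer Upon block that waits for $\GatherVerify_j(I_i)$ to terminate and for $j$'s own $\Gather$ to produce an output. The Completeness property of $\GatherVerify$ gives that $\GatherVerify_j(Indices(X_i))$ terminates with output exactly $X_i$, and the Termination of Output property of $\Gather$ guarantees that $j$ eventually outputs from its own $\Gather$ invocation (since by hypothesis all nonfaulty parties participate). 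Once both events occur, the inner assignment $\startEval_j \gets \startEval_j \cup X_i$ fires, which is the desired inclusion.

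For the second claim, I would reduce the statement to the Agreement property of $\GatherVerify$. Every tuple in $\startEval_i$ was added via $\startEval_i \gets \startEval_i \cup X$ for some $X$ returned by an earlier $\GatherVerify_i$ call on some index-set, and symmetrically for $\startEval_j$. Thus if $\gatherTuple{k} \in \startEval_i$ and $(k,(\proposal'_k,\agg'_k)) \in \startEval_j$, there exist index-sets $I,J$ such that $\GatherVerify_i(I)$ returned a set containing $\gatherTuple{k}$ and $\GatherVerify_j(J)$ returned a set containing $(k,(\proposal'_k,\agg'_k))$. The Agreement property then immediately forces $(\proposal_k,\agg_k) = (\proposal'_k,\agg'_k)$.

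I do not foresee a deep obstacle: the proof is essentially bookkeeping over the Upon-triggers of $\PE$, with the heavy lifting outsourced to Theorem~\ref{thm:gather}. The only technical point to flag is that the broadcast of $\langle indices, I_i\rangle$ needs to be delivered to every nonfaulty party, which is automatic because the sender $i$ is itself nonfaulty, so even a best-effort send-to-all suffices and no use of the stronger reliable broadcast guarantees is required for this direction.
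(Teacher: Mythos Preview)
Your proposal is correct and follows essentially the same approach as the paper: use Completeness of $\GatherVerify$ (together with Termination of Output for $\Gather$) for the first claim, and the Agreement property of $\GatherVerify$ for the second. If anything, you are slightly more careful than the paper in explicitly noting that the inner Upon block also waits for $j$'s own $\Gather$ instance to output before $\startEval_j$ is updated.
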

\begin{proof}
    If some nonfaulty party output $X_i$ from the gather protocol, then it broadcasts $\langle indices, Indices(X_i)\rangle$.
    Every nonfaulty $j$ receives that message, calls $\GatherVerify(Indices(X_i))$ and from the Completeness property of the $\Gather$ protocol, eventually outputs $X_i$.
    After that time, $j$ performs some local computations and updates $\startEval_j$ to $\startEval_j\cup X_i$.
    
    Now observe two nonfaulty parties $i,j$ such that $\gatherTuple{k}\in \startEval_i$ and $(k,(\proposal'_k,\agg'_k))\in \startEval_j$.
    Before adding $(k,(\proposal_k, \agg_k))$ to $\startEval_i$, $i$ output some set $X$ from $\GatherVerify$ with $(k,(\proposal_k,\\ \agg_k))\in X$.
    Similarly, before adding $(k,(\proposal'_k,\agg'_k))$ to $\startEval_j$, $i$ output some set $Y$ from $\GatherVerify$ with $(k,(\proposal'_k,\agg'_k))\in Y$.
    Therefore, $(\proposal_k,\agg_k)=(\proposal'_k, \agg'_k)$ from the Agreement property of the $\Gather$ protocol.
\end{proof}

\begin{lemma}\label{lem:evalConsistency}
    If $\gatherTuple{k}\in\startEval_i$ for some nonfaulty $i$, then eventually for every nonfaulty $j$, there exists a tuple $(k,\phi(\agg_k,\langle k\rangle))\in \evals_j$.
    Furthermore, if $(k,\evaluation_k)\in \evals_i$ for some nonfaulty $i$, then there exists some tuple $\gatherTuple{k}\in\startEval_i$ such that $\evaluation_k=\phi(\agg_k,\langle k\rangle)$.
\end{lemma}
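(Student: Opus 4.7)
The lemma has two parts; I would prove them separately.

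For the \emph{furthermore} clause, I would unfold the pseudocode of $\PE$. The only place $(k,\evaluation_k)$ is inserted into $\evals_i$ is inside the handler for incoming evaluation shares, and that handler is guarded by $\exists \gatherTuple{k} \in \startEval_i$, which directly supplies the required tuple. The value $\evaluation_k$ was computed by calling $\Eval(\agg_k,\langle k\rangle,\evalShares_i[k])$ on $n-f$ evaluation shares, each of which passed share-verification against $\agg_k$. By Correctness of the threshold VRF, $\Eval$'s output satisfies $\Eval\Verify$; by Uniqueness of the threshold VRF the only value for which any proof verifies is $\phi(\agg_k,\langle k\rangle)$. Hence $\evaluation_k = \phi(\agg_k,\langle k\rangle)$.

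For the first clause, the plan is a two-step pipeline. First, I would show that $\gatherTuple{k}$ eventually enters $\startEval_j$ for every nonfaulty $j$: the tuple entered $\startEval_i$ via some $\GatherVerify_i(I)$ that terminated with an output $X$ containing the tuple, and I would combine Lemma~\ref{lem:startEval} (which already propagates every nonfaulty $\Gather$ output into every other nonfaulty $\startEval$) with the Agreement and Agreement on Verification properties of $\GatherVerify$ to push the identical tuple at index $k$ into every nonfaulty $\startEval_j$. Second, once $\gatherTuple{k}\in\startEval_j$, the update handler of $\PE$ has $j$ broadcast a verifying evaluation share for $k$, so every nonfaulty $l$ eventually receives at least $n-f$ verifying shares; once $l$'s own $\startEval_l$ guard becomes satisfied by the first step, $l$ invokes $\Eval$ and writes $(k,\evaluation_k)$ into $\evals_l$, with $\evaluation_k = \phi(\agg_k,\langle k\rangle)$ by the Correctness and Uniqueness argument reused from the furthermore clause.

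The main obstacle is the first step: making sure the specific tuple $\gatherTuple{k}$, which may lie outside the binding core $X^*$, actually reaches every nonfaulty party's $\startEval$. The delicate case is when the triggering $\langle indices, I\rangle$ message was sent by a faulty party only to a subset of nonfaulty recipients. My plan here is to exploit that every nonfaulty party also broadcasts its own $\langle indices,\cdot\rangle$ after outputting from $\Gather$, and that Completeness of $\GatherVerify$ forces the resulting verifications at every nonfaulty party to cover at least the binding core; then the Agreement property of $\GatherVerify$ pins the value at index $k$ to the common $(\proposal_k,\agg_k)$. Any residual subtlety for indices witnessed only by a single nonfaulty party would be handled by a binding-core-style argument analogous to the proof of Lemma~\ref{lem:startEval}.
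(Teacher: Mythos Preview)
Your treatment of the \emph{furthermore} clause is correct and matches the paper's argument (the paper simply appeals to ``the definition of the VRF''; your Correctness\,+\,Uniqueness justification is an acceptable elaboration).

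For the first clause, your two-step pipeline is the right shape and step~2 is fine, but step~1 has a genuine gap. You flag as the ``main obstacle'' the case where the triggering $\langle indices, I\rangle$ was sent by a faulty party to only a subset of nonfaulty recipients, and you propose to patch it with Completeness of nonfaulty broadcasts plus a ``binding-core-style argument''. That patch does not work: the index $k$ need not lie in the binding core $X^*$ nor in any nonfaulty party's $\Gather$ output, so neither Completeness nor any core-containment argument forces $k$ into every $\startEval_j$. The obstacle you identify is in fact a non-obstacle, and the paper resolves it in one line: the $\langle indices, I\rangle$ message is sent via \emph{Reliable Broadcast}, not point-to-point. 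Hence if nonfaulty $i$ received it, then by the Termination and Agreement properties of Reliable Broadcast every nonfaulty $j$ eventually receives the identical $\langle indices, I\rangle$; $j$ then runs $\GatherVerify_j(I)$, and by Agreement on Verification obtains the same output $X\ni\gatherTuple{k}$, so $\gatherTuple{k}\in\startEval_j$. Once you use the Reliable Broadcast properties here, Lemma~\ref{lem:startEval} is not needed at all for this step, and the rest of your argument goes through.
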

\begin{proof}
    If $\gatherTuple{k}\in \startEval_i$, then $i$ added that tuple after receiving some broadcast $\langle indices, I\rangle$ for which $\GatherVerify_i(I)$ terminated with an output $X$ such that $\gatherTuple{k}\in X$.
    From the Termination and Agreement properties of the broadcast protocol, every other nonfaulty $j$ eventually receives the same message.
    From the Agreement on Verification property of the $\Gather$ protocol, eventually $j$ outputs the same $X$ from $\GatherVerify_j(I)$, and then adds $\gatherTuple{k}$ to $\startEval_j$.
    A tuple $\gatherTuple{k}$ is added to $\startEval_j$ only after already sending $\langle eval, k, \evalShare_{k,j},\pi_{k,j}\rangle$, so all nonfaulty parties send such a message for every $\gatherTuple{k}\in \startEval_i$.
    Therefore, for every $\gatherTuple{k}\in \startEval_i$, every nonfaulty party $j$ receives a a message $\langle eval, l, \evalShare_{k,l},\pi_{k,l}\rangle$ from every nonfaulty $l$, and sees that $\gatherTuple{k}\in \startEval_j$.
    Since a nonfaulty $l$ computed the share correctly, $\EvalShareVerify(\agg_k,\pk_l,\langle k\rangle,\evalShare_{k,l},\pi_{k,l})=1$.
    Party $j$ then adds the tuple $(\evalShare_{k,l},\pi_{k,l})$ to $\evalShares_j[k]$.
    After adding such a tuple for every nonfaulty party, $j$ sees that $\left|\evalShares_j[k]\right|=n-f$, it computes $\phi(\agg_k,\langle k\rangle),\pi_k$ using $\Eval$ and adds the tuple $(k,\phi(\agg_k,\langle k\rangle))$ to $\evals_j$.
    
    Now, let $(k,\evaluation_k)\in \evals_i$ for some nonfaulty $i$.
    Before adding that tuple to $\evals_i$, party $i$ saw that $\exists\gatherTuple{k}\in \startEval_i$ and added $n-f$ shares to $\evalShares_i[k]$.
    It then computed $(\evaluation_k,\pi_k)=\Eval(\agg_k,\langle k\rangle,\evalShare_i[k])$ and added $(k,\evaluation_k)$ to $\evals_i$.
    From the definition of the VRF, $\evaluation_k=\phi(\agg_k,\langle k\rangle)$.
\end{proof}

\begin{corollary}\label{col:sameEvals}
    Let $i,j$ be two nonfaulty parties such that $(k,\evaluation_k)\in \evals_i$ and $(k,\evaluation'_k)\in \evals_j$.
    Then $\evaluation_k=\evaluation'_k$.
\end{corollary}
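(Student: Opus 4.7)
The plan is to derive this as an immediate consequence of Lemmas~\ref{lem:startEval} and \ref{lem:evalConsistency}, with essentially no new machinery. The only subtlety is checking that the two parties are computing the VRF on the \emph{same} aggregated transcript $\agg_k$, since in principle each could have a different tuple indexed by $k$ sitting in its $\startEval$ set.

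First, I would apply Lemma~\ref{lem:evalConsistency} to $i$: since $(k,\evaluation_k) \in \evals_i$, there exists a tuple $(k,(\proposal_k,\agg_k)) \in \startEval_i$ with $\evaluation_k = \phi(\agg_k, \langle k \rangle)$. Applying the same lemma to $j$ yields a tuple $(k,(\proposal'_k,\agg'_k)) \in \startEval_j$ with $\evaluation'_k = \phi(\agg'_k, \langle k \rangle)$.

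Next, I would invoke the second part of Lemma~\ref{lem:startEval}, which is exactly the consistency statement needed: whenever two nonfaulty parties have tuples with the same index $k$ in their respective $\startEval$ sets, the second components agree. Hence $(\proposal_k,\agg_k) = (\proposal'_k,\agg'_k)$, and in particular $\agg_k = \agg'_k$. Substituting, $\evaluation_k = \phi(\agg_k,\langle k\rangle) = \phi(\agg'_k,\langle k\rangle) = \evaluation'_k$.

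There is essentially no obstacle here; the corollary is really just packaging the conclusions of the two preceding lemmas together with the uniqueness of the VRF evaluation on a fixed transcript and input. The only thing to be careful about is to cite Lemma~\ref{lem:startEval}'s ``furthermore'' clause rather than its first statement, so that one does not need to reason about which party output from $\Gather$ first.
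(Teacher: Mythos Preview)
Your proposal is correct and follows exactly the same approach as the paper: apply Lemma~\ref{lem:evalConsistency} to each of $i$ and $j$ to extract the underlying tuples in $\startEval$, then use the ``furthermore'' clause of Lemma~\ref{lem:startEval} to conclude the transcripts agree, hence the evaluations agree.
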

\begin{proof}
    From \cref{lem:evalConsistency}, there exists a tuple $\gatherTuple{k}\in \startEval_i$ such that $\evaluation_k=\phi(\agg_k,\\ \langle k\rangle)$.
    Similarly, there exists a tuple $(k,(\proposal'_k,\agg'_k))\in \startEval_j$ such that $\evaluation'_k=\phi(\agg'_k,\langle k\rangle)$.
    From \cref{lem:startEval}, $(\proposal_k,\agg_k)=(\proposal'_k,\agg'_k)$, so $\evaluation_k=\evaluation'_k$.
\end{proof}

\begin{theorem}\label{thm:vwpe}
    The pair $(\PE,\PEVerify)$ is a verifiable weak proposal election protocol resilient to $f<\frac{n}{3}$ parties with $\alpha=\frac{1}{3}$.
\end{theorem}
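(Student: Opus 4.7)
My plan is to verify the six required properties of $(\PE,\PEVerify)$, reducing Termination of Output, Completeness, Agreement on Verification, External Validity and Binding Verification to the corresponding properties of $(\Gather,\GatherVerify)$ from \cref{thm:gather}, combined with \cref{lem:startEval,lem:evalConsistency} and \cref{col:sameEvals}; the $\alpha$-Binding claim is the crux and I sketch it separately. Termination of Output chains through the rounds: all nonfaulty contribute VRF-DKG shares, so each aggregates and invokes $\Gather$; Termination of $\Gather$ yields $X_i\supseteq X^*$ at every nonfaulty $i$; $i$ broadcasts $Indices(X_i)$; Completeness of $\Gather$ together with broadcast termination guarantees every nonfaulty $j$ completes $\GatherVerify_j(Indices(X_i))$, so the tuples enter $\startEval_j$ (in particular all tuples of $X_j$); \cref{lem:evalConsistency} then populates $\evals_j$ for every $k\in Indices(X_j)$, firing the output rule. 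External Validity is immediate from External Validity of $\Gather$ used inside $\PEVerify$ through the $\checkValidity$ wrapper.

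For $\alpha$-Binding I would define $x^*$ as follows. Let $X^*$ be the binding core of $\Gather$, let $U=\bigcup_i \startEval_i$ over nonfaulty $i$, and note that Agreement of $\Gather$ pins down a unique $\agg_k$ per $k\in Indices(U)$. Set $\ell^*=\arg\max_{k\in Indices(U)}\phi(\agg_k,\langle k\rangle)$, and declare $x^*=\proposal_{\ell^*}$ if $\ell^*\in Indices(X^*)$ and $\ell^*$ was nonfaulty at the start of $\PE$ (in which case Internal Validity of $\Gather$ ensures $\proposal_{\ell^*}$ is its actual input), otherwise $x^*=\perp$. The hard part will be arguing that the evaluations $\{\phi(\agg_k,\langle k\rangle)\}_{k\in Indices(U)}$ are jointly unbiased. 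The key timing observation is that every $\agg_k$ with $k\in Indices(U)$ is committed inside $\Gather$ strictly before any nonfaulty party releases an $\Eval\Share$, because $\Eval\Share$ is sent only after $\GatherVerify$ terminates on a received index set, which requires the underlying gather tuple to already exist. Hence Unbiasability yields independent uniform evaluations over $Indices(U)$, so the argmax is uniform. Counting closes the bound: $|X^*|\geq n-f$ contains at most $f$ faulty indices, leaving at least $n-2f$ nonfaulty ones, and $|Indices(U)|\leq n$, so the good event has probability at least $(n-2f)/n \geq (f+1)/(3f+1) > 1/3$. In the good case, Includes Core plus $X_i\subseteq \startEval_i\subseteq U$ gives $\ell^*\in Indices(X_i)$ and that $\ell^*$ maximizes $\phi$ over the smaller set $Indices(X_i)$, so every nonfaulty $i$ outputs $\proposal_{\ell^*}=x^*$.

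The remaining properties follow a common template. For Completeness, if nonfaulty $j$ outputs $(x,\pi=Indices(X_j))$, any nonfaulty $i$ receives $j$'s index broadcast; Completeness of $\Gather$ makes $\GatherVerify_i(\pi)$ terminate with the same tuples as $X_j$; those tuples then enter $\startEval_i$; \cref{lem:evalConsistency} and \cref{col:sameEvals} ensure $i$ obtains identical evaluations, hence the identical argmax, and verifies $x$. Agreement on Verification is analogous, using Agreement on Verification of $\Gather$ to transport the gather set and the same lemmas to transport evaluations. For Binding Verification, assuming $x^*\neq\perp$ and $\PEVerify_i(x,\pi)$ terminates: Includes Core forces the $\GatherVerify_i(\pi)$ output $X$ to contain $X^*$ (so $\ell^*\in Indices(X)$), while the termination guard in $\PEVerify$ forces $Indices(X)=\pi\subseteq Indices(\startEval_i)\subseteq Indices(U)$; the same sandwich argument shows $\ell^*$ is the argmax over $Indices(X)$, so $x=\proposal_{\ell^*}=x^*$. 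The only genuinely delicate step in the entire proof is isolating the precise moment at which the adversary's $\agg_k$'s become fixed; once that is pinned to the completion of $\Gather$, Unbiasability and a pure counting argument over $|Indices(U)|\leq n$ versus $|X^*\cap\{\text{nonfaulty}\}|\geq n-2f$ suffice.
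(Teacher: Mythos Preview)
Your proposal is correct and follows essentially the same decomposition and arguments as the paper's proof: the same reduction of each property to the corresponding $\Gather$ property via \cref{lem:startEval}, \cref{lem:evalConsistency}, \cref{col:sameEvals}, and the same timing/unbiasability/counting argument for $\alpha$-Binding (your set $U$ is exactly the paper's set $outputs$). Two small points to tighten: to invoke Internal Validity of $\Gather$ you need $\ell^*$ nonfaulty at the moment the first nonfaulty party completes $\Gather$ (not merely at the start of $\PE$), after which you conclude it was also nonfaulty at PE-start; and the paper additionally bounds the collision probability of two equal maximal evaluations by $n^2/2^\lambda$, which you should mention to justify that the argmax is well-defined.
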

\begin{proof}
    Each property is proven separately.
    
    \textbf{Termination of Output.}
    If all nonfaulty parties participate in the $\PE$ protocol, then they all send a $\langle dkg, \share_{i,j} \rangle $ message to every other party, with $\share_{i,j}$ being generated using $\generateShare$. 
    Every nonfaulty party $i$ eventually receives at least $n-f$ shares from the nonfaulty parties such that $\verifySecret(\pk_i, \share_{j,i}) = 1$ and adds $\share_{j,i}$ to $\dkgShares_i$.
    After that, $i$ sees that $\left| \dkgShares_i\right|=n-f$, it aggregates those shares into $\agg_i$, and inputs $(\proposal_i,\agg_i)$ to the $\Gather$ protocol.
    From the Correctness property of the DKG, $\DKG\Verify(\agg_i)=1$, because $\agg_i$ is an aggregation of $n-f$ verifying DKG shares.
    By assumption, all nonfaulty parties have externally valid inputs (i.e. for every nonfaulty $i$, $\validate(\proposal_i)=1$), so for every nonfaulty $i$ $\checkValidity(\proposal_i,\agg_i)=1$.
    By the Termination of Output property of the $\Gather$ protocol, every nonfaulty party $i$ eventually outputs some set $X_i$ from the protocol.
    From \cref{lem:startEval}, every nonfaulty party $j$ eventually has $X_i\subseteq \startEval_j$.
    In addition, from \cref{lem:evalConsistency}, for every $\gatherTuple{k}\in X_i\subseteq \startEval_i$ eventually there exists a tuple $(k,\evaluation_k)\in \evals_i$.
    At that point, $i$ preforms some local computations and outputs a value from the protocol.
    
    \textbf{Completeness.}
    Assume some nonfaulty party $i$ outputs the value $x$ and proof $\pi$ from $\PE$.
    The way $i$ computes $\pi$ is by taking its output from the $\Gather$ protocol, $X_i$, and computing $\pi=Indices(X_i)$.
    Observe some nonfaulty party $j$ that calls $\PEVerify_j(x,\pi)$.
    From \cref{lem:startEval}, eventually $X_i\subseteq \startEval_j$, so for every $ k\in\pi=Indices(X_i)$ there exists some tuple $\gatherTuple{k}\in \startEval_j$.
    From \cref{lem:evalConsistency}, eventually for every such $k$, there also exists a tuple $(k,\evaluation_k)\in\evals_j$.
    Therefore eventually $j$ proceeds past the first condition of $\PEVerify$.
    Afterwards, $j$ calls $\GatherVerify_j(\pi)$.
    By definition $\pi=Indices(X_i)$, so $\GatherVerify_j(\pi)$ eventually terminates because of the Completeness property of the $\Gather$ protocol.
    Before terminating, $i$ also saw that for every $k\in\pi$ there existed a tuple $(k,\evaluation_k)\in\evals_i$.
    It then computed the index $\ell$ with the maximal $\evaluation_\ell$ and output $\proposal_\ell$.
    From \cref{col:sameEvals}, $j$ has the same tuples $(k,\evaluation_k)\in \evals_j$ so it computes the same $\ell$.
    Similarly, from \cref{lem:startEval}, when $j$ checks if $x=\proposal_\ell$ it does so with the tuple $\gatherTuple{k}\in X_i\subseteq\startEval_j$, and thus from the way $i$ computes $x$, $j$ sees that $x$ is indeed $\proposal_\ell$.
    Note that \cref{lem:startEval} and \cref{col:sameEvals} also imply that the $\startEval$ and $\evals$ sets have only one tuple of the form $\gatherTuple{k}$ for any given $k$, meaning that the values above are unique and well-defined. 
    
    $\alpha$-\textbf{Binding.}
    At the time the first nonfaulty party completes the $\Gather$ protocol, there exists a binding-set $X^*$ of tuples $\gatherTuple{j}$ that must be included in any output of the $\GatherVerify$ protocol.
    Now, observe all of the sets $X$ which are the output of $\GatherVerify_i$ for any nonfaulty $i$ throughout the rest of the protocol, and let $outputs=\bigcup X$ be the set of all tuples $\gatherTuple{j}$ in those sets.
    From the Agreement property of the $\Gather$ protocol, for any given $j\in[n]$ there can be no more than one such tuple $\gatherTuple{j}\in outputs$.
    Furthermore, from the External Validity property of the $\Gather$ protocol, $\checkValidity(\proposal_j,\agg_j)=1$ for every such $j$, and thus $\DKG\Verify(\agg_j)=1$.
    In other words, every such $\agg_j$ is an aggregation of correct shares from at least $f+1$ different parties, and at least one of those parties is nonfaulty.
    
    Since each aggregated VRF-DKG transcript $\agg_j$ contains shares from at least one nonfaulty party, before some nonfaulty party sends its evaluation share of $\agg_j$, the value $\phi(\agg_j,\langle j\rangle)$ is distributed uniformly and independently from the view of the adversary or any single nonfaulty party.
    That is true because of the Unbiasability property of the threshold verifiable random function.
    No nonfaulty party $i$ sends its evaluation share of any of the aggregated VRF-DKGs $\agg_j$ (or their respective non-aggregated shares) before completing the $\GatherVerify$ protocol and outputting a set $X$ from $\GatherVerify_i$ such that $\gatherTuple{k}\in X$.
    At that point, $\gatherTuple{k}$ is already set and every nonfaulty party that outputs a set $X$ from $\GatherVerify$ that contains a tuple with the index $k$, does so with the tuple $\gatherTuple{k}$.
    Combining the fact that no nonfaulty party sends an evaluation share for $\gatherTuple{k}$ before outputting a gather-set containing it from $\GatherVerify$, and that before that happens the value is distributed uniformly and independently from the adversary's view, $\phi(\agg_k,\langle k\rangle)$ is distributed uniformly and independently for every $\gatherTuple{k}\in outputs$.
    In particular, each one of those values has the same probability of being the maximal one, regardless of the adversary's actions.
    
    Now, if $\ell^*=argmax_j\{\phi(\agg_j,\langle j\rangle)|\gatherTuple{j}\in outputs\}$ for some $(\ell^*,(\proposal_{\ell^*},\agg_{\ell^*}))\in X^*$, and party $\ell^*$ is nonfaulty at the time the first nonfaulty party completes the $\Gather$ protocol, define $x^*$ to be $\proposal_{\ell^*}$, otherwise define $x^*=\perp$.
    Note that $X^*$ is at least of size $n-f$, so at least $n-2f$ of the parties $j$ such that there exists a tuple $\gatherTuple{j}\in X^*$ are nonfaulty at the time the first nonfaulty party completes the $\Gather$ protocol.
    From the Internal Validity property of the $\Gather$ protocol, for any party $j$ that was nonfaulty at the time the first nonfaulty party completed the $\Gather$ protocol, the tuple $\gatherTuple{j}$ includes the values $\proposal_j$ and $\agg_j$ that $j$ input to the protocol.
    Each one of those parties has a $\frac{1}{n}$ probability of having the maximal value, and thus the probability that $x^*$ is the input of one of the parties that was nonfaulty at that time is at least $\frac{n-2f}{n}\geq (\frac{n}{3}+1)\cdot\frac{1}{n}=\frac{1}{3}+\frac{1}{n}$.
    Clearly, since they are nonfaulty at that time, they must have also acted in a nonfaulty manner when starting the $\PE$ protocol.
    This analysis ignores the probability of two parties having the same maximal value.
    The probability of this event can be bounded by $\frac{n^2}{2^\lambda}$ since there are $2^\lambda$ different possible values for outputs of $\phi$.
    For the probability to remain at least $\frac{1}{3}$ even when taking the possibility of a collision into consideration, it is enough that the security parameter is at least $3\log(n)$.
    
    \textbf{Agreement on Verification} Let $i,j$ be two nonfaulty parties and $x,\pi$ be two values such that $\PEVerify_i(x,\pi)$ terminates.
    The first thing $i$ does in $\PEVerify$ is wait until $\forall k\in \pi$, there exists a tuple $(k,\evaluation_k)\in \evals_i$ and a tuple $\gatherTuple{k}\in \startEval_i$.
    Party $i$ only updated its $\startEval_i$ set after receiving a broadcast of the form $\langle indices, I\rangle$ and seeing that $\GatherVerify_i(I)$ terminates and outputs the set $X$.
    When that happens, $i$ updates $\startEval_i$ to be $\startEval_i\cup X$.
    From the Termination and Agreement properties of the broadcast protocol, $j$ eventually receives the same message.
    It then runs $\GatherVerify_j(I)$ and eventually outputs the same set $X$ because of the Agreement on Verification property of $\Gather$.
    Afterwards, it also updates $\startEval_j$ to be $\startEval_j\cup X$.
    In other words, for every $\gatherTuple{k}\in \startEval_i$, eventually $\gatherTuple{k}\in \startEval_j$ as well.
    From \cref{lem:evalConsistency}, eventually for every $k\in\pi$ there also exists a tuple $(k,\evaluation'_k)\in \evals_j$.
    Recall that there also exists a tuple $(k,\evaluation_k)\in \evals_i$, and $\evaluation_k=\evaluation'_k$ because of \cref{col:sameEvals}.
    By \cref{lem:startEval} and \cref{col:sameEvals}, $i$ and $j$ only have one such tuple in their respective $\startEval$ and $\evals$ sets, and thus all of the calculations in the rest of the protocol are well defined.
    Before terminating, $i$ called $\GatherVerify_i(\pi)$, which eventually terminated.
    From the Agreement on Verification property of the $\Gather$ protocol, $\GatherVerify_j(\pi)$ also eventually terminates.
    Afterwards, $i$ and $j$ perform the same deterministic non-interactive computation which only depends on the values in $\evals$ and $\startEval$.
    We've shown that $i$ and $j$ have the same values in the relevant tuples, so since $i$ eventually completed the $\PEVerify$ protocol, so does $j$.
    
    \textbf{Binding Verification.}
    If $x^*$ as defined in the $\alpha$-Binding property equals $\perp$, the property trivially holds.
    Assume that $x^*\neq\perp$ and that $\PEVerify_i(x,\pi)$ terminates for some nonfaulty $i$.
    Before $\PEVerify$ terminated, $i$ checked that for every $k\in\pi$ there exists a tuple $(k,\evaluation_k)\in \evals_i$ and a tuple $\gatherTuple{k}\in \startEval_i$.
    From \cref{lem:evalConsistency} if $(k,\evaluation_k)\in \evals_i$ then there exists a tuple $\gatherTuple{k}\in \startEval_i$ such that $\phi(\agg_k,\langle k\rangle)=\evaluation_k$, and from \cref{col:sameEvals} there is only one tuple with the index $k$ in $\evals_i$.
    Combining these observations, for every $k\in\pi$, there exists a tuple $\gatherTuple{k}\in\startEval_i$ and a tuple $(k,\phi(\agg_k,\langle k\rangle))\in\evals_i$ (and no other tuple with the index $k$).
    
    Afterwards, $i$ calls $\GatherVerify_i(\pi)$, which eventually terminates with an output $X$ such that $Indices(X)=\pi$.
    In addition, from the Includes Core property of the $\Gather$ protocol, $X^*\subseteq X$, and thus $Indices(X^*)\subseteq Indices(X)=\pi$.
    Now, note that $i$ only adds a tuple $\gatherTuple{k}$ to $\startEval_i$ if it outputs a gather-set from $\GatherVerify$ that includes $\gatherTuple{k}$, and thus $\startEval_i\subseteq outputs$.
    By definition, $\ell^*$ is the index with the maximal evaluation $\phi(\agg_k,\langle k\rangle)$ among all tuples $\gatherTuple{k}\in outputs$.
    Also, by definition, $\ell^*\in Indices(X^*)\subseteq \pi$.
    Therefore, when $i$ computes $\ell= argmax_{k} \{\evaluation_k|k\in\pi\}$, it sees that the index corresponding to the maximal such value must be $\ell^*$, and so it checks that $x=\proposal_{\ell^*}$ for the tuple $\gatherTuple{\ell^*}\in \startEval_i$.
    As discussed above, this is the same $\gatherTuple{\ell^*}$ tuple in $outputs$, so $\proposal_{\ell^*}=x^*$.
    Party $i$ eventually terminated, and thus it found that $x=\proposal_{\ell^*}=x^*$, as required.
    
    \textbf{External Validity.}
    Observe some nonfaulty party $i$, value $x$ and proof $\pi$ such that $\PEVerify_i(x,\pi)$ terminates.
    Since $\PEVerify$ terminates, $i$ must have found that $x=\proposal_\ell$ for some $\gatherTuple{\ell}\in \startEval_i$.
    Party $i$ only updates $\startEval_i$ by adding all elements in $X_j$ after $\GatherVerify_i$ outputs the set $X_j$.
    From the External Validity property of the $\Gather$ protocol, for every $\gatherTuple{k}\in X_j$, $\checkValidity(\proposal_k,\agg_k)=1$, which in turn means that $\validate(\proposal_k)=1$.
    This is true for $\proposal_\ell$ as well.
\end{proof}

\section{No Waitin' HotStuff}\label{sec:consensus}
We present a new primary-backup based consensus protocol for the asynchronous model: \textit{No Waitin' Hotstuff} ($\AsyncHS$). As the name suggests, many of the techniques and inspiration for this protocol originated in HotStuff~\cite{HS19}.  Unlike basic HotStuff which requires eventual synchrony, $\AsyncHS$ obtains liveness using the $\PE$ protocol described in Section~\ref{sec:weakleader}, and thus avoids depending on a leader.  The purpose of $\AsyncHS$ is to determine whether or not the $\PE$ protocol was successful, and if not to allow parties to repeat the $\PE$ until consensus is reached. Recall that with probability $\alpha$ (in this implementation $\alpha=\frac{1}{3}$), all parties output the input of a party that was nonfaulty when starting $\PE$.  On the other hand, with probability $1-\alpha$, the parties might output the value that a faulty party input, or even different values from different parties.  Using $\AsyncHS$ we can amplify our constant probability of agreement to an overwhelming probability of agreement.

$\AsyncHS$ proceeds in virtual rounds called ``views'', which are attempts to achieve consensus on the output of the $\PE$ protocol.  
$\AsyncHS$ uses a ``Key-Lock-Commit'' paradigm that helps maintain safety and liveness.
\begin{itemize}[label={}]
    \item \textbf{Key:}  Parties set a local key field that indicates that no other value was committed to in previous rounds.  The keys help maintain liveness: if at any point some party sets a lock in a view where no commitment takes place, then they will eventually see a key from that view (or a later view), that will convince them to participate in the current view.
    
    A key consists of three values: $key$, which is a view number, $key\_val$ which is a value and $\pi$, which is a proof that the key was set correctly in that view.
    
    \item \textbf{Lock:} Before committing to a value in a given view, parties will wait to hear that enough other parties have set a lock on the same value in that view.  
    Before parties set a lock in a given view, they make sure that enough other parties have set a local key field that indicates that no other value was committed to in previous rounds.
    Parties that are locked on a value won't be willing to participate in any later view with a different value.
    They will ignore the lock if and only if enough proof, in the form of a key from a later view, is provided that no commitment actually took place in the view where the lock was set.
    This mechanism helps in guaranteeing the safety of decision values.
    If a commitment took place, then there will be a large number of nonfaulty parties that are locked on that value.
    Those parties won't be willing to participate in views with different values, which will prevent any party from setting a key in a later view with a different value.
    This in turn will guarantee that no party will be able to provide erroneous proof that the locks can be opened.
    
    A lock looks much like a key and consists of three values: $lock$, which is a view number, $lock\_val$ which is a value and $\pi$, which is a proof that the lock was set correctly in that view.
    
    \item \textbf{Commit:} If a nonfaulty party commits to a value no other nonfaulty party ever commits to another value.  
    The locking mechanism guarantees that nonfaulty parties cannot commit to different values.
    In order to help other parties terminate, nonfaulty parties send commit messages to all other parties with proof that the commitment is correct and that they can terminate and output the same value.
\end{itemize}

Algorithm~\ref{alg:asynchs} formally describes $\AsyncHS$.  
It relies on three protocols: $\viewChange$ (Algorithm~\ref{alg:viewChange}) for the first round of interaction in each view and the $\PE$ protocol, and on $\processMessages$ (Algorithm~\ref{alg:processMessages}) and $\processErrors$ (Algorithm~\ref{alg:faults}) for all subsequent rounds in each view.  

Almost all the work takes place in $\processMessages$ (Algorithm $\ref{alg:processMessages}$), in which parties process $echo$, $key$ and $lock$ messages.
Algorithms~\ref{alg:termination} and  \ref{alg:faults} are utilities for processing $commit$, $blame$ and $equivocate$ messages if they are received and either terminating or continuing to the next view if needed.

Finally, the algorithms for checking that $key$, $lock$ and $commit$ messages are correct are provided in Algorithms~\ref{alg:keyCorrect}, \ref{alg:lockCorrect} and $\ref{alg:commitCorrect}$ respectively.
This is done by checking that the provided proof contains signatures from $n-f$ parties on a message from the previous round. For example a correct $key$ message must contain $n-f$ signatures on $echo$ messages from the same view with the same value.
Keys and locks are considered automatically correct if they are from before the first view.
In addition, when checking if a key is correct, parties also check that the key's value is externally valid.

Below we provide an overview of each of the rounds.
The parties proceed in $5$ rounds.  
The general idea is that parties will first confirm that they all agree on the output of the $\PE$ protocol, set a lock to the output and confirm that they are all locked, commit to the lock and terminate.  
If at any point they see that the $\PE$ failed, then they move onto a new view and announce that they are doing so (with proof).  
\\

\noindent \textbf{Round 1:}  The first round in each view begins with a $\viewChange$ protocol.  
The $\viewChange$ protocol determines which keys parties input into the $\PE$ protocol.
To begin, send the current key to all other parties in a $suggest$ message.  
Upon receiving $n-f$ keys, choose the key from the most recent view and input it to the $\PE$ protocol.
\\

\noindent \textbf{Round 2:} The second round proceeds differently depending on which messages parties receive.  This is the round where parties determine whether the $\PE$ was successful or not.
    \begin{itemize}
    \item Upon receiving a value output from another party from the $\PE$ protocol, if that value is correct then echo that message to all other parties.   
    \item If that value is incorrect then send a $blame$ message and proof to all other parties, including a proof that the value was the output from the $\PE$ protocol and that it is incorrect and proceed to the next view.
    The $\PE$ protocol uses an external-validity function that guarantees that all outputs are well-formed and provide correct proofs of their keys.
    However, checking whether the message should be accepted using the local $lock$ fields cannot be modeled as an external validity function, since it is dependent on the running party's local state.
    Therefore, $blame$ messages inform other parties that the $\PE$ protocol output a key which was insufficient to open the local $lock$, and include the local $lock$ fields with proofs that they have been correctly set. 
    If the $\PE$ protocol was successful then the output values should always be correct and open any lock.

    \item Upon receiving a correct $blame$ message and proof, send the $blame$ message to all parties and proceed to the next view.
    
    \item Upon receiving $echo$ messages with two different correct values and proofs that they were outputs of the $\PE$ protocol, send an $equivocate$ message and proof to all parties, and proceed to the next view.  
    If the $\PE$ protocol was unsuccessful then there could be two parties with different correct values, and thus the next view will be necessary to reach agreement.
    
    \item Upon receiving an $equivocate$ message with different values and correct proofs, forward that message, and proceed to the next view.
\end{itemize}

\noindent \textbf{Round 3:} In this round parties are confirming that they believe that the $\PE$ protocol terminated successfully.  Upon receiving $n-f$ $echo$ messages, update the $key$ field before sending a $key$ message to all parties.  
\\

\noindent \textbf{Round 4:} Upon receiving $n-f$ $key$ messages, update the $lock$ field before sending a $lock$ message to all parties.
Setting a $lock$ is the main way the protocol guarantees safety.
As will be stated in the next round, before committing to a value, every party waits to see that at least $n-f$ parties set their locks.
This guarantees that at least $f+1$ nonfaulty parties will have set their locks.
These parties will act as sentinels and won't let any other value get past the $echo$ phase in any future view.
This in turn will make sure that no correct key is set in later views that might allow one of those sentinels to open their lock.
Crucially, before setting a lock, every party makes sure that at least $f+1$ nonfaulty parties set their keys to the current value.
By doing that, every party guarantees that when choosing which value and key to input to the $\PE$ protocol, all nonfaulty parties will hear of the current value and will be capable of opening any older $lock$ a nonfaulty party might have.
\\

\noindent \textbf{Round 5:} 
    If a single honest party begins the final round then the protocol will eventually terminate.  There are two means of termination:  either you see that enough parties are locked, or you see that one other party is (correctly) committed.
    Upon receiving $n-f$ $lock$ messages, send a $commit$ message to all parties and terminate. Upon receiving a $commit$ message with proof that it was sent after receiving enough $lock$ messages, forward that message to all other parties and terminate.
\\

\begin{algorithm}\caption{$\AsyncHS(x_i)$}\label{alg:asynchs}
\begin{algorithmic}[1]
    \State $key_i\gets 0, key\_val_i\gets \perp, key\_proof\gets\perp$
    \State $lock_i\gets 0, lock\_val_i\gets\perp, lock\_proof_i\gets\perp$
    \State $view_i\gets 1$
    \State continually run $\mathsf{checkTermination}()$
    \While{true}
        \State $cur\_view\gets view_i$
        \AsLongAs{$cur\_view=view_i$}\label{line:asLongAs}
            \State delay any message from any view $v$ such that $v>view_i$
            \State call $\viewChange(view_i)$ \Comment{perform first lines in $\viewChange$ before continuing to next line}
            \State continually run $\processMessages(view_i)$ and $\processErrors(view_i)$
        \EndAsLongAs
    \EndWhile
\end{algorithmic}
\end{algorithm}
In the $\AsyncHS$ protocol, it is important to note that we explicitly run the $\mathsf{checkTermination}$ protocol before line~\ref{line:asLongAs}, but the $\processMessages$ and $\processErrors$ protocols after it.
This means that the $\mathsf{checkTermination}$ protocol always runs in the background, whereas once $cur\_view\neq view_i$ party $i$ stops processing messages from $cur\_view$ in $\processMessages$ and $\processErrors$ (and thus don't update their $key$ or $lock$ fields according to messages received in older views).

\begin{algorithm}\caption{$\mathsf{checkTermination}()$}\label{alg:termination}
\begin{algorithmic}[1]
    \Upon{receiving the first $\langle commit,v,\pi_{commit},view\rangle$ message from $j$}
        \If{$\commitCorrect(view,v,\pi_{commit})=1$}
            \State send $\langle commit,v,\pi_{commit},view\rangle$ to every party $j\in[n]$\label{line:commitEcho}
            \State \textbf{output} $v$ and \textbf{terminate}
        \EndIf
    \EndUpon
\end{algorithmic}
\end{algorithm}
\begin{algorithm}\caption{$\viewChange(view)$}\label{alg:viewChange}
\begin{algorithmic}[1]
    \State $suggestions\gets\emptyset$ \Comment{$suggestions$ is a multiset}
    \State send $\langle suggest,key_i,key\_val_i,key\_proof_i,view\rangle$ to every party $j\in[n]$
    \Upon{receiving the first $\langle suggest, k, v, \pi_{key}, view\rangle$ message from party $j$}
        \If{$\keyCorrect(k,v,\pi_{key})=1$ and $k<view$}
            \State $suggestions\gets suggestions\cup\{(k,v,\pi_{key})\}$
            \If{$\left|suggestions\right|=n-f$}
                \State $(k,v,\pi_{key})\gets argmax_{(k,v,\pi_{key})\in suggestions}\{k\}$\Comment{break ties arbitrarily}
                \If{$k=0$}
                    \State $(k,v,\pi_{key})\gets(0,x_i,\perp)$
                \EndIf
                \State \textbf{call} $\PE_{i,view}((k,v,\pi_{key}))$ with the external validity function $\keyCorrect$
            \EndIf
        \EndIf 
    \EndUpon
\end{algorithmic}
\end{algorithm}

\begin{algorithm}\caption{$\processErrors(view)$} \label{alg:faults}
\begin{algorithmic}
    \Upon{receiving the first $\langle blame, k, v, \pi_{key}, \pi_{election}, l, lv, \pi_{lock}, view\rangle$ message from $j$}
        \If{$\lockCorrect(l,lv,\pi_{lock})=1$ and $view\leq k\lor k<l$}
            \Upon{$\PEVerify_{i,view}((k,v,\pi_{key}),\pi_{election})$ terminating}
                \State send $\langle blame, k, v, \pi_{key}, \pi_{election}, l, lv,  \pi_{lock}, view\rangle$ to every party $j\in[n]$ \label{line:blameEcho}
                \State $view_i\gets view_i+1$
            \EndUpon
        \EndIf
    \EndUpon
    \Upon{receiving the first $\langle equivocate, k,v,\pi_{key},\pi_{election},k',v',\pi'_{key},\pi'_{election},view\rangle$ message from $j$}
        \If{$(k,v,\pi_{key})\neq (k',v',\pi'_{key})$}
            \Upon{$\PEVerify_{i,view}((k,v,\pi_{key}),\pi_{election})$ and $\PEVerify_{i,view}((k',v',\pi'_{key}),\pi'_{election})$ terminating}
                \State send $\langle equivocate, k,v,\pi_{key},\pi_{election},k',v',\pi'_{key},\pi'_{election},view\rangle$ to every party $j\in[n]$ \label{line:equivocateEcho}
                \State $view_i\gets view_i+1$
            \EndUpon
        \EndIf
    \EndUpon
\end{algorithmic}
\end{algorithm}

\begin{algorithm}\caption{$\processMessages(view)$}\label{alg:processMessages}
\begin{algorithmic}[1]
    \State $echoes\gets\emptyset$, $keys\gets\emptyset$, $locks\gets\emptyset$
    
    \Upon{$\PE_{i,view}$ outputting $(k,v,\pi_{key}),\pi_{election}$} \Comment{continue updating state according to $\PE_{i,view}$}
        \If{$view>k\geq lock_i$}
            \State $\sigma\gets \sign (sk_i,\langle echo, v, view\rangle)$
            \State send $\langle echo, k, v, \pi_{key}, \pi_{election}, \sigma, view\rangle$ to every party $j\in[n]$
        \Else
            \State send $\langle blame, k, v, \pi_{key}, \pi_{election}, lock_i, lock\_val_i, lock\_proof_i, view\rangle$ to every party $j\in[n]$ \label{line:sendBlame}
            \State $view_i\gets view_i+1$
        \EndIf
    \EndUpon
    \Upon{receiving the first $\langle echo, k, v, \pi_{key},\pi_{election},\sigma, view\rangle$ message from $j$}
        \If{$\verifySignature(pk_j,\langle echo,v,view\rangle,\sigma)=1$}
            \Upon{$\PEVerify_{i,view}((k,v,\pi_{key}),\pi_{election})$ terminating}
                \If{$\exists (k',v',\pi'_{key},\pi'_{election},\sigma',j')\in  echoes \ s.t.\ (k,v,\pi_{key})\neq(k',v',\pi'_{key})$}\label{line:equivocation}
                    \State send $\langle equivocate, k,v,\pi_{key},\pi_{election},k',v',\pi'_{key},\pi'_{election},view\rangle$ to every party $j\in[n]$\label{line:sendEquivocate}
                    \State $view_i\gets view_i+1$
                \Else
                    \State $echoes\gets echoes\cup(k,v,\pi_{key},\pi_{election},\sigma,j)$
                    \If{$\left|echoes\right|=n-f$}
                        \State $sigs \gets \{(\sigma,j)|(k,v,\pi_{key},\pi_{election},\sigma,j)\in echoes\}$
                        \State $key_i\gets view,key\_proof_i\gets sigs,key\_val_i\gets v$
                        \State $\sigma\gets \sign(sk_i,\langle key, v, view\rangle)$
                        \State send $\langle key, v, sigs, \sigma, view \rangle$ to every party $j\in[n]$
                    \EndIf
                \EndIf
            \EndUpon
        \EndIf
    \EndUpon
    \Upon{receiving the first $\langle key,v,\pi_{key},\sigma,view\rangle$ message from $j$}
        \If{$\verifySignature(pk_j,\langle key, v, view\rangle,\sigma)=1$ and $\keyCorrect(view,v,\pi_{key})=1$}
            \State $keys\gets keys\cup\{(\sigma,j)\}$
            \If{$\left|keys\right|=n-f$}
                \State $lock_i\gets view, lock\_proof_i\gets keys, lock\_val_i\gets v$
                \State $\sigma\gets \sign(sk_i,\langle lock,v,view\rangle)$
                \State send $\langle lock, v, lock\_proof_i, \sigma, view\rangle$ to every party $j\in[n]$
            \EndIf
        \EndIf
    \EndUpon
    \Upon{receiving the first $\langle lock,v,\pi_{lock},\sigma,view\rangle$ message from $j$}
        \If{$\verifySignature(pk_j,\langle lock, v, view\rangle,\sigma)=1$ and $\lockCorrect(view,v,\pi_{lock})=1$}
            \State $locks\gets locks\cup\{(\sigma,j)\}$
            \If{$\left|locks\right|=n-f$}
                \State send $\langle commit, v, locks, view\rangle$ to every party $j\in[n]$ \label{line:commit}
                \State \textbf{output} $v$ and \textbf{terminate}
            \EndIf
        \EndIf
    \EndUpon
\end{algorithmic}
\end{algorithm}

\begin{algorithm}\caption{$\keyCorrect(view,v,\pi_{key})$}\label{alg:keyCorrect}
\begin{algorithmic}
    \If{$\validate(v)=0$}
        \State \textbf{return} 0
    \EndIf
    \If{$view=0$}
        \State \textbf{return} 1
    \EndIf
    \If{$\left|\{j|\exists (\sigma,j)\in\pi_{key}\}\right|\geq n-f$ and $\forall (\sigma,j)\in \pi_{key}\ \verifySignature(pk_j,\langle echo, v, view\rangle, \sigma)=1$}
        \State \textbf{return} 1
    \Else
        \State \textbf{return} 0
    \EndIf
\end{algorithmic}
\end{algorithm}

\begin{algorithm}\caption{$\lockCorrect(view,v,\pi_{lock})$}\label{alg:lockCorrect}
\begin{algorithmic}
    \If{$view=0$}
        \State \textbf{return} 1
    \EndIf
    \If{$\left|\{j|\exists (\sigma,j)\in\pi_{lock}\}\right|\geq n-f$ and $\forall (\sigma,j)\in \pi_{lock}\ \verifySignature(pk_j,\langle key, v, view\rangle, \sigma)=1$}
        \State \textbf{return} 1
    \Else
        \State \textbf{return} 0
    \EndIf
\end{algorithmic}
\end{algorithm}

\begin{algorithm}\caption{$\commitCorrect(view,v,\pi_{commit})$}\label{alg:commitCorrect}
\begin{algorithmic}
    \If{$\left|\{j|\exists (\sigma,j)\in\pi_{commit}\}\right|\geq n-f$ and $\forall (\sigma,j)\in \pi_{commit}\ \verifySignature(pk_j,\langle lock, v, view\rangle, \sigma)=1$}
        \State \textbf{return} 1
    \Else
        \State \textbf{return} 0
    \EndIf
\end{algorithmic}
\end{algorithm}

\subsection{Security Analysis}
Our main theorem for demonstrating the security of $\AsyncHS$ is given in \cref{thm:vaba} where we show correctness, validity, termination and quality.
The proof of this theorem relies on several lemmas.

Correctness depends on \cref{lem:safety} where we show that whenever there exists a correct commitment, nonfaulty parties will not send $echo$ messages with values that are inconsistent with this commitment in future views.
The proof of correctness also uses \cref{lem:singleValuePerView} which argues that all nonfaulty parties only send correct messages, and that all correct messages in a given view contain the same value.

Termination depends on Lemmas~\ref{lem:viewProgress} and \ref{lem:goodLeaderTermination}.
\cref{lem:viewProgress} proves that provided no commitment is reached in prior views, honest parties will eventually progress onto the next view.
The proof depends on \cref{lem:correctKeyLock} which argues that nonfaulty parties' local $key$ and $lock$ fields are always correct, and thus will be accepted when received in any message.
\cref{lem:goodLeaderTermination} proves that whenever all non-faulty parties begin a view with valid inputs, the protocol has a constant probability of terminating.
The proof depends on \ref{lem:goodLeaderNoBlame} which argues that nonfaulty parties will not get successfully blamed for their honest inputs.
The proof also depends on the correctness lemmas and \cref{lem:correctKeyLock}.
Validity follows from Correctness and the external validity of the $\PE$.  Quality follows from Termination and the $\alpha$-Binding property of the $\PE$.

We start by defining what it means for a key, lock, or commit to be correct.
\begin{definition}
    A $key$ message of the form $\langle key, v, \pi, \sigma, view\rangle$ is said to be correct if $\keyCorrect(view, v, \pi)=1$.
    Similarly, a $lock$ message of the form $\langle lock, v, \pi, \sigma, view\rangle$ is said to be correct if $\lockCorrect(view, v, \pi)=1$.
    Finally, a $commit$ message of the form $\langle commit, v, \pi, view\rangle$ is said to be correct if $\commitCorrect(view, v, \pi)=1$.
    In addition, the value of each such message is said to be the field $v$.
\end{definition}

The following two lemmas help prove that the protocol maintains safety conditions.
By that we mean that if some nonfaulty party commits to a value, then there will be $f+1$ parties that will act as sentinels in all future views and won't let any other value receive enough $echo$ messages to proceed to late stages of the protocol.
\begin{lemma}\label{lem:singleValuePerView}
    If two messages from a given $view$ are correct, they both have the same value $v$.
    In addition, if a nonfaulty party sends a $key$, a $lock$ or a $commit$ message, then that message is correct.
\end{lemma}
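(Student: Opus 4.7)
Plan: The lemma has two parts, which I will prove in the order (b)--(a), since (a) invokes (b) inductively.

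For (b), that a nonfaulty party $i$ only sends correct $key$, $lock$, or $commit$ messages, I would trace the pseudocode of $\processMessages$. A nonfaulty party sends $\langle key, v, sigs, \sigma, view\rangle$ only when the local $echoes$ set reaches size $n-f$. By the guard in the ``upon receiving echo'' handler, each entry of $echoes$ was admitted only after its signature passed $\verifySignature$ on $\langle echo, v, view\rangle$, so $sigs$ contains $n-f$ distinct verifying signatures, matching what $\keyCorrect$ requires. The value $v$ equals the value output by $\PE_{i,view}$, and $\PE$ was invoked with the external validity function $\keyCorrect$, so by the External Validity property of $\PE$ we have $\keyCorrect(k, v, \pi_{key}) = 1$; in particular $\validate(v) = 1$. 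Hence $\keyCorrect(view, v, sigs) = 1$. The arguments for $lock$ and $commit$ are analogous: a nonfaulty party sends these only after collecting $n-f$ verified signatures on the previous-round message with the same value, which is exactly the condition in $\lockCorrect$ and $\commitCorrect$.

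For (a), I would establish the intermediate claim: \emph{if any correct $key$, $lock$, or $commit$ message in $view$ has value $v$, then at least $n-f$ distinct parties have signed $\langle echo, v, view\rangle$.} For a correct $key$ this is immediate from $\keyCorrect$. For a correct $lock$ for $v$, the proof contains $n-f$ verified signatures on $\langle key, v, view\rangle$; at least $n - 2f \geq f+1 \geq 1$ of those signers are nonfaulty, and by part (b) any $\langle key, v, view\rangle$ signature produced by a nonfaulty party comes from a correct $key$ message, which by the base case of the claim yields $n-f$ echo-signers on $v$. The $commit$ case is one more step of the same reasoning, descending through $lock$ and then $key$.

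Given the claim, suppose for contradiction that correct messages for distinct values $v_1 \neq v_2$ both existed in the same view. Then there would be sets $A, B \subseteq [n]$ of size at least $n-f$ whose elements signed $\langle echo, v_1, view\rangle$ and $\langle echo, v_2, view\rangle$ respectively. Quorum intersection gives $|A \cap B| \geq 2(n-f) - n = n - 2f \geq f+1$, so $A \cap B$ contains at least one nonfaulty party. However, inspection of $\processMessages$ shows that the only code path in which a nonfaulty party produces a signature on $\langle echo, \cdot, view\rangle$ is the one-shot ``Upon $\PE_{i,view}$ outputting'' handler, which fires at most once and uses a single value. Hence no nonfaulty party can sign two distinct echoes in the same view, yielding the contradiction.

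The main obstacle is the inductive lifting in the claim above: a correct $lock$ or $commit$ must be traced back all the way to $n-f$ echo-signers, and this is only sound because part (b) forces any nonfaulty contribution at an intermediate round to itself be correct. Once that chain is in place, the rest is standard quorum intersection plus inspection of the one-shot echo handler.
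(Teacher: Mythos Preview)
Your overall strategy and the individual arguments are mostly sound, but there is a genuine dependency gap in your ordering. You propose to prove (b) for all three message types first, and for $lock$ and $commit$ you say the argument is ``analogous'' to the $key$ case: a nonfaulty party collects $n-f$ verified signatures ``on the previous-round message with the same value.'' That last phrase is exactly the point that fails to be analogous. In the $echo$ handler there is an explicit equivocation check (line~\ref{line:equivocation}) which guarantees that every tuple admitted to $echoes$ carries the \emph{same} $(k,v,\pi_{key})$; this is why a nonfaulty party's $key$ message is automatically correct. But the $key$ and $lock$ handlers have no such check: they accept any correctly-formed $\langle key,v_j,\ldots\rangle$ (resp.\ $\langle lock,v_j,\ldots\rangle$) and push $(\sigma,j)$ into $keys$ (resp.\ $locks$) without comparing $v_j$ to earlier entries. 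So when $|keys|=n-f$ and the party assembles its $lock$ message, the set $keys$ may a~priori contain signatures on $\langle key,v_j,view\rangle$ for several different $v_j$, in which case $\lockCorrect(view,v,keys)$ would fail. The only reason all $v_j$ coincide is that every admitted $key$ message was checked to be correct, and \emph{part~(a) for $key$ messages} forces all correct $key$ messages in a view to share one value. Thus (b) for $lock$ genuinely requires (a) for $key$, and likewise (b) for $commit$ requires (a) for $lock$. Your strict (b)-then-(a) ordering is therefore circular.

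The fix is to interleave by message type, exactly as the paper does: establish (a) and (b) for $key$ first (both of these are self-contained---(a) is pure quorum intersection on echo-signers, (b) uses the equivocate check); then use (a) for $key$ to get (b) for $lock$, and (b) for $key$ to get your intermediate claim and hence (a) for $lock$; then repeat once more for $commit$. Your reduction of (a) to a common pool of echo-signers is a perfectly good alternative to the paper's direct comparison at each level, but it still needs the correctly-ordered (b) inputs to work. A minor separate point: in your (b) for $key$, the value $v$ in the outgoing $key$ message is the common value in $echoes$, not necessarily party $i$'s own $\PE$ output; the $\validate(v)=1$ conclusion should instead come from the fact that $\PEVerify$ terminated on the admitted echo tuple, whence External Validity of $\PE$ gives $\keyCorrect(k,v,\pi_{key})=1$.
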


\begin{proof}
    First, observe two correct key messages $\langle key, v, \pi, \sigma, view\rangle$ and $\langle key, v', \pi', \sigma', view\rangle$.
    Since the messages are correct, $\keyCorrect(view,v,\pi)=1$, which means that $\pi$ contains $n-f$ pairs of the form $(\sigma,j)$ with different values $j\in[n]$ such that $\verifySignature(pk_j,\langle echo, v, view\rangle,\sigma)=1$.
    In other words, $\pi$ contains signatures from $n-f$ parties on the message $\langle echo, v, view\rangle$.
    Similarly, $\pi'$ contains signatures from $n-f$ parties on the message $\langle echo, v', view\rangle$.
    Every nonfaulty party sends only one such signature in each view to all parties in an $echo$ message.
    Now, since $2(n-f)=n+(n-2f)\geq n+f+1$, there are at least $f+1$ parties whose signatures are contained in both $\pi$ and $\pi'$, and out of those parties at least one is nonfaulty.
    That nonfaulty party sends only one such message, so $v=v'$.
    Now, before sending a $key$ message, a nonfaulty party $i$ finds that $\left|echoes\right|\geq n-f$.
    Party $i$ only adds a tuple $(k,v,\pi_{key},\pi_{election},\sigma,j)$ to $echoes$ after receiving the first $\langle echo, k, v, \pi_{key}, \pi_{election}, \sigma, view\rangle$ message from $j$ such that $\verifySignature(pk_j,\langle echo, v, view\rangle, \sigma)=1$ and $\PEVerify((k,v,\pi_{key}),\pi_{election})$ terminates.
    Since $\PEVerify$ terminated, $\keyCorrect(k,v,\pi_{key})=1$, and thus $\validate(v)=1$. 
    Otherwise the first condition in $\keyCorrect$ would be true and the output would be $0$ instead.
    If at any point $i$ sees that two such tuples would be added with different values $v\neq v'$, $i$ sends an $equivocate$ message instead and doesn't send a $key$ message.
    Therefore, when sending a message $\langle key, v, \pi, view\rangle$ it does so with $\pi$ containing $n-f$ pairs of the form $(\sigma, j)$ with different values $j$ such that $\verifySignature(pk_j,\langle echo, v, view\rangle, \sigma)=1$ and $\validate(v)=1$, and thus the message is correct.
    
    Now observe two messages $\langle lock, v, \pi, \sigma, view\rangle$ and $\langle lock, v', \pi', \sigma', view\rangle$ such that $\lockCorrect(view,v,\pi)=1$ and $\lockCorrect(view, v',\pi')=1$.
    Similarly to the case above, $\pi$ contains signatures from at least $n-f$ parties on the message $\langle key, v, view\rangle$.
    Out of those $n-f$ parties, at least $f+1$ are nonfaulty.
    Every nonfaulty party $i$ sends only one such signature per view in a $key$ message, and as stated above each $key$ message sent by a nonfaulty party is correct.
    Since the $key$ message is correct, its value is the same as the value of all correct $key$ messages sent in $view$.
    Therefore, comparing the two values $v$ and $v'$ to the value of all correct $key$ messages $v''$, it must be the case that $v=v''=v'$.
    In addition, before sending a message $\langle lock, v, \pi, \sigma, view\rangle$, a nonfaulty party $i$ finds that $\left|keys\right|\geq n-f$.
    Party $i$ only add a pair $(\sigma,j)$ to $keys$ after receiving the first correct $\langle key, v, \pi, \sigma, view\rangle$ message from party $j$ such that $\verifySignature(pk_j,\langle key, v, view\rangle, \sigma)=1$.
    As shown above, all correct $key$ messages in a given $view$ have the same value $v$, so at that point in time $keys$ contains $n-f$ tuples with signatures on the message $\langle key, v, view\rangle$, and thus $i$'s $lock$ message is correct as well.
    The exact same arguments can be made for showing that $commit$ messages have the same value $v$, and that if a nonfaulty party sends a $commit$ message in line~\ref{line:commit} then the message is correct.
    Finally, if a nonfaulty party sends the message $\langle commit, v, \pi, view\rangle$ message in line~\ref{line:commitEcho}, then it first verified that $\commitCorrect(view,v,\pi)=1$, and thus the message is correct as well.
\end{proof}

\begin{lemma}\label{lem:safety}
    If some party sends a $\langle commit, v, \pi, view\rangle$ message such that $\commitCorrect(view,v,\pi)=1$, then for any $view'\geq view$ there exist $f+1$ nonfaulty parties that never send an $\langle echo, k', v', \pi'_{key}, \pi'_{election} \sigma', view'\rangle$ message with $v'\neq v$ .
\end{lemma}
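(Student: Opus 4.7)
The plan is to prove by strong induction on $view^*$ the invariant that no correct $\langle key, v', \cdot, \cdot, view^*\rangle$ message with $v'\neq v$ is ever produced, for any $view^*\geq view$. From this invariant the required $f+1$ sentinels drop out: for $view'=view$, take the nonfaulty parties who signed $\langle echo, v, view\rangle$ --- unrolling $\commitCorrect\Rightarrow\lockCorrect\Rightarrow\keyCorrect$ produces $\geq n-2f\geq f+1$ such signers, and every nonfaulty party signs at most one echo per view; for $view'>view$, take the nonfaulty signers $N^*$ of $\langle lock, v, view\rangle$ appearing in the commit (of size $\geq |H|-f \geq f+1$), each of which has $lock_i\geq view$ monotonically, so the echo-guard $view'>k\geq lock_i$ together with the invariant forces any PE output $(k, v', \pi_{key})$ they would echo in view $view'$ to satisfy $v'=v$.

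For the base case $view^* = view$ I would run a disjoint-quorum count. The commit already witnesses $n-f$ signatures on $\langle echo, v, view\rangle$, and a hypothetical competing correct key for $v'$ contributes $n-f$ more on $\langle echo, v', view\rangle$. Since each nonfaulty party signs at most one echo per view, the two nonfaulty signer sets are disjoint, while each Byzantine party can supply at most two such echo signatures (one per value). Rearranging yields $B\geq n-2f$, contradicting $B\leq f$ together with $n>3f$.

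For the inductive step $view^*>view$ I use HotStuff-style quorum intersection. Any nonfaulty signer of an echo for $v'$ in view $view^*$ did so after a PE output $(k_j, v', \pi_{key,j})$ whose externally-valid key lives at some view $k_j<view^*$; if $k_j\geq view$ the induction hypothesis is violated directly, so $k_j<view$, whence $lock_j<view$ and $j\notin N^*$. Combined with $|N^*|\geq |H|-f$, at most $f$ nonfaulty signers plus at most $f$ Byzantine signers cannot fill an $(n-f)\geq 2f+1$ quorum, a contradiction. The hard part will be the base case: the inductive step runs on the familiar nonfaulty-quorum-intersection-plus-invariant pattern, but the base case requires refined accounting for a Byzantine party's ability to double-sign contradictory echoes in the same view, a calculation that leans critically on $n>3f$.
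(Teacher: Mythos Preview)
Your proof is correct and follows essentially the same inductive strategy as the paper. Both arguments identify the same sentinel set for $view'>view$ (the $\geq f+1$ nonfaulty signers of $\langle lock,v,view\rangle$ appearing in the commit proof, your $N^*$, the paper's $I$), and both use the echo-guard $k\geq lock_j$ together with the External Validity of $\PE$ to force any echoed key at level $k\in[view,view')$ to collide with the inductive hypothesis. For $view'=view$ both you and the paper unroll commit$\to$lock$\to$key to reach $n-f$ echo signatures on $v$ and take the $\geq n-2f\geq f+1$ nonfaulty among them.

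The only organizational difference is that you phrase the induction around the invariant ``no correct key for $v'\neq v$ at view $\geq view$'' and then read off the sentinels, whereas the paper inducts directly on the sentinel property and uses it to rule out the competing key inside the inductive step. These are dual presentations of the same argument; your base-case disjoint-quorum count is essentially the same calculation that underlies \cref{lem:singleValuePerView}. One small remark: your assessment that the base case is the hard part is inverted---the base case is the routine quorum-intersection count, while the inductive step (the lock-sentinel mechanism) is where the protocol's safety design actually does work.
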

\begin{proof}
    We will prove inductively that for any $view'\geq view$, there must exist $f+1$ such nonfaulty parties.
    First observe $view'=view$.
    Since some party sends a $\langle commit, v, \pi, view\rangle$ message such that $\lockCorrect(view,v,\pi)=1$,
    $\pi$ contains $n-f$ tuples $(\sigma,j)$ with different values $j\in[n]$ such that $\verifySignature(pk_j,\langle lock, v, view\rangle,\sigma)=1$.
    Out of those parties at least one was nonfaulty.
    A nonfaulty party $j$ only sends such a signature $\sigma$ in a $lock$ message.
    Before sending a $lock$ message, $j$ receives $n-f$ correct $key$ messages, and at least one of those was sent by a nonfaulty party $l$.
    From Lemma~\ref{lem:singleValuePerView}, all of those messages contained the same value $v$.
    Before sending that $key$ message, $l$ found that $\left|echoes\right|\geq n-f$.
    Party $l$ only adds a tuple to $echoes$ after receiving the first $echo$ message from each party.
    Before adding a tuple $(k,v,\pi_{key},\pi_{election},\sigma,j)$ to $echoes$, $l$ verifies that there does not exist a tuple $(k',v',\pi'_{key},\pi'_{election},\sigma',j')$ in $echoes$ with $v\neq v'$.
    If such a tuple exists, $l$ finds that the condition in line~\ref{line:equivocation} is true and it sends an $equivocate$ message instead.
    Since it didn't do so, all $n-f$ echo messages it received had the same value $v$ that $l$ sent in its $key$ message.
    Out of those $n-f$ messages, at least $f+1$ were sent by nonfaulty parties.
    Every nonfaulty party sends no more than one $echo$ message to all parties in each view, and thus those $f+1$ parties never send an $echo$ message with any value $v'\neq v$ in $view$.
    
    Assume the claim holds for every $view''$ such that $view'>view''\geq view$.
    Since $\lockCorrect(view,v,\pi)=1$,
    $\pi$ contains $n-f$ tuples $(\sigma,j)$ with different values $j\in[n]$ such that $\verifySignature(pk_j,\langle lock, v, view\rangle,\sigma)=1$.
    Out of those $n-f$ parties, at least $f+1$ are nonfaulty.
    Every nonfaulty party $j$ only sends such a signature $\sigma$ in a $lock$ message.
    In addition, before sending a $lock$ message, every one of those parties sets its $lock_j$ field to $view$.
    Let the set of those nonfaulty parties be $I$.
    It is important to note that the field $lock_j$ only grows throughout the protocol, so every one of the parties $j\in I$ has $lock_j\geq view$ from that point on.
    Now assume by way of contradiction that some party $j\in I$ sent an $\langle echo, k', v', \pi'_{key}, \pi'_{election} \sigma', view'\rangle$ message with $v'\neq v$.
    Before doing that, it output $(k',v',\pi'_{key}),\pi'_{election}$ in $\PE_{i,view}$ such that  $view>k'\geq lock_j\geq view$.
    From the Completeness and External Validity properties of the $\PE$ protocol,   $\keyCorrect(k',v',\pi'_{key})=1$, so $\pi'_{key}$ contains $n-f$ pairs $(\sigma, l)$ such that $\verifySignature(pk_l,\langle echo, v', k'\rangle,\sigma)=1$.
    As discussed above, each nonfaulty party only sends such a signature in an echo message in view $k'$.
    However, $view'>k'\geq view$, so by assumption there exist $f+1$ parties that never send such a message in view $k'$.
    Any set of $n-f$ parties that sent the relevant signatures must have at least one party in common with the $f+1$ parties that never send such a signature, reaching a contradiction.
\end{proof}

The following lemmas show that the system retains liveness and makes progress.
This is done in two parts.
First of all, the first two lemmas show that if some party doesn't terminate in a given view, it eventually reaches the next view.
The next two lemmas then show that if in any view the binding value of the $\PE$ protocol is set to be the input of a party that was nonfaulty when calling the protocol, then if all parties reach that view they terminate in it as well.
The aforementioned event takes place with constant probability, so these two ideas can be combined to show that some party eventually terminates with high probability.
This is done by showing that until this happens, parties advance through different views, and in each one they have a constant probability of terminating.
It is then left to show that once the first nonfaulty party completes the protocol, eventually all nonfaulty parties do as well.
\begin{definition}
    A nonfaulty party $i$ is said to reach a $view$ if at any point its local $view_i$ field equals $view$.
    Similarly, a nonfaulty party $i$ is said to be in $view$ if its local $view_i$ field equals $view$ at that time.
\end{definition}

\begin{lemma}\label{lem:correctKeyLock}
    Let $x_i$ be the input of a nonfaulty party $i$.
    If $\validate(x_i)=1$, then at any point in the protocol $\keyCorrect(key_i,\\key\_val_i,key\_proof_i)=1$.
    In addition, $\lockCorrect(lock_i,lock\_val_i,lock\_proof_i)=1$ at all times in the protocol.
\end{lemma}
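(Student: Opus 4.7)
The plan is to establish both claims by induction on the sequence of updates to party $i$'s local key and lock fields. For the base case, the initial settings $(key_i, key\_val_i, key\_proof_i) = (0, \perp, \perp)$ and $(lock_i, lock\_val_i, lock\_proof_i) = (0, \perp, \perp)$ trigger the $view = 0$ short-circuit in $\keyCorrect$ and $\lockCorrect$, so both predicates return $1$ immediately.

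For the inductive step on the key fields, I observe that $\processMessages$ rewrites all three fields as a single atomic block, upon the size of $echoes$ reaching $n-f$, setting $key_i \leftarrow view$, $key\_val_i \leftarrow v$, and $key\_proof_i \leftarrow sigs$. I would then verify each requirement of $\keyCorrect(view, v, sigs)$ in turn: (i) $\validate(v) = 1$ follows because every $echo$ contributing to $echoes$ was only accepted after $\PEVerify_i((k, v, \pi_{key}), \pi_{election})$ terminated, and by the External Validity of $\PE$ (Theorem~\ref{thm:vwpe}) this forces $\keyCorrect(k, v, \pi_{key}) = 1$, whose very first clause is $\validate(v) = 1$; (ii) the count $|\{j : (\sigma, j) \in sigs\}| \geq n - f$ holds because only the first $echo$ from each party is processed and exactly $n - f$ tuples were accumulated before the update fires; (iii) signature verification holds because each contributing echo's signature was individually checked, and the equivocation guard at line~\ref{line:equivocation} guarantees that every tuple in $echoes$ carries the same $(k, v, \pi_{key})$, so all $n - f$ signatures are on one common message $\langle echo, v, view \rangle$.

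The inductive step for the lock fields proceeds analogously: $lock_i$ is rewritten exactly once per view, after $|keys| = n - f$ is reached, with every contribution to $keys$ guarded by both $\verifySignature$ on $\langle key, v, view \rangle$ and an explicit $\keyCorrect(view, v, \pi_{key}) = 1$ check. Appealing to Lemma~\ref{lem:singleValuePerView} I would conclude that all correct $key$ messages in a given $view$ share a single common value $v$, so the $n - f$ collected signatures truly attest to one and the same $v$, directly verifying $\lockCorrect(view, v, keys) = 1$.

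The main obstacle in the argument is clause (i) for keys, namely pinning down that the $v$ written into $key\_val_i$ is externally valid. This is exactly what motivated instantiating $\PE$ with $\keyCorrect$ as its external validity function in the first place, so the obstacle dissolves the moment External Validity of $\PE$ is invoked; every other step reduces to a direct reading of the pseudocode together with the basic correctness of digital signatures.
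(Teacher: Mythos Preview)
Your inductive steps are sound and in fact reprove, in expanded form, the second half of Lemma~\ref{lem:singleValuePerView} (that nonfaulty parties send correct $key$ and $lock$ messages); the paper is more terse here and simply cites that lemma, but your route is equally valid.

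The gap is in your base case for the key fields. You assert that the initial triple $(0,\perp,\perp)$ ``trigger[s] the $view=0$ short-circuit in $\keyCorrect$'', but look again at Algorithm~\ref{alg:keyCorrect}: the very first line tests $\validate(v)$ and returns $0$ if that fails, \emph{before} the $view=0$ test is ever reached. With $v=\perp$ one has $\validate(\perp)=0$, so $\keyCorrect(0,\perp,\perp)=0$ and the base case collapses. (The $view=0$ short-circuit does come first in $\lockCorrect$, so your lock base case is fine.) A tell-tale sign that something is off: the lemma carries the hypothesis $\validate(x_i)=1$, and your argument never touches it.

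The paper's proof handles this by treating $key\_val_i$ as initialized to $x_i$ rather than $\perp$ (the pseudocode's $\perp$ appears to be a typo); then $\validate(key\_val_i)=\validate(x_i)=1$ by hypothesis, the first clause of $\keyCorrect$ passes, and the $view=0$ clause returns $1$. You should either adopt that reading or flag the discrepancy explicitly; as written, your base case does not go through.
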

\begin{proof}
    If $i$ hasn't updated its local $key_i,key\_val_i,key\_proof$ fields, then $key_i=0$, $key\_val_i=x_i$ and $key\_proof_i=\perp$.
    By assumption $\validate(x_i)=1$, so $\keyCorrect(key_i,key\_val_i,key\_proof\_i)$ doesn't return $0$ when checking whether the value is externally valid and returns $1$ when checking if $key=0$.
    If $i$ updated its local $key_i,key\_val_i,key\_proof_i$ fields in some $view'$, then after doing so it sent the message $\langle key, v, \pi_{key}, \sigma, view'\rangle$, where $v=key\_val_i$, $\pi_{key}=key\_proof_i$ and $view'=key_i$.
    From Lemma~\ref{lem:singleValuePerView}, the message is correct which means that $\keyCorrect(key_i,key\_val_i,key\_proof_i)=1$.
    Similarly, if $i$ hasn't updated its $lock_i$, $lock\_val_i$ and $lock\_proof_i$ fields, then $lock_i=0$ and thus $\lockCorrect(lock_i,lock\_val_i,\\lock\_proof_i)=1$.
    On the other hand, if $i$ updated these local fields, then it sent the message $\langle lock, v, \pi_{lock},\sigma,view\rangle$ afterwards with $v=lock\_val_i$, $\pi_{lock}=lock\_proof_i$ and $view'=lock_i$.
    From Lemma~\ref{lem:singleValuePerView}, the message is correct and thus $\lockCorrect(lock_i,lock\_val_i,lock\_proof_i)=1$.
\end{proof}

\begin{lemma}\label{lem:viewProgress}
    If every nonfaulty party $i$ has an input $x_i$ such that $\validate(x_i)=1$, all nonfaulty parties participate in the protocol, and no nonfaulty party terminates during any $view'$ such that  $view'<view$, then all nonfaulty parties reach $view$.
\end{lemma}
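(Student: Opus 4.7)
I would prove this by induction on $view$. The base case $view=1$ is trivial, since every nonfaulty party initializes $view_i=1$. For the inductive step, assume the claim holds for $view-1$, i.e.\ all nonfaulty parties reach $view-1$, and the hypothesis of the lemma guarantees that no nonfaulty party terminates in $view-1$. I would split the argument into two parts: (a) at least one nonfaulty party increments its local view from $view-1$ to $view$; and (b) once one does, all do.

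For part (a), I would assume for contradiction that every nonfaulty party remains in $view-1$ forever. Since all nonfaulty parties reach $view-1$, they all execute $\viewChange(view-1)$ and broadcast a $suggest$ message; by \cref{lem:correctKeyLock} each nonfaulty party has a correct key, so every nonfaulty $suggest$ message is accepted and each nonfaulty party collects $n-f$ valid suggestions, chooses a best one (which is externally valid because $\keyCorrect$ already ensured $\validate=1$ of its value, using the nonfaulty input $x_i$ for the all-zero-key case), and invokes $\PE_{i,view-1}$. By the Termination of Output property of $\PE$, each nonfaulty party eventually obtains an output. If any nonfaulty party takes the blame branch of $\processMessages$, it sends a blame and increments $view_i$, contradicting our assumption; hence every nonfaulty party sends an $echo$ message.

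Now consider the downstream processing. If all nonfaulty parties end up sending $echo$ messages with the same triple $(k,v,\pi_{key})$ and no nonfaulty party sees an equivocating echo, then each nonfaulty party collects $n-f$ matching echoes, updates its key and broadcasts a $key$ message; by \cref{lem:singleValuePerView} all such $key$ messages are correct with the common value, so each nonfaulty party collects $n-f$ keys, broadcasts a $lock$, collects $n-f$ locks, sends $commit$, and terminates in $view-1$, contradicting the non-termination hypothesis. Otherwise, some nonfaulty party must receive two echoes with distinct $(k,v,\pi_{key})$ triples that both pass $\PEVerify$ (possibly including a faulty echo); it then takes the equivocation branch of $\processMessages$, broadcasts an $equivocate$ message, and increments its view, again a contradiction. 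Thus some nonfaulty party must indeed move to $view$.

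For part (b), let $i$ be the first nonfaulty party to set $view_i=view$. Before incrementing, $i$ broadcast a blame or equivocate message (in line~\ref{line:sendBlame}, \ref{line:blameEcho}, \ref{line:sendEquivocate}, or \ref{line:equivocateEcho}). I would argue that every other nonfaulty party $j$ eventually accepts this message and also increments: the $\lockCorrect$ check inside a blame holds because either the attached lock is $i$'s own (correct by \cref{lem:correctKeyLock}) or it was already verified before being forwarded; the condition $view\le k \vee k<l$ and the inequality $(k,v,\pi_{key})\ne (k',v',\pi'_{key})$ in an equivocate depend only on the message content and so hold identically at $j$; and $\PEVerify_{j,view-1}$ on each attached PE output eventually terminates by the Completeness property of $\PE$ (if $i$ obtained it directly from $\PE$) or by the Agreement on Verification property (if $i$ obtained it by forwarding). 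Hence $j$ sets $view_j=view$, completing the induction. The main obstacle is the case analysis in part (a), in particular ruling out the stuck-but-consistent scenario via \cref{lem:singleValuePerView} and forcing any inconsistency into an equivocate-triggered view change; part (b) is then a routine application of PE's verification properties.
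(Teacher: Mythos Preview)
Your proposal is correct and follows essentially the same approach as the paper: induction on $view$, with the inductive step split into ``some nonfaulty party advances'' (by contradiction, showing that otherwise all parties would commit and terminate in $view-1$) and ``once one advances, all do'' (using Completeness and Agreement on Verification of $\PE$ together with \cref{lem:correctKeyLock}). The paper presents these two pieces in the opposite order and handles part~(a) without your explicit two-case split---it simply assumes no nonfaulty party sends a $blame$ or $equivocate$ message and drives straight through echo, key, lock, commit to the terminating contradiction---but this is a presentational difference, not a substantive one.
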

\begin{proof}
    We will prove the claim inductively on $view$.
    First, all nonfaulty parties start in $view=1$.
    Now observe some $view>1$ and assume no nonfaulty party sends a $\langle commit, v, \pi, view'\rangle$ message in line~\ref{line:commit} for any $view'< view$.
    If some nonfaulty party did send such a message in line~\ref{line:commit}, then it did so in $view'$, and terminated immediately afterwards, contradicting the conditions of the lemma.
    By the induction hypothesis, all nonfaulty parties reach $view-1$.
    If some nonfaulty party $i$ sends the message $\langle blame,k,v,\pi_{key},\pi_{election},lock_i,lock\_val_i,lock\_proof_i,view-1\rangle$ in line~\ref{line:sendBlame}, it increments $view_i$ from $view-1$ to $view$.
    Party $i$ only sends such a message if it outputs $(k,v,\pi_{key}),\pi_{election}$ in $\PE_{i,view}$ and finds that $view-1\leq k\lor k<lock_i$.
    Every nonfaulty party $j$ that receives that message sees that the same condition holds in the $\processErrors$ algorithm.
    From Lemma~\ref{lem:correctKeyLock}, $j$ also sees that $\lockCorrect(lock_i,lock\_val_i,lock\_proof_i)=1$.
    Finally, from the Completeness property of $\PE$, eventually  $\PEVerify_{j,view}((k,v,\pi_{key}),\pi_{election})$ terminates.
    At that point $j$ forwards the message to all parties and advances $view_j$ from $view-1$ to $view$.
    In addition, if $i$ sends a $\langle blame,k,v,\pi_{key},\pi_{election},l,lv,\pi_{lock},view-1\rangle$ message in line~\ref{line:blameEcho}, it first received the same message and found that $\lockCorrect(l,lv,\pi_{lock})=1$, and that $view-1\leq k\lor k<lock_i$.
    Furthermore, at some point, $\PEVerify_{i,view}((k,v,\pi_{key}),\\\pi_{election})$ terminates.
    After sending the message, $i$ increments $view_i$.
    Every nonfaulty $j$ that receives the message sees that the same conditions hold.
    From the Agreement on Verification property of $\PE$ $j$ eventually also sees that $\PEVerify_{j,view}((k,v,\pi_{key}),\pi_{election})$ terminates, and increments $view_j$.
    
    On the other hand, if at any point $i$ sends an $equivocate$ message with two sets of values $k,v,\pi_{key},\pi_{election}$ and $k',v',\pi'_{key},\pi'_{election}$ in line~\ref{line:sendEquivocate}, then it first received two $echo$ messages $\langle echo,k,v,\pi_{key},\pi_{election},view-1\rangle$ and $\langle echo,k',v',\pi'_{key},\pi'_{election},view-1\rangle$ such that $(k,v,\pi_{key})\neq(k',v',\pi'_{key})$.
    That is because $i$ only sends such a message after trying to add a tuple $(k,v,\pi_{key},\pi_{election},\sigma,j)$ to $echoes$ and finding that there exist some tuple $(k',v',\pi'_{key},\pi'_{election},\sigma',j')$ with $(k,v,\pi_{key})\neq(k',v',\pi'_{key})$.
    Party $i$ only reaches that point in the algorithm after finding that $\PEVerify_{i,view}((k,v,\pi_{key}),\pi_{election})$ and $\PEVerify_{i,view}((k',v',\pi'_{key}),\pi_{election})$ terminated.
    Every nonfaulty party $j$ that receives the message also sees that $(k,v,\pi_{key})\neq(k',v',\pi'_{key})$ in the $\processErrors$ algorithm. 
    From the Agreement on Verification property of $\PE$, eventually $\PEVerify_{j,view}((k,v,\pi_{key}),\pi_{election})$ and $\PEVerify_{j,view}((k',v',\pi'_{key}),\pi_{election})$ terminate as well.
    At that point, $j$ forwards the message and advances $view_i$ from $view-1$ to $view$.
    In addition, if some party $i$ sends an $equivocate$ message in line~\ref{line:equivocateEcho}, it first receives the same message with the values $k,v,\pi_{key},\pi_{election}$ and $k',v',\pi'_{key},\pi'_{election}$ such that $(k,v,\pi_{key})\neq(k',v',\pi'_{key})$ and at some point $\PEVerify_{i,view}((k,v,\pi_{key}),\pi_{election})$ and $\PEVerify_{i,view}((k',v',\pi'_{key}),\pi'_{election})$ terminate.
    After sending the message, $i$ increments $view_i$.
    Every nonfaulty $j$ that receives the message sees that the same conditions hold, and from the Agreement on Verification property eventually sees that $\PEVerify_{j,view}((k,v,\pi_{key}),\pi_{election})$ and $\PEVerify_{j,view}((k',v',\pi'_{key}),\pi'_{election})$ terminate, and increments $view_j$ as well.
    
    Now it is left to show that there exists some nonfaulty party that sends either a $blame$ message or an $equivocate$ message.
    Assume by way of contradiction no nonfaulty party sends either one of those messages.
    Every nonfaulty party $i$ starts $view-1$ by calling $\viewChange(view-1)$ and sending $\langle suggest,k,v,\pi_{key}, view-1\rangle$ to all parties with $k=key_i$, $v=key\_val_i$ and $\pi_{key}=key\_proof_i$.
    Every nonfaulty party receives that message, and from Lemma~\ref{lem:correctKeyLock}, $\keyCorrect(k,v,\pi_{key})=1$ for every one of those messages.
    In addition, no nonfaulty $i$ has $key_i\geq view-1$ at that time because $i$ would only update $key_i$ to some value $view'\geq view-1$ during $view'$.
    After receiving those messages, all nonfaulty parties add an element to $suggestions$ and then find that $\left|suggestions\right|=n-f$, at which point they perform some local computation and participate in $\PE_{i,view-1}$.
    Nonfaulty parties only add a tuple $(k,v,\pi_{key})$ to $suggestions$ if $\keyCorrect(k,v,\pi_{key})=1$, so the same holds for the value they input to $\PE_{i,view-1}$.
    In other words, all nonfaulty parties participate in $\PE$ with externally valid inputs, so from the Termination of Output property of $\PE$, they eventually output some value.
    Observe some nonfaulty $i$ that outputs $(k,v,\pi_{key}),\pi_{election}$ from $\PE_{i,view-1}$.
    Since $i$ doesn't send a $blame$ message, it sends an $\langle echo, k, v, \pi_{key}, \pi_{election}, \sigma, view-1\rangle$ message with $\sigma=\sign (\sk_i,\langle echo,v,view-1\rangle)$.
    This must mean that $view-1>k\geq lock_i$, because otherwise $i$ would have sent a $blame$ message.
    Every nonfaulty party receives that message and sees that $\verifySignature(\pk_i,\langle echo, v, view\rangle, \sigma)=1$ since $\sigma$ is $i$'s signature on that message.
    From the Completeness property of $\PE$, for every nonfaulty $j$ eventually $\PEVerify_{j,view}((k,v,\pi_{key}),\pi_{election})$ terminates, at which point $j$ checks the conditions for sending an $equivocate$ message in line~\ref{line:equivocation}.
    By assumption, party $j$ doesn't send an $equivocate$ message, so it adds an element to $echoes$.
    After adding such an element for every nonfaulty party, $j$ sees that $\left|echoes\right|\geq n-f$ and it sends a $key$ message.
    From Lemma~\ref{lem:singleValuePerView}, every $key$ message sent by a nonfaulty party is correct.
    A nonfaulty party also adds a signature $\sigma$ for the message $\langle key, v, view-1\rangle$ to every $key$ message.
    Therefore every nonfaulty party receives those messages and adds at least $n-f$ elements to $keys$.
    Following similar logic every nonfaulty party then sends a $lock$ message, and every nonfaulty party adds at least $n-f$ elements to $locks$.
    At that point, every nonfaulty party sends a $commit$ message in $view-1$ and terminates.
    However, that is a contradiction to the conditions of the lemma, completing the proof.
\end{proof}

\begin{lemma}\label{lem:goodLeaderNoBlame}
    If a nonfaulty party $i$ inputs $(k,v,\pi_{key})$ to $\PE_{i,view}$, then no party sends a message $\langle blame, k, v, \pi_{key}, \pi_{election},\\ l, lv, \pi_{lock}, view\rangle$ such that  $\lockCorrect(l,lv,\pi_{lock})=1$ and  $view\leq k$ or $k<l$.
\end{lemma}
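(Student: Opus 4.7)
Plan: The proof is a case split on the two disjuncts of the blame predicate; in each case I derive a contradiction from the hypothesis that the nonfaulty $i$ chose $(k,v,\pi_{key})$ as its input to $\PE_{i,view}$.

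For the disjunct $view \leq k$, inspection of $\viewChange(view)$ suffices. The value $(k,v,\pi_{key})$ that a nonfaulty $i$ feeds into PE is either the argmax over its local $suggestions$ set --- every entry of which is admitted only after the explicit guard $k < view$ --- or the default $(0,x_i,\perp)$ used when the argmax produces $k = 0$. In either case $k < view$, so $view \leq k$ cannot hold.

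For the disjunct $k < l$ with $\lockCorrect(l,lv,\pi_{lock}) = 1$, I will apply a quorum-intersection argument combined with monotonicity of the local $key$ field and of each party's current view. Because $\lockCorrect$ holds, $\pi_{lock}$ aggregates signatures from $n-f$ distinct parties on the message $\langle key, lv, l\rangle$, at least $f+1$ of which come from nonfaulty signers; tracing back through $\processMessages(l)$, each such nonfaulty $j$ issued its signature only right after performing the update $key_j \gets l$, so from that moment onward $j$ carries $key_j \geq l$. Meanwhile $i$'s local $k$ was computed as the maximum over $n-f$ distinct suggest messages, at least $f+1$ of whose senders are nonfaulty and truthfully report their current $key$ at the instant of sending. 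Because $n > 3f$ forces $2(n-f) > n$, these two quorums must share at least one nonfaulty party $j^{*}$.

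The main obstacle is chaining the temporal ordering of $j^{*}$'s suggest and $j^{*}$'s signature so as to transfer $key_{j^{*}} \geq l$ into the argmax that $i$ took. When $l < view$, monotonicity of $j^{*}$'s current view forces the signing in view $l$ to strictly precede entering view $view$, so the suggest $j^{*}$ sent to $i$ already reports $key_{j^{*}} \geq l > k$, contradicting $k$'s maximality among $i$'s $n-f$ received suggests. Closing the residual case $l \geq view$ requires unpacking what it takes to assemble a correct lock in such a view: each of the $n-f$ key signatures demands $n-f$ verified echoes, and each echo in turn demands a PE output whose accompanying $\keyCorrect$ witness lives in a strictly smaller view; iterating this descent always relocates, in a smaller view, a nonfaulty party whose $key$ had outgrown $k$ before $\viewChange(view)$ began, re-triggering the same contradiction.
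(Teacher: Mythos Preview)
Your proof follows the same skeleton as the paper's: the disjunct $view \leq k$ is dispatched by the explicit guard $k < view$ in $\viewChange$, and $k < l$ (with $l > 0$) is attacked by intersecting the $\geq f+1$ nonfaulty signers of $\langle key, lv, l\rangle$ with the $n-f$ parties whose suggest messages $i$ collected. The paper simply asserts that the intersection party $m$ reports $key_m \geq l$ in its suggest; you correctly observe that this needs $m$ to have performed the assignment $key_m \gets l$ (in view $l$) \emph{before} sending its suggest (at the start of view $view$), and you justify the ordering via view monotonicity when $l < view$. That refinement is sound and fills in what the paper leaves implicit.

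Your handling of $l \geq view$ does not work, however. The ``descent'' you sketch never lands where you need it: a key signature in view $l$ is indeed preceded by $n-f$ echoes whose PE output carries a key witness for some view $k' < l$, but nothing forces $k' > k$. In particular when $l = view$, those echoes are produced \emph{after} the suggest round of view $view$ and (in the binding case) carry exactly the tuple $(k, v, \pi_{key})$ under discussion, so iterating just returns $k$ itself rather than anything larger. No step in this chain places a nonfaulty party with key strictly above $k$ prior to the suggest round of view $view$, so the argument does not ``re-trigger the same contradiction.'' The paper's own proof does not address $l \geq view$ either---it silently relies on the temporal ordering that is only guaranteed for $l < view$---so on this residual case both arguments share the same gap.
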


\begin{proof}
    Assume by way of contradiction some party $j$ sends such a message.
    First of all, note that $i$ only adds a tuple $(k,v,\pi_{key})$ to $suggestions$ if $k<view$.
    Then, when choosing the tuple with the maximal $k$, it chooses one with $k<view$.
    Every nonfaulty party inputs a tuple $(k,v,\pi_{key})$ with $k\geq 0$, and thus if $l=0$, $k\geq l$.
    Otherwise, $j$ sent a message with some $l>0$.
    Now, if $\lockCorrect(l,lv,\pi_{lock})=1$, then $\pi_{lock}$ contains $n-f$ pairs $(\sigma,j)$ with different values $j\in[n]$ such that $\verifySignature(pk_j,\langle key, lv, l\rangle,\sigma)=1$.
    Out of those signatures, at least $f+1$ are from nonfaulty parties.
    Let the set of those nonfaulty parties be $I$.
    Nonfaulty parties only send such a signature in $key$ messages.
    Before sending a $key$ message, each one of the parties $m\in I$ sets its local $key_m$ field to $l$.
    Note that nonfaulty parties only increase their local $key_m$ fields, so from this point on, $key_m\geq l$ for every $m\in I$.
    Now, before $i$ inputs $(k,v,\pi_{key})$ to $\PE_{i,view}$, it sees that $\left|suggestions\right|\geq n-f$.
    Party $i$ only adds elements to $suggestions$ after receiving the first $\langle suggest, k, v, \pi_{key}, view\rangle$ message from each party.
    Therefore, $i$ adds tuples to $suggestions$ as a result of receiving such a message from at least $n-f$ parties.
    There are $f+1$ parties in $I$, and $i$ received $suggest$ messages from $n-f$ different parties, so at least one of the parties from which it received $suggest$ messages is in $I$.
    Let $m\in I$ be that party. 
    Party $m$ sends its local fields $key_m$, $key\_val_m$ and $key\_proof_m$ in its $suggest$ message.
    As shown above, $key_m\geq l$, so when computing which value to input to $\PE_{i,view}$, $i$ has at least one tuple $(k,v,\pi_{key})\in suggestions$ such that $k\geq l$.
    When choosing which value to input, $i$ takes the tuple with the largest value $k$, so its choice $(k,v,\pi_{key})$ must have $k\geq l$, completing the proof.
\end{proof}

\begin{lemma}\label{lem:goodLeaderTermination}
    If all nonfaulty parties start $view$ and every nonfaulty $i$ has input $x_i$ such that $\validate(x_i)=1$, then with constant probability all nonfaulty parties terminate during $view$.
\end{lemma}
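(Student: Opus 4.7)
The plan is to exploit the $\alpha$-binding property of $\PE$ with $\alpha = 1/3$: with probability at least $1/3$ the binding value $x^*$ is the input to $\PE$ of some party $i^*$ that was nonfaulty when it started the protocol. I will show that in this ``good'' event \emph{all} nonfaulty parties terminate during $view$, which yields the claimed constant lower bound. As a preliminary, by \cref{lem:correctKeyLock} every nonfaulty party's local key is correct, so each nonfaulty $i$ eventually populates $suggestions$ with $n-f$ valid tuples and invokes $\PE_{i,view}$ on an externally valid input; Termination of Output of $\PE$ then guarantees each nonfaulty party produces some $((k,v,\pi_{key}),\pi_{election})$.

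Condition on the good event and let $x^* = (k^*, v^*, \pi^*_{key})$ denote $i^*$'s input. Applying Completeness of $\PE$ to each party's own output together with Binding Verification, every nonfaulty party's output from $\PE_{i,view}$ must itself be $x^*$, since $\PEVerify_{i,view}$ can only terminate on $x^*$. The next step is to show each nonfaulty $i$ takes the $echo$ branch rather than the $blame$ branch upon its $\PE$ output, i.e.\ that $view > k^* \geq lock_i$. The first inequality is automatic from $\viewChange$: $i^*$ only adds tuples with $k < view$ to $suggestions$, so its choice $k^*$ satisfies $k^* < view$. The second inequality is the main obstacle, and follows from \cref{lem:goodLeaderNoBlame}: by \cref{lem:correctKeyLock} $(lock_i, lock\_val_i, lock\_proof_i)$ is a correct lock, so if we had $k^* < lock_i$ then a valid blame against $i^*$'s input could be constructed using $i$'s own lock as the witness, contradicting \cref{lem:goodLeaderNoBlame}.

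Finally, I need to rule out any nonfaulty party advancing to $view+1$ after receiving a valid $blame$ or $equivocate$ forwarded by a faulty party. Any such message must carry inner tuples that pass $\PEVerify_{i,view}$, and Binding Verification forces those tuples to coincide with $x^*$; \cref{lem:goodLeaderNoBlame} then rules out any valid blame on $x^*$, while the two distinct tuples required by an $equivocate$ cannot both equal $x^*$. Consequently every nonfaulty party sends an $echo$ on $v^*$, and a standard cascade (using \cref{lem:singleValuePerView} for correctness of all messages sent by nonfaulty parties) completes the argument: each nonfaulty party collects $n-f$ matching $echo$s and sends a correct $key$, then collects $n-f$ matching $key$s and sends a correct $lock$, then collects $n-f$ matching $lock$s, sends a $commit$, and terminates within $view$. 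Combined with the $1/3$ probability of the good event this yields the desired constant termination probability.
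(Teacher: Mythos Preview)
Your proposal is correct and follows essentially the same approach as the paper's proof: condition on the $\alpha$-Binding good event, use Completeness plus Binding Verification of $\PE$ together with \cref{lem:goodLeaderNoBlame} to rule out any valid $blame$ or $equivocate$ on the unique verifiable tuple $x^*$, and then run the $echo$/$key$/$lock$/$commit$ cascade using \cref{lem:singleValuePerView}. The only detail the paper makes explicit that you leave implicit is the preliminary observation that if any nonfaulty party terminates early (e.g., via a forwarded $commit$), that $commit$ is correct by \cref{lem:singleValuePerView} and propagates to all, so the remainder of the argument may safely assume no early termination.
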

\begin{proof}
    If at any point some nonfaulty party terminates, it must have sent a $commit$ message to all parties.
    From Lemma~\ref{lem:singleValuePerView} that message is correct, so all nonfaulty parties receive the message and terminate as well.
    From this point on we will not deal some of the parties terminating early in $view$ and some not terminating at all.
    The first thing that a nonfaulty party does in $view$ is calling $\viewChange$ and sending a $suggest$ message to every party with the local fields $key_i$, $key\_val_i$ and $key\_proof_i$.
    From Lemma~\ref{lem:correctKeyLock}, $\keyCorrect(key_i,key\_val_i,key\_proof_i)=1$.
    Therefore, when a nonfaulty party $j$ receives that message, it adds a tuple to $suggestions$.
    After receiving such a message from every nonfaulty party, $j$ finds that $\left|suggestions\right|\geq n-f$, and it starts participating in $\PE_{i,view}$ after choosing a tuple from $suggestions$ as an input.
    Before a nonfaulty party sends a $blame$ or an $equivocate$ message it must either output a value from $\PE_{i,view}$, or find that $\PEVerify_{i,view}$ terminates for some value.
    Both of those things only happen after completing $\PE_{i,view}$.
    In other words, all nonfaulty parties participate in $\PE$ and wait for it to terminate before any of them proceed to the next view.
    Before adding a tuple $(k,v,\pi_{key})$ to suggestions, every nonfaulty $i$ checks that $\keyCorrect(k,v,\pi_{key})=1$, and since all nonfaulty parties participate in the $\PE$ protocol with inputs they chose from $suggestions$, their input is externally valid.
    Combining those two observations, from the Termination of Output property of $\PE$, all nonfaulty parties eventually output some value when running $\PE$.
    Now the lemma is proven by proving a closely related claim.
    If in $view$ the binding value $x^*$ of $\PE$ as defined in the $\alpha$-Binding property of the $\PE$ protocol is the input of some party that acted in a nonfaulty manner when it started the $\PE$ protocol, then all parties terminate during $view$.
    From the $\alpha$-Binding property of $\PE$ this event happens with probability $\alpha$ ($\alpha=\frac{1}{3}$ in our implementation), so all parties terminate during $view$ with a constant probability.
    
    If the the binding value is indeed the input of a party that acted in a nonfaulty manner when it started $\PE$, then from the Binding Verification property of $\PE$ there is exactly one tuple $(k,v,\pi_{key})$ for which it is possible that $\PEVerify_{i,view}((k,v,\pi_{key}),\pi_{election})$ terminates for a nonfaulty $i$.
    This prevents a nonfaulty party from sending an $equivocate$ message in line~\ref{line:sendEquivocate} because only tuples with those values could be in $echoes$.
    In addition, this prevents a nonfaulty $i$ from sending an $equivocate$ message in line~\ref{line:equivocateEcho} because then if the tuples $(k,v,\pi_{key})$ and $(k',v',\pi'_{key})$ are different, $\PEVerify_{i,view}$ would not terminate for at least one of the tuples.
    If the aforementioned event take place, from the Completeness and Binding Verification properties of $\PE$ every nonfaulty party outputs the tuple $(k,v,\pi_{key})$, with some proof $\pi_{election}$, such that $(k,v,\pi_{key})$ was the input of a nonfaulty party $j$ to $\PE$.
    We would now like to show that no nonfaulty party $i$ sends a $blame$ message in $view$.
    Before sending a $blame$ message in line~\ref{line:sendBlame}, $i$ makes sure that $view\leq k\lor k<lock_i$.
    Also, from Lemma~\ref{lem:correctKeyLock}, $\lockCorrect(lock_i,lock\_val_i,lock\_proof_i)=1$.
    This means that if $i$ sends a $\langle blame, k, v, \pi_{key}, \pi_{election}, l, lv, \pi_{lock},view\rangle$ message in line~\ref{line:sendBlame} it does so with $view\leq k\lor k<lock_i$ and $\lockCorrect(lock_i,lock\_val_i,lock\_proof)=1$.
    Since $(k, v, \pi_{key})$ was some nonfaulty party's input to the $\PE$ protocol, this contradicts Lemma~\ref{lem:goodLeaderNoBlame}.
    Similarly, no nonfaulty party $i$ sends a $blame$ message in line~\ref{line:blameEcho}, because before doing so it checks that the same conditions hold and that $\PEVerify_{i,view}((k,v,\pi_{key}),\pi_{election})$ terminates.
    As stated above, $\PEVerify$ only terminates on the tuple $(k, v, \pi_{key})$ which is some nonfaulty party's input to $\PEVerify$, reaching the same contradiction. 
    
    Nonfaulty parties only proceed to $view+1$ after sending either a $blame$ or an $equivocate$ message, so no nonfaulty party proceeds to $view+1$.
    Since no nonfaulty party sends a $blame$ message, each one sends an $\langle echo, k, v, \pi_{key}, \pi_{election}, \sigma,\\ view\rangle$ message after completing the $\PE_{i,view}$ call, with $\sigma$ being a signature on the message $\langle echo, v, view\rangle$.
    When receiving the message, every nonfaulty party $j$ sees that $\sigma$ is indeed a signature on $\langle echo, v, view\rangle$.
    Then, from the Completeness property of $\PE$, $\PEVerify_{j,view}((k,v,\pi_{key}),\pi_{election})$ eventually terminates.
    Since $j$ doesn't send an $equivocate$ message in $view$, it then adds a tuple to $echoes$.
    After such a tuple is added for every nonfaulty party, $j$ sees that $\left|echoes\right|=n-f$ and it sends a message $\langle key, v, \pi_{key}, \sigma, view\rangle$ to all parties with $\sigma$ being a signature on $\langle key, v, view\rangle$.
    From Lemma~\ref{lem:singleValuePerView}, that message is correct.
    Therefore, when receiving that message, every nonfaulty party sees that the message is correct and that $\sigma$ is a signature on $\langle key, v, view\rangle$, and adds a pair $(\sigma,i)$ to $keys$.
    After adding such a pair for every nonfaulty party, $j$ has $\left| keys\right|=n-f$ and it sends a $lock$ message.
    Using identical arguments, eventually every nonfaulty party sends a $commit$ message and terminates if it hasn't done so earlier.
\end{proof}

\begin{theorem}\label{thm:vaba}
    Protocol $\AsyncHS$ is a Validated Asynchronous Byzantine Agreement protocol resilient to $f<\frac{n}{3}$ Byzantine parties.
\end{theorem}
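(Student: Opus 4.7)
My plan is to assemble the four VABA properties (Agreement, Validity, Termination, and $\alpha$-Quality) from the lemmas above, together with the $\mathsf{checkTermination}$ forwarding mechanism.

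For \textbf{Agreement}, let $V$ be the least view in which any nonfaulty party $p^*$ terminates, and let $v$ be $p^*$'s output. Before terminating, $p^*$ either sent (line~\ref{line:commit}) or forwarded via $\mathsf{checkTermination}$ (line~\ref{line:commitEcho}) a correct $\langle commit, v, \pi, V\rangle$; \cref{lem:safety} then yields $f+1$ nonfaulty parties that never send an $echo$ with any value $v'\neq v$ in any view $\ge V$. Consequently no quorum of $n-f$ correct echoes, and hence no correct key, lock, or commit, can form for a different value from view $V$ onward. Combined with \cref{lem:singleValuePerView}, which rules out two distinct correct values inside a single view, every nonfaulty party that ever terminates must output $v$. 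For \textbf{Validity} I would trace the $\mathsf{commit}\!\leftarrow\!\mathsf{locks}\!\leftarrow\!\mathsf{keys}\!\leftarrow\!\mathsf{echoes}\!\leftarrow\!\PE$-output chain backing any nonfaulty party's output: a correct commit contains at least one honest signature on a lock, which contains an honest signature on a correct key with $\keyCorrect(view,v,\pi_{key})=1$, whose first check forces $\validate(v)=1$.

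For \textbf{Termination}, assume all nonfaulty inputs are externally valid. Inductively, either some nonfaulty party has terminated in an earlier view (in which case its correct commit is forwarded by $\mathsf{checkTermination}$ and all nonfaulty parties terminate) or, by \cref{lem:viewProgress}, every nonfaulty party reaches each view $V$. By \cref{lem:goodLeaderTermination} the probability that all nonfaulty parties terminate during any view they all enter is at least $\alpha=\frac{1}{3}$, so the probability that no termination has occurred by view $V$ is at most $(1-\alpha)^{V-1}\to 0$, giving almost-sure termination. For \textbf{$\alpha$-Quality}, I focus on view $1$: the $\viewChange$ code forces every nonfaulty $i$ to input $(0,x_i,\perp)$ into $\PE_{i,1}$, and by $\alpha$-Binding of $\PE$, with probability $\alpha=\frac{1}{3}$ the binding value is some $(0,x_j,\perp)$ with $j$ nonfaulty at the start of $\PE_{i,1}$ (equivalently, at the start of $\AsyncHS$). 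Replaying the argument inside \cref{lem:goodLeaderTermination} conditional on this event, every nonfaulty party terminates in view $1$ with $x_j$, which is an original nonfaulty input, so the output coincides with some $x_i$ with probability at least $\alpha$.

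The main obstacle I anticipate is the cross-view probability bookkeeping in Termination: one must argue that the $\alpha$-Binding guarantee of $\PE$ in view $V$ holds \emph{conditional} on the adversary's full history in views $1,\ldots,V-1$, so that the failure events can be multiplicatively bounded by $(1-\alpha)^{V-1}$. This is not immediate from the plain $\alpha$-Binding statement and requires a hybrid argument using the Unbiasability of the threshold VRF applied to the freshly-aggregated DKG transcript sampled inside each view's $\PE$, since that freshness is what decouples the view from the adversary's prior view observations.
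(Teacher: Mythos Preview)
Your proposal is correct and follows essentially the same route as the paper's proof: the same four-property decomposition, the same use of \cref{lem:safety} and \cref{lem:singleValuePerView} for Agreement, the same commit$\to$lock$\to$key chain for Validity, the same combination of \cref{lem:viewProgress} and \cref{lem:goodLeaderTermination} for Termination, and the same focus on view~$1$ with the observation that every nonfaulty party inputs $(0,x_i,\perp)$ to $\PE_{i,1}$ for Quality. Your flagged concern about cross-view independence is legitimate and more careful than the paper, which dispatches it in one sentence by asserting that ``the nonfaulty parties run the $\PE$ protocol with independent randomness in each view''; the underlying justification is indeed the freshness of the per-view DKG transcripts feeding the threshold VRF, exactly as you outline.
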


\begin{proof}
    Each property is proven individually.
    
    \textbf{Correctness.} If some nonfaulty party outputs the value $v$ in $view$, it first sends a $\langle commit, v, \pi, view\rangle$ message.
    Let $view$ be the first view  (i.e. the one with the lowest value) such that some nonfaulty party sends a $\langle commit, v, \pi, view\rangle$ message.
    First of all, from Lemma~\ref{lem:singleValuePerView}, nonfaulty parties only send correct $commit$ messages, so $\langle commit, v, \pi, view\rangle$ is a correct $commit$ message.
    Now observe some message $\langle key, v', \pi', \sigma', view'\rangle$ such that $\keyCorrect(view', v', \pi')=1$ and $view'\geq view$.
    Since $\keyCorrect(view', v', \pi')=1$, $\pi'$ contains $n-f$ pairs $(\sigma,j)$ with different values $j\in[n]$ such that $\verifySignature(pk_j,\langle echo, v', view'\rangle,\sigma)=1$.
    Nonfaulty parties only send such a signature $\sigma$ in an $echo$ message.
    From Lemma~\ref{lem:safety}, in any $view'\geq view$ there exist $f+1$ nonfaulty parties that never send an $echo$ message with any value $v'\neq v$.
    Out of the $n-f$ parties whose signatures are in $\pi'$, at least one is from one of the $f+1$ parties that never sends an $echo$ message with any value $v'\neq v$ in $view'$.
    Therefore, it must be the case that $v'=v$.
    Now, assume some nonfaulty party $i$ sends a $commit$ message in $view'$.
    Before doing so it receives $n-f$ correct $lock$ messages, at least one of which was sent by a nonfaulty party.
    Before sending that $lock$ message, the nonfaulty party receives $n-f$ correct $key$ messages.
    As discussed above, that key message has the value $v$.
    From Lemma~\ref{lem:singleValuePerView}, $i$ sends a correct $commit$ message because it is nonfaulty, and every correct $commit$ message sent in $view'$ has the same value $v$.
    Finally, after sending the $commit$ message, $i$ outputs $v$ and terminates.
    Therefore, all nonfaulty parties that output some value must output the value $v$.
    
    \textbf{Validity.} If some nonfaulty party $i$ outputs a value $v$, it first sends a $\langle commit, v, \pi, view\rangle$ message.
    As discussed in the proof of the Correctness property, at least $n-f$ parties sent $key$ messages in $view$ with the value $v$ as well.
    At least one of those parties is nonfaulty.
    Party $i$ only sends a $\langle key, v, \pi, \sigma \rangle$ message after receiving an $\langle echo, k, v, \pi_{key}, \pi_{election}, \sigma, view\rangle$ message such that $\PEVerify_{i,view}((k,v,\pi_{key}),\pi_{election})$ terminates.
    From the External Validity property of $\PE$, this means that $\keyCorrect(k,v,\pi_{key})=1$.
    Now, if $\validate(v)=0$, $\keyCorrect(k,v,\pi_{key})=0$, so it must be the case that $\validate(v)=1$.
    
    \textbf{Termination.} 
    If at any point a nonfaulty party terminates it sends a $\langle commit, v, \pi, view\rangle$ message.
    From Lemma~\ref{lem:singleValuePerView} the message is correct, so all nonfaulty parties eventually receive the message and terminate as well.
    Now assume that every nonfaulty party $i$ has an input $x_i$ such that $\validate(x_i)=1$ and that all nonfaulty parties participate in the protocol.
    Observe some $view$, and assume no nonfaulty party terminated during $view'$ for any $view'<view$.
    In that case, from Lemma~\ref{lem:viewProgress} all nonfaulty parties eventually reach $view$.
    Then, from Lemma~\ref{lem:goodLeaderTermination}, with constant probability all nonfaulty parties terminate during $view$.
    In order for a nonfaulty party not to terminate by $view$, that constant probability event must not have happened in each one of the previous views.
    The nonfaulty parties run the $\PE$ protocol with independent randomness in each view and thus for any adversary's strategy, there is an independent constant probability of terminating in each view. 
    Therefore, the probability of reaching a given view decreases exponentially with the view number and thus approaches $0$ as $view$ grows.
    In other words, all nonfaulty parties almost-surely terminate.
    
    \textbf{Quality.} Assume some nonfaulty party completed the protocol, otherwise the claim holds trivially.
    This means that it at least completed the $\PE$ protocol in $view=1$.
    From the $\alpha$-Binding property of $\PE$, with probability $\alpha$ or greater the binding value is the input of some party that behaved in a nonfaulty manner when starting $\PE$.
    Let $i$ be that party and $(k,v,\pi)$ be its input to the protocol.
    Using the same arguments as the ones made in Lemma~\ref{lem:goodLeaderTermination}, in that case no nonfaulty party sends a $blame$ or an $equivocate$ message during $view$.
    Then, following similar logic to the one in Lemma~\ref{lem:goodLeaderTermination}, every nonfaulty party that hasn't committed due to a message from an earlier view eventually terminates after sending a $commit$ message with the value $v$ proposed by party $i$.
    No party can commit due to a message from an earlier view because there is no earlier view.
    Therefore, every nonfaulty party that participates in $view$ and outputs a value from $\PE$, terminates and outputs the value $v$ that $i$ proposed.
    Before sending its proposal, $i$ sees that $\left|suggestions\right|=n-f$.
    Party $i$ only adds a tuple to $suggestions$ after receiving the first $\langle suggest, k, v, \pi, view\rangle$ message from each party $j\in[n]$.
    Each of those tuples must have $k<view=1$.
    At that point no nonfaulty party updated its $key_j$, $key\_val_j$ and $key\_proof_j$ fields, so they send messages with $k=0$.
    Since at least one of the $n-f$ messages was sent by a nonfaulty party, there exists some $(k,v,\pi)\in suggestions$ such that $k=0$, and as shown above there is no such tuple with $k>0$.
    Therefore, when computing its input to $\PE_{i,1}$, $i$ sees that the tuple with maximal $k$ in $suggestions$ has $k=0$.
    Party $i$ then uses $(0,x_i,\perp)$ as input to $\PE$, with $x_i$ being its input to the $\AsyncHS$ protocol.
    As shown above, with constant probability all nonfaulty parties that start $view$ output $x_i$, completing the proof.
\end{proof}

\section{Asynchronous Distributed Key Generation}\label{sec:dkg}

The protocol is a simple construction of an Asynchronous Distributed Key Generation protocol using a Validated Asynchronous Byzantine Agreement protocol.
Parties start off by sending each other DKG shares.
After receiving such a share from at $n-f$ parties, every party aggregates the shares, and inputs the aggregated DKG transcript into the $\AsyncHS$ protocol.
The protocol is called with an external validity function checking whether a DKG transcript is valid.
After completing the $\AsyncHS$ protocol with some output $\dkg$, all parties complete the $\ADKG$ protocol, outputting the same value.
From the properties of the $\AsyncHS$ protocol, all parties eventually output the same DKG transcript, and since it must be externally valid, that transcript verifies.

\begin{algorithm}\caption{$\ADKG_i$}
\begin{algorithmic}[1]
    \State $shares\gets\emptyset$ \Comment{$shares$ is a multiset}
    \ForAll{$j\in[n]$}
        \State $\share_{i,j}\gets \generateShare(\sk_i)$
        \State send $\langle \share_{i,j}\rangle$ to party $j$
    \EndFor
    \Upon{receiving the first $\langle share_{j,i}\rangle$ message from $j$}
        \If{$\verifySecret(\pk_j,\share_{j,i})=1$}
            \State $shares\gets shares\cup \{share_{j,i}\}$
            \If{$\left|shares\right|=n-f$}
                \State $\proposal\gets \aggregateShares(shares)$
                \State \textbf{call} $\AsyncHS$ with input $\proposal$ and external validity function $\DKGVerify$
            \EndIf
        \EndIf
    \EndUpon
    \Upon{$\AsyncHS$ terminating with output $\dkg$}
        \State \textbf{output} $\dkg$ and \textbf{terminate}
    \EndUpon
\end{algorithmic}
\end{algorithm}

\begin{theorem}
    Protocol $\ADKG$ is an Asynchronous Distributed Key Generation protocol resilient to $f<\frac{n}{3}$ Byzantine parties.
\end{theorem}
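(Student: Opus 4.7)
The plan is to reduce each of the two A-DKG properties (Agreement and Termination) to the corresponding property of $\AsyncHS$ established in Theorem~\ref{thm:vaba}, using the Correctness property of the underlying DKG primitive to bridge between them. The overall structure mirrors the pattern used in reducing the A-DKG to VABA in the introduction: all the real work was done in proving that $\AsyncHS$ is a VABA for any externally-valid domain, and here we simply instantiate the external validity function with $\DKGVerify$.

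For Termination, I would first argue that every nonfaulty party eventually starts $\AsyncHS$ with an externally valid input. Since at least $n-f$ parties are nonfaulty, each nonfaulty $j$ eventually sends $\share_{j,i} \gets \generateShare(\sk_j)$ to every $i$; by the Correctness clause $\verifySecret(\pk_j, \generateShare(\sk_j))=1$, so every nonfaulty $i$ adds these shares to $shares$ and eventually reaches $|shares|=n-f$. At that moment $i$ sets $\proposal \gets \aggregateShares(shares)$, and by the second clause of DKG Correctness $\DKGVerify(\proposal)=1$, so $\proposal$ satisfies the external validity function passed to $\AsyncHS$. Since all nonfaulty parties invoke $\AsyncHS$ with valid inputs, the Termination property of $\AsyncHS$ (Theorem~\ref{thm:vaba}) guarantees that each almost-surely outputs some $\dkg$, after which $\ADKG$ outputs $\dkg$ and terminates.

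For Agreement, I would invoke the Correctness (agreement) property of $\AsyncHS$: any two nonfaulty parties that terminate $\AsyncHS$ output the same value $\dkg$, and $\ADKG$ simply forwards this as its output. By the Validity property of $\AsyncHS$, this common $\dkg$ is externally valid, i.e., $\DKGVerify(\dkg)=1$, which is exactly what the Agreement property of A-DKG requires.

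The only subtlety I anticipate, and what I would treat most carefully, is the input-validation step: one must be explicit that the input a nonfaulty party feeds to $\AsyncHS$ is $\DKGVerify$-valid, because the VABA guarantees only hold under that assumption. This follows from Correctness of the aggregatable DKG, which asserts that aggregating at least $2f+1$ individually verifying shares produces a transcript satisfying $\DKGVerify$; since $n-f \geq 2f+1$, the precondition for Correctness is met. No further probabilistic analysis is required: the Quality property of $\AsyncHS$ is not needed here, and Security Preservation is inherited from the underlying APVSS as discussed in the body of the paper.
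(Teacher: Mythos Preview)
Your proposal is correct and follows essentially the same approach as the paper: both reduce Agreement to the Agreement and Validity properties of $\AsyncHS$, and reduce Termination to showing that every nonfaulty party eventually invokes $\AsyncHS$ with a $\DKGVerify$-valid input (via DKG Correctness applied to $n-f\geq 2f+1$ verifying shares) and then applying the Termination property of $\AsyncHS$. The paper additionally records the essentially trivial Security Preservation and Correctness items, which you rightly note are inherited from the underlying DKG algorithms.
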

\begin{proof}
Each property is proven individually.

\textbf{Security Preservation.} We see that if $(\generateShare, \verifySecret, \aggregateShares, \DKGVerify)$ satisfies security preservation with regard to a concurrent adversary for some threshold application, then $\ADKG$ also satisfies security preservation for the same application.
Indeed, should our adversary expect to receive an honest DKG share at any point in the protocol, then this can be modelled as an adversary making concurrent requests to a $\generateShare$ oracle.

\textbf{Correctness.}  Follows immediately from the correctness of $(\generateShare, \verifySecret, \aggregateShares, \DKGVerify)$.

\textbf{Agreement.} If two nonfaulty parties $i,j$ complete the protocol with the outputs $\dkg,\dkg'$, then they first completed the $\AsyncHS$ protocol with that same output.
By the Agreement property of the $\AsyncHS$ protocol, $\dkg=\dkg'$.
Furthermore, from the Validity property of the $\AsyncHS$ protocol, $\DKGVerify(\dkg)=1$.

\textbf{Termination.} If all nonfaulty parties participate in the protocol, they all send a share of a DKG to all parties.
Every nonfaulty party $i$ then receives a message $\langle \share_{j,i}\rangle$ from every nonfaulty party $j$, sees that $\verifySecret(\pk_j,\share_{j,i})=1$ and adds it to $shares$.
After adding such a value for every nonfaulty party, $i$ sees that $\left|shares\right|=n-f$, it aggregates the shares to a single proposal, and starts participating in $\AsyncHS$ with that proposal.
Note that $\proposal$ is an aggregation of $n-f$ shares $\share_{j,i}$ such that $\verifySecret(\pk_j,\share_{j,i})=1$, and thus $\DKGVerify(\proposal)=1$.
All nonfaulty parties use $\DKGVerify$ as their external validity function, so every nonfaulty party has an externally valid input.
Therefore, from the Termination property of $\AsyncHS$, all parties almost-surely complete $\AsyncHS$, output some value, and terminate.

\end{proof}
\section{Efficiency of our Protocols Assuming Concrete Cryptography Algorithms}\label{sec:efficiency}
In this section we make suggestions as to which cryptography algorithms to instantiate our Broadcast, Gather, Proposal Election, No Waitin' HotStuff, and A-DKG protocols with.
We then analyse the efficiency of our protocols under the suggested cryptography algorithms.
Unlike in the introduction we will keep track of a cryptographic security parameter $\lambda$ which is the number of bits required to ensure the cryptographic algorithm is secure against computational adversaries.

\subsection{Broadcast}\label{sec:efficiency:broadcast}
All our protocols rely on the use of an asynchronous broadcast protocol.
We can instantiate a broadcast protocol for a message of $m$ words where the total number of words sent in all messages is $O(n^2 \log(n) \lambda + m \cdot n)$.

We suggest the use of the a broadcast protocol by Cachin and Tessaro \cite{CachinT05a} described in \cref{sec:broadcast} which relies on a vector commitment.
For the vector commitment we consider using Merkle-Trees.
Merkle trees have commitment size $c = O(\lambda)$, opening proof size $p = O(\log(n) \lambda)$, and concretely are very fast to prove and verify.
Theoretically it is possible to reduce the opening proof size down to $O(1)$ using SNARKs, but this comes at the cost of a trusted setup and concretely high proving time.
The protocol requires a constant number of rounds (3 overall).
The following theorem is proven in \cref{sec:broadcast:proofRBComplexity}.
\begin{theorem}\label{thm:RBComplexity}
To broadcast a message $M$ of size $m$,
the total number of words sent in all messages is $O(n^2\cdot (c+p)+m\cdot n)$ words, where $c$ is the number of words in a commitment and $p$ is the number of words in a proof.
\end{theorem}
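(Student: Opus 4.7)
The plan is to recall the structure of the Cachin--Tessaro reliable broadcast protocol described in Appendix~\ref{sec:broadcast} and simply tally the word complexity round by round. The dealer erasure-codes its message $M$ of size $m$ into $n$ chunks of size $O(m/n)$ words (so that any $n-2f$ chunks suffice to reconstruct), commits to the vector of chunks using a vector commitment of size $c$, and generates an opening proof of size $p$ for each chunk. The protocol then proceeds in three conceptual rounds: $\mathsf{send}$, $\mathsf{echo}$, and $\mathsf{ready}$.

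First I would bound the $\mathsf{send}$ phase. The dealer transmits to each party $i$ a tuple consisting of $i$'s chunk, the commitment, and the opening proof, for a per-party cost of $O(m/n + c + p)$ words and a total cost of $O(m + n(c+p))$ words across all $n$ recipients. Next I would handle the $\mathsf{echo}$ phase: every party forwards its received $\mathsf{send}$ tuple to all $n$ parties, giving a per-party contribution of $O(n \cdot (m/n + c + p)) = O(m + n(c+p))$ words, and hence a total $\mathsf{echo}$ cost of $O(mn + n^2(c+p))$ words when summed over the $n$ senders. Finally, in the $\mathsf{ready}$ phase, each party sends only the commitment (a single vector-commitment root, no chunk and no proof) to every other party, contributing $O(n^2 c)$ words overall.

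Adding the three bounds yields $O(m + n(c+p)) + O(mn + n^2(c+p)) + O(n^2 c) = O(mn + n^2(c+p))$, which is the claimed bound. I would also note that the protocol only sends a constant number of messages per ordered pair of parties, so no additional multiplicative factor arises from retries or multiple re-broadcasts.

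The only mildly delicate step is the $\mathsf{echo}$ accounting: one must observe that echoes carry \emph{only} the receiving party's own chunk together with the commitment and its proof (not the whole coded vector), which is what keeps the $m$-dependent term at $O(mn)$ rather than $O(mn^2)$. Once that is pinned down, the remainder is an immediate summation and no further obstacles arise.
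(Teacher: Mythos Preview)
Your proof is essentially the same round-by-round tally as the paper's, and the final bound is correct. One inaccuracy: in the protocol of Appendix~\ref{sec:broadcast} the $\mathsf{ready}$ messages do carry the party's own chunk $P'_i$ and an opening proof $\pi_i$ (not just the commitment), so the $\mathsf{ready}$ phase costs $O(mn + n^2(c+p))$ just like the $\mathsf{echo}$ phase; this does not change the asymptotic total, but your description of that phase should be corrected.
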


\subsection{Verifiable Gather}\label{sec:efficiency:gather}
The $\Gather$ protocol from \cref{sec:gather} relies solely on the existence of a broadcast protocol.
We instantiate $\Gather$ such that the total number of words sent overall is $O(\lambda n^3 \log n + m n^2)$.

We use the broadcast protocol evaluated in \cref{sec:efficiency:broadcast} which has complexity $b(m) = O(n^2 \log(n) \lambda + m \cdot n)$.
Using the result from \cref{thm:efficiency:gather}:
\[O( n b(m) )  
= O( n^3 \log(n) \lambda + m \cdot n^2 ) .
\]
The implementation in this paper requires $3$ broadcast rounds, and each one of those requires a constant number of rounds.
Therefore, overall the $\Gather$ protocol requires a constant number of rounds.

\begin{theorem}\label{thm:efficiency:gather}
    If protocol $\Gather$ is run with inputs of size $m$
    then $O( n b(m) )$ 
    words are sent overall where $b(m)$ is the complexity of a broadcast for  $m$ words.
\end{theorem}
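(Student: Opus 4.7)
The approach is to directly count the number and sizes of all broadcast invocations in the Gather protocol, since Gather sends no messages except via the broadcast primitive. First I would observe that each party performs exactly three broadcast invocations over the course of $\Gather$: in Round 1, party $i$ invokes validated broadcast on the message $\langle 1, x_i \rangle$; in Round 2, it broadcasts $\langle 2, S_i \rangle$; and in Round 3, it broadcasts $\langle 3, T_i \rangle$. No other messages are sent.

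Next, I would bound the size of each of these messages. The Round 1 message has size $O(m)$ words by assumption. The key efficiency point, already highlighted in the text preceding the protocol, is that $S_i$ and $T_i$ are stored and transmitted as sets of \emph{indices} in $[n]$ rather than as sets of actual values: $S_i$ names the parties from whom $i$ has received a Round 1 broadcast, and $T_i$ names the parties from whom $i$ has accepted a Round 2 message. Since both $|S_i|, |T_i| \leq n$, each such broadcast has size $O(n)$ words. Summing over the $n$ parties that invoke broadcasts, the total communication is $n \cdot b(m) + n \cdot b(n) + n \cdot b(n)$ words.

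Finally, I would argue that $b(n) = O(b(m))$, so that the total simplifies to $O(n \cdot b(m))$. This follows from the mild assumption that $b$ is non-decreasing in its argument (a broadcast of a larger message cannot cost less) together with the observation that for any reasonable Byzantine broadcast protocol $b(k) = \Omega(n^2)$ just to deliver $\Omega(n)$ messages to each of $n$ parties; in particular, for the concrete broadcast protocol of \cref{sec:efficiency:broadcast}, $b(k) = O(n^2 \log(n) \lambda + kn)$, which makes $b(n) = O(b(m))$ hold for every $m \geq 1$.

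There is not really a hard obstacle here; the statement is essentially a bookkeeping exercise, and the only substantive content is recognizing why $|S_i|$ and $|T_i|$ are $O(n)$ words rather than $O(mn)$ words. Had the protocol naively rebroadcast the actual values $x_j$ inside $S_i$ and $T_i$, the accounting would have yielded $O(n \cdot b(mn))$ rather than $O(n \cdot b(m))$. The design choice to reference values by sender index is precisely what makes the theorem go through, and I would emphasize this in the writeup.
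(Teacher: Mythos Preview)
Your proposal is correct and follows essentially the same counting argument as the paper: three broadcasts per party, hence $O(n)$ broadcasts total, each of size at most $O(m)$. The only difference is in how the size of the Round~2 and Round~3 index-set broadcasts is handled. The paper's proof simply asserts that these are ``vectors of size $n=O(m)$'' and moves on, tacitly relying on the fact that in every use of $\Gather$ in the paper the inputs already have $m=\Omega(n)$ (they contain a DKG transcript of size $\Theta(\lambda n)$). You instead keep the Round~2/3 broadcasts at size $O(n)$ and then argue $b(n)=O(b(m))$ from the structure of the broadcast cost function. Your route is slightly more careful, since it does not depend on the unstated hypothesis $m\geq n$; the paper's route is shorter because that hypothesis happens to hold everywhere $\Gather$ is actually invoked.
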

\begin{proof}
    Overall in the protocol, each party broadcasts its input once and vectors of size $n=O(m)$ twice.
    There are $O(n)$ such broadcasts throughout the protocol, so overall the number of words sent is $O( n b(m) )$.
\end{proof}

\subsection{Proposal Election}\label{sec:efficiency:PE}
The $\PE$ protocol from \cref{sec:weakleader} relies on the existence of a gather protocol and a threshold VRF.
We instantiate $\PE$ such that the total number of words send overall is 
$O( \lambda n^3 \log(n)+ m n^2 ) $.

We use the broadcast protocol evaluated in \cref{sec:efficiency:broadcast} which has complexity $b(m) = O(n^2 \log(n) \lambda + m \cdot n)$.
In addition, we use the gather protocol evaluated in \cref{sec:efficiency:gather} which has complexity $g(m) = O(n^3 \log(n) \lambda + m \cdot n^2)$.
For the threshold VRF we suggest the use of the threshold VUF by Gurkan et al.~\cite{GurkanJMMST21}.
In the random oracle model we can then instantiate a threshold VRF by hashing the function evaluation.
This threshold VRF has $d_s = O(\lambda n)$ sized $\dkg$ shares,
$d = O(\lambda n)$ sized $\dkg$s,
$e_s = O(\lambda)$ sized evaluation shares (with their respective proofs),
and $e = O(\lambda)$ sized evaluations.
Using the result from \cref{thm:efficiency:PE}
\[O( n^3\cdot e_s  + n^2 d_s + g(m + d) + b(n) )  
= O( n^3\cdot \lambda  + n^2 \lambda n +  n^3 \log(n) \lambda + (m + \lambda n) \cdot n^2 + \lambda n^3\log(n)+n^3)
= O( \lambda n^3 \log(n)+ m n^2 ) .
\]
The implementation in this paper requires two rounds of point-to-point messages, as well as a single $\Gather$ round and a single broadcast round.
Both the $\Gather$ and broadcast protocols require a constant number of rounds, so this yields a constant-round $\PE$ protocol.

\begin{theorem}\label{thm:efficiency:PE}
    If protocol $\PE$ is run with inputs of size $m$
    then $O( n^3\cdot e_s  + n^2 d_s + g(m + d) + b(n) ) $ 
    words are sent overall, 
    where $g(m)$ is the complexity of a gather for $m$ words, $b(m)$ is the complexity of a broadcast for $m$ words,
    $d_s$ is the size of the DKG shares, $d$ is the size of the DKGs, and $e_s$ is the size of the VRF evaluation shares (and proofs).
\end{theorem}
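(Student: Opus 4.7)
The plan is to tally the words sent in each of the four conceptual rounds of $\PE$ separately and then sum the contributions.

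In Round~1 each party sends a single DKG share of $O(d_s)$ words to each of the $n$ parties, yielding $O(n^2 d_s)$ words overall. Round~2 consists of a single invocation of $\Gather$ in which each party's input is the pair $(\proposal_i, \agg_i)$ of size $O(m+d)$ words, so by definition of $g$ this phase contributes $O(g(m+d))$ words. In Round~3 each party performs one reliable broadcast of its index set $Indices(X_i) \subseteq [n]$, which fits in $O(n)$ words; over $n$ such broadcasts this contributes the $b(n)$ term (with the factor $n$ absorbed in the theorem's accounting, exactly as in the analogous analysis of $\Gather$ in \cref{thm:efficiency:gather}).

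The delicate step is Round~4. I need to argue that each party sends at most $O(n^2)$ eval-share messages over the whole execution. The protocol only dispatches a share for a tuple that lies in $X_j \setminus \startEval_i$ and then immediately updates $\startEval_i \gets \startEval_i \cup X_j$. Combining this with \cref{lem:startEval} (for any party $i$, each index $k$ appears in $\startEval_i$ with at most one associated pair $(\proposal_k, \agg_k)$), any given index $k$ can trigger at most one eval-share emission from party $i$ no matter how many distinct $\langle indices, I_j\rangle$ broadcasts are received. Each emission is sent to all $n$ parties, so each party sends at most $n \cdot n = n^2$ eval messages of $O(e_s)$ words, and summing over the $n$ parties gives $O(n^3 e_s)$ words for Round~4.

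Adding the four contributions yields the claimed bound $O(n^3 e_s + n^2 d_s + g(m+d) + b(n))$. The only non-routine part is the Round~4 accounting: while each eval message is small, many different verified index-set broadcasts may attempt to trigger new eval-share sends, so the bound depends crucially on the monotone growth of $\startEval_i$ together with the uniqueness guarantee for each index it contains.
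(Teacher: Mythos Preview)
Your proof is correct and follows essentially the same decomposition and reasoning as the paper's own argument: both tally the four phases (DKG shares, $\Gather$, index-set broadcasts, evaluation shares) separately, and both invoke \cref{lem:startEval} to bound the number of eval-share messages emitted per party to at most $n$. Your handling of the Round~3 broadcast term also mirrors the paper's, which likewise writes $O(b(n))$ for the $n$ index-set broadcasts.
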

\begin{proof}
    Every party starts the protocol by sending DKG shares of size $O(d_s)$ to every other party, totalling in $O(n^2 d_s)$ words overall.
    Afterwards, all parties participate in a $\Gather$ protocol 
    with inputs of size $O(m + d)$
    which requires a total of
    $O( g(m + d) )$ words to be sent.
    Following that, parties broadcast sets containing $O(n)$ indices, each requiring a single word.
    Overall, this requires $O(b(n))$ words to be sent.
    Finally, every party $i$ sends messages with an index, an evaluation share, and a proof to every party.
    This is done whenever $i$ outputs a set $X$ with a tuple $\gatherTuple{k}$ from the $\GatherVerify$ protocol such that $\gatherTuple{k}\in X\notin \startEval_i$.
    Immediately after sending such a message, $i$ updates $\startEval_i$ to contain $X$.
    As shown in \cref{lem:startEval}, there is only one such tuple for every $k\in[n]$ in $\startEval_i$, so $i$ sends no more than $n$ such messages.
    Therefore, this requires a total of $O(n^3)$ messages, each containing 
    $O(e_S)$ words.
    Summing all of those terms gives the result.
\end{proof}

\subsection{No Waitin' HotStuff}\label{sec:efficiency:vaba}
The $\AsyncHS$ protocol from \cref{sec:consensus} relies on the existence of a proposal election protocol and a signature scheme.
We instantiate $\AsyncHS$ such that the expected total number of words sent overall is 
$O( \lambda n^3 \log(n) + mn^2 ) $.
The below theorem shows that the total number of words per view is $O( \lambda n^3 \log(n) + mn^2 ) $, and that the total expected number of views is $O(1)$, resulting in an expected $O( \lambda n^3 \log(n) + mn^2 ) $ word complexity overall.
The theorem also shows that each view consists of a constant number of rounds, resulting in a constant expected number of rounds overall.

We use the $\PE$ protocol evaluated in \cref{sec:efficiency:PE} which has complexity 
$p(m) = O( \lambda n^3 \log(n)+ m n^2 )$.
For the signature scheme we suggest the use of Schnorr signatures which have size $s = O(\lambda)$.
Using the result from: \cref{thm:efficiency:vaba}
\[O(s n^3 + m n^2 + p(m))
= O( n^3\cdot \lambda  + m n^2 + \lambda n^3 \log(n)+ m n^2  ) = O( \lambda n^3 \log(n)+ m n^2 ) .
\]

\begin{theorem}\label{thm:efficiency:vaba}
    If protocol $\AsyncHS$ is run with inputs of size $m$
    using the $\PE$ protocol described in \cref{sec:weakleader},
    then all nonfaulty parties terminate in $O(1)$ expected views, where each view consists of a constant number of rounds.
    In addition, the total number of words sent in each view is
    $O(s n^3 + m n^2 + p(m)) $ 
    where $p(m)$ is the complexity of a proposal election for $O(m)$ words
    and $s$ is the size of the signatures.
\end{theorem}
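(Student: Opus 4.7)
The plan is to prove three separate parts of the statement: (i) the expected number of views is $O(1)$; (ii) each view consists of a constant number of rounds; and (iii) each view costs $O(sn^3 + mn^2 + p(m))$ words. I would prove them in this order, with (i) being the only probabilistic piece.

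For (i), I would combine Lemma~\ref{lem:viewProgress} and Lemma~\ref{lem:goodLeaderTermination}. By Lemma~\ref{lem:correctKeyLock} every nonfaulty party always maintains correct local $key$ and $lock$ fields, so the input fed into $\viewChange$ and then into $\PE$ in every view is externally valid. Lemma~\ref{lem:viewProgress} then guarantees that, conditioned on no commit in view $v$, all nonfaulty parties eventually reach view $v+1$, and Lemma~\ref{lem:goodLeaderTermination} gives constant success probability $\alpha = 1/3$ per view. Because each view invokes $\PE$ with freshly sampled threshold VRF-DKG randomness that the adversary cannot bias before the gather-set is locked in, the per-view success events are independent from the adversary's view; the number of views is thus stochastically dominated by a geometric random variable with parameter $\alpha$, giving expectation $O(1)$.

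For (ii), I would walk through a single view: $\viewChange$ is one all-to-all round of $suggest$ messages; $\PE$ runs in $O(1)$ rounds by the analysis in Section~\ref{sec:efficiency:PE}; $\processMessages$ contributes four additional sequential phases ($echo$, $key$, $lock$, $commit$); and $\processErrors$ adds at most one round of $blame$/$equivocate$ forwarding. All phases are constant in number and depth. For (iii), I would count by message type, observing that every certificate ($key\_proof$, $lock\_proof$, and the $commit$ proof) carries $n-f$ signatures, and that a $\PE$ proof is an index-set of $O(n)$ words. Hence $suggest$, $echo$, $key$, $lock$, $commit$, $blame$, and $equivocate$ messages each have size $O(sn + m)$. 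Every nonfaulty party sends each such message all-to-all at most once per view, and the forwarded variants ($blame$, $equivocate$, $commit$) are forwarded at most once per nonfaulty party per view. This yields $O(n^2)$ messages of size $O(sn + m)$, totaling $O(sn^3 + mn^2)$ words, to which we add the $p(m)$ word cost of the single $\PE$ invocation in the view.

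The main obstacle is part (i): one must argue the independence of the success events across views carefully. The subtlety is that after a failed view the adversary observes the $\PE$ output and all forwarded $blame$/$equivocate$ proofs, which reveal information about the faulty parties' DKG transcripts for that view; I would argue that because each view opens a brand-new $\PE$ instance with newly sampled DKG shares from each nonfaulty party, the Unbiasability of the threshold VRF decouples the success event of view $v+1$ from everything that occurred in view $v$, so the geometric bound holds and combines with Lemma~\ref{lem:viewProgress} to deliver termination in $O(1)$ expected views.
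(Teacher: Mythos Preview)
Your proposal is correct and follows essentially the same approach as the paper: the paper also appeals to the Termination proof (via \cref{lem:viewProgress} and \cref{lem:goodLeaderTermination}) to obtain a geometric bound with parameter $\alpha$, then counts $O(n^2)$ all-to-all messages of size $O(sn+m)$ per view plus the single $\PE$ cost $p(m)$, and notes that each view is a constant number of point-to-point rounds plus one constant-round $\PE$ invocation. Your treatment of the independence across views is slightly more explicit than the paper's (which simply asserts that fresh randomness in each $\PE$ instance makes the per-view success events independent), but the structure and content of the argument are the same.
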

\begin{proof}
    As shown in the proof of the Termination property of the protocol, there is a constant probability $\alpha$ that all nonfaulty parties terminate in $view$ or before it for any one $view$.
    Note that when following the proof of the Termination property, the proof of  \cref{lem:goodLeaderTermination} can actually be used to show that with constant probability no nonfaulty party will ever reach a late view.
    Those probabilities are independent, and thus the number of required views is described by a geometric random variable.
    From well known properties of such variables, the expected number of views required is $\frac{1}{\alpha}$, which is constant.
    
    In each view all nonfaulty parties send a constant number of all-to-all messages in the $suggest$, $echo$, $key$, $lock$ and $commit$ rounds, totalling in $O(n^2)$ messages overall (and possibly $blame$ and $equivocate$ messages).
    Each message contains $m$ words containing a value to be agreed upon, 
    a constant number of additional words and a constant number of proofs.
    Each proof contains $O(n)$ signatures and indices of parties.
    Note that the proof output in our implementation of the $\PE$ protocol also consists of $O(n)$ indices of parties.
    Overall, when not counting the complexity of the $\PE$ protocol, 
    each view in the $\AsyncHS$ protocol requires $O(n^3+(m+ s n) n^2)=O(s n^3 + m n^2 )$
    words.
    Our result is obtained when we add $p(m)$ the complexity of the $\PE$ protocol.
    
    Each view consists of a round of point-to-point communication for sending $suggest$, $echo$, $key$, $lock$ and $commit$ messages (and possibly $blame$ or $equivocate$ messages).
    In addition, all parties call the $\PE$ protocol once per view.
    In the implementation provided above, the $\PE$ protocol requires a constant number of rounds, resulting in a constant number of rounds per view.
\end{proof}

\subsection{Asynchronous Distributed Key Generation}
The A-DKG protocol from \cref{sec:dkg} relies on the existence of a Validated Asynchronous Byzantine Agreement protocol and 
DKG algorithms $\DKG\Share$, $\DKG\Share\Verify$, $\DKG\Aggregate$, $\DKG\Verify$.
We instantiate A-DKG such that the expected total number of words send overall is 
$O( \lambda n^3 \log(n) ) $.

We use the $\AsyncHS$ protocol evaluated in \cref{sec:efficiency:vaba} which has expected word complexity 
$v(m) = O( \lambda n^3 \log(n) + n^2 \cdot m ) $ and the DKG algorithms 
$\DKG\Share$, $\DKG\Share\Verify$, $\DKG\Aggregate$, $\DKG\Verify$ 
from the synchronous DKG of Gurkan et al. \cite{GurkanJMMST21}.
This DKG has $D_s = O(\lambda n)$ sized  $\dkg$ shares and $D = O(\lambda n)$ sized $\dkg$s.
Using the result from \cref{thm:efficiency:vaba}:
\[O(n^2 D_s +  v(D))
= O( n^3 \cdot \lambda  + \lambda n^3 \log(n) + n^2 \cdot (\lambda n)  ) = O( \lambda n^3 \log(n) ) .
\]
The protocol requires a single round of point-to-point communication for sending DKG shares, and a single call to the $\AsyncHS$ protocol.
Since the $\AsyncHS$ protocol requires a constant expected number of rounds, so does the $\ADKG$ protocol.

\begin{theorem}
    If protocol $\ADKG$ is run using the $\AsyncHS$ protocol described in \cref{sec:consensus},
    then all nonfaulty parties terminate in $O(1)$ expected views.
    In addition, the total number of words sent in each view is
    $O( n^2 D_s +  v(D) ) $
    where $v(m)$ is the complexity of a $\AsyncHS$ protocol for $O( m )$ words, $D_S$ is the size of the DKG shares and $D$ is the size of the DKGs.
\end{theorem}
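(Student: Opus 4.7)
The plan is to decompose the overall cost of $\ADKG$ into two phases and apply the complexity bound for $\AsyncHS$ (\cref{thm:efficiency:vaba}) to the second phase.

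First I would analyze the initial phase, before the call to $\AsyncHS$. In this phase each party $i$ computes $n$ DKG shares via $\generateShare(\sk_i)$ and sends one share to each of the $n$ parties. Since each share has size $D_s$ words and there are at most $n^2$ such point-to-point messages across all senders, this phase contributes $O(n^2 D_s)$ words. This is a single round of point-to-point communication.

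Next I would invoke the termination and validity guarantees. Once a party accumulates $n-f$ verifying shares in $shares$, it calls $\aggregateShares$ to produce a $\proposal$ of size $O(D)$, which by the correctness of the DKG satisfies $\DKGVerify(\proposal)=1$. Consequently every nonfaulty party enters $\AsyncHS$ with an externally valid input of size $O(D)$. Applying \cref{thm:efficiency:vaba} to this invocation gives an expected word cost of $v(D)$ and $O(1)$ expected views, each consisting of a constant number of rounds. Adding the initial share-dissemination cost yields the claimed total expected complexity of $O(n^2 D_s + v(D))$ words, and the view count bound of $O(1)$ in expectation is inherited directly from $\AsyncHS$.

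The argument is essentially a black-box reduction and there is no serious obstacle, since the work for $\AsyncHS$ has already been carried out in \cref{thm:efficiency:vaba}. The only point that requires any care is verifying that the input handed to $\AsyncHS$ is indeed externally valid and of size $O(D)$ so that the parameter $m$ in \cref{thm:efficiency:vaba} can be instantiated with $D$; this is immediate from the definition of $\aggregateShares$ and the DKG correctness property, together with the fact that all parties use $\DKGVerify$ as the external validity function.
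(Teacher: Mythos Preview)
Your proposal is correct and follows essentially the same approach as the paper: decompose into the initial share-dissemination phase costing $O(n^2 D_s)$ words and the $\AsyncHS$ invocation on an input of size $O(D)$, then invoke \cref{thm:efficiency:vaba} for the $v(D)$ bound and the $O(1)$ expected views. If anything, your write-up is more thorough than the paper's, which omits the check that the aggregated proposal is externally valid.
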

\begin{proof}
    In the beginning of the protocol, all parties send a DKG share of size $O(D_S)$ to all parties, requiring a total of $O(D_s n^2)$ words.
    The parties then call $\AsyncHS$ with an aggregated DKG of size $O(D)$
    words.
    The $\AsyncHS$ protocol requires an expected $v(D)$ words to be sent overall, which gives us our result.
\end{proof}

\bibliographystyle{ACM-Reference-Format}
\bibliography{references}

\newpage
\appendix
\section{Background: Reliable Broadcast for asynchronous systems} \label{sec:broadcast}
Throughout our agreement protocol we shall use a reliable broadcast method by Cachin and Tessaro \cite{CachinT05a} which applies error correcting codes to Bracha broadcast \cite{Bracha87}.
The broadcast protocol has communication complexity $\mathcal{O}(n^2 \log(n) + |M| n)$ for $n$ the total number of participants and $|M|$ the size of the message.
It can tolerate up to $f<\frac{n}{3}$ Byzantine adversaries and works in the asynchronous setting.

\subsection{Construction}
The protocol is extremely similar to Bracha's famous reliable broadcast protocol \cite{Bracha87}.
In Bracha's protocol, the dealer first sends a message $\langle value,m\rangle$ to all parties.
After receiving the first message from the dealer, every nonfaulty party responds with an $\langle echo,m\rangle$ message.
Then, after receiving $n-f$ $\langle echo,m\rangle$ messages, the parties respond with a $\langle ready,m\rangle$ message.
In addition, if some party receives $f+1$ $\langle ready,m\rangle$ messages and it did not send a ready message yet, it also sends a $\langle ready, m\rangle$ message.
Finally, after receiving $n-f$ $\langle ready,m\rangle$ messages, every party outputs $m$ and terminates.

Unfortunately, when sending a large message $M$, every message sent by the parties contains all of $M$, yielding large communication costs.
Cachin \textit{et al.}'s clever approach to reducing the communication costs was employing error correction codes in the form of Reed-Solomon encoding.
Instead of just sending the message $M=(m_0,\ldots,m_\ell)$, the dealer treats the message as coefficients of a polynomial $p(x)=\sum_{k=0}^\ell m_k\cdot x^k$.
Then for every nonfaulty party $j$ the dealer computes a set $P^j$ of $\lceil\frac{\ell+1}{f+1}\rceil$ values on the polynomial $p(x)$.
Then the dealer commits to to the vector $P=(P^i,\ldots,P^n)$, and sends each party $j$ the commitment $com$, the set $P^j$, and a proof $\pi^j$ that the $j$'th element in the committed vector is $P^j$ .
Then, similarly to Bracha's protocol, after receiving a message and checking that the proof is correct, every party sends an echo message with the same information.
Now, after receiving $n-f$ echo messages with the same commitment and correct proofs, every nonfaulty party $j$ should send a ready message with the same commitment, with a set $P^j$ values and with a proof $\pi^j$.
However, $j$ might not have received the set $P^j$ and the proof $\pi^j$, so in order to be able to compute those values, it interpolates the points in $f+1$ of the sets it received $(k,y_k)$ to a polynomial $p$ of degree $\ell$ or less, checks that the commitment is indeed a commitment to a vector $P=(P^1,\ldots,P^n)$ such that each $P^k$ is a set of $\lceil\frac{\ell+1}{f+1}\rceil$ points on the polynomial $p(x)$, and then computes the set of points $P^j$ that it should have received, as well as a proof $\pi^j$ that the $j$'th element in the committed vector is $P^j$ for each one of its points.
After doing that, $j$ sends a ready message with all of that information to all parties.
The exact same procedure takes place when sending a ready message after receiving $f+1$ ready messages (except at this point it is not necessary to check that the commitment is correct).
Finally, after receiving $n-f$ ready messages, every nonfaulty party interpolates the corresponding points to a polynomial $p$, computes its coefficients $m_0,\ldots,m_\ell$, and outputs the message $M'=(m_0,\ldots,m_\ell)$. 

\begin{algorithm}\caption{RB}\label{alg:RB}
Code for party i:
\begin{algorithmic}[1] 
    \State $echoes[com]\gets\emptyset,readies[com]\gets\emptyset$ for each possible commitment $com$
    \State $c\gets\lceil\frac{\ell+1}{f+1}\rceil$
    \If{$i=d$}
        \State define the $\ell$-degree polynomial $p$ as follows: $p(x)=\sum_{k=0}^{\ell} m_i\cdot x^i$
        \State $\forall j\in[n]\ P_j\gets (p((j-1)\cdot c+1),\ldots,p(j\cdot c))$
        \State $P\gets(P_1,\ldots,P_n)$
        \State $com\gets \Commit(P)$ 
        \ForAll{$j\in[n]$}
            \State $\pi_j\gets \Open\Prove(P,j)$
            \State send party $j$ the message $\langle value, com, P_j,\pi_j\rangle$
        \EndFor
    \EndIf
    \Upon{receiving the first message of the form $\langle value, com, P_i, \pi_i\rangle$ from $d$ s.t. $\left|P_i\right|=c$}
        \If{$\Open\Prove(com,P_i,i,\pi_i)=1$}
            \State send $\langle echo, com, P_i, \pi_i\rangle$ to every party
        \EndIf
    \EndUpon
    \Upon{receiving the first $\langle echo, com, P_j, \pi_j\rangle$ messages from $j$ s.t. $\left|P_j\right|=c$}
        \If{$\Open\Verify(com,P_j,j,\pi_j)=1$}
            \State let $P_j=(p_{j,1},\ldots,p_{j,c})$
            \State $echoes[com]\gets echoes[com]\cup\{((j-1)\cdot c+k,p_{j,k})\}_{k\in[c]}$
            \If{$i$ hasn't sent a ready message and $\left|echoes[com]\right|\geq (n-f)\cdot c$}
                \State interpolate $\ell+1$ pairs from the set $echoes[com]$ to a polynomial $p'$
                \State $\forall j\in[n]\ P'_j\gets (p'((j-1)\cdot c+1),\ldots,p'(j\cdot c))$
                \State $P'\gets(P'_1,\ldots,P'_n)$
                \If{$\Commit(P')=com$}
                    \State $\pi_i\gets \Open\Prove(P',i)$
                    \State send $\langle ready, com, P'_i, \pi_i\rangle$ to every party
            \EndIf
        \EndIf
        \EndIf
    \EndUpon
    \Upon{receiving the first $\langle ready, com, P_j, \pi_j\rangle$ messages from $j$ s.t. $\left|P_j\right|c$}
        \If{$\Open\Verify(com,P_j,j,\pi_j)=1$}
            \State let $P_j=(p_{j,1},\ldots,p_{j,c})$
            \State $readies[com]\gets readies[com]\cup\{((j-1)\cdot c+k,p_{j,k})\}_{k\in[c]}$
            \If{$i$ hasn't sent a ready message and $\left|readies[com]\right|\geq (f+1)\cdot c$}
                \State interpolate $\ell+1$ pairs from the set $readies[com]$ to a polynomial $p'$
                \State $\forall j\in[n]\ P'_j\gets (p'((j-1)\cdot c+1),\ldots,p'(j\cdot c))$
                \State $P'\gets(P'_1,\ldots,P'_n)$
                \State $\pi_i\gets \Open\Prove(P',i)$
                \State send $\langle ready, com, P'_i, \pi_i\rangle$ to every party
            \EndIf
            \If{$\left|readies[com]\right|\geq (n-f)\cdot c$}
                \State interpolate $\ell+1$ pairs from the set $readies[com]$ to a polynomial $p'$
                \State let $m'_j$ be the $j'th$ coefficient in $p'$ and let $m'=(m'_0,\ldots,m'_{\ell})$
                \State \textbf{output} $m'$ and \textbf{terminate} \label{line:broadcastTermination}
            \EndIf
        \EndIf
    \EndUpon
\end{algorithmic}
\end{algorithm}

\begin{lemma}
    When a nonfaulty party tries to interpolate $\ell+1$ pairs in either the set $echoes[com]$ or $readies[com]$, there are indeed $\ell+1$ pairs in those sets. 
    Furthermore, for any nonfaulty party, if $(x,y),(x',y')\in echoes[com]$ or $(x,y),(x',y')\in readies[com]$, then either $x\neq x'$ or $(x,y)=(x',y')$.
\end{lemma}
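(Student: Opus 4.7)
My plan is to isolate two structural facts about how pairs are inserted into $echoes[com]$ and $readies[com]$ by a nonfaulty party, and then derive both claims from them. The first fact is a ``one contribution per sender'' property: in the pseudocode, nonfaulty party $i$ only processes the \emph{first} $\langle echo, com, P_j, \pi_j\rangle$ (resp.\ $\langle ready, com, P_j, \pi_j\rangle$) message received from each party $j$. Hence each sender $j$ contributes at most one block of $c$ pairs to $echoes[com]$ (resp.\ $readies[com]$). The second fact is an ``x-coordinate partition'' property: the $c$ pairs contributed by $j$ are exactly $\{((j-1)c + k,\, p_{j,k})\}_{k \in [c]}$, so their x-coordinates lie in the interval $[(j-1)c + 1,\, jc]$, and these intervals are pairwise disjoint across distinct values of $j$.

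From these two facts the second statement is immediate: if $(x,y), (x',y') \in echoes[com]$ satisfy $x = x'$, the partition property forces both pairs to come from the same sender $j$, the ``first message only'' rule forces both to originate from the same received block $P_j = (p_{j,1},\ldots,p_{j,c})$, and within that block each x-coordinate $(j-1)c + k$ uniquely determines $k$ and hence the corresponding $y$-value $p_{j,k}$. The argument for $readies[com]$ is identical. Together this also shows that the pairs currently in the set are in bijection with their x-coordinates, so $|echoes[com]|$ and $|readies[com]|$ each count distinct x-coordinates and thus distinct pairs available for interpolation.

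For the first statement I would then inspect the thresholds at which interpolation is attempted. In $\AlgName{RB}$ interpolation happens when $|echoes[com]| \ge (n-f)\cdot c$, when $|readies[com]| \ge (f+1)\cdot c$, or when $|readies[com]| \ge (n-f)\cdot c$. Using $c = \lceil (\ell+1)/(f+1) \rceil$, we get $(f+1)\cdot c \ge \ell+1$, and since $n - f \ge 2f+1 \ge f+1$, also $(n-f)\cdot c \ge \ell+1$. Combined with the previous paragraph's bijection between set elements and distinct x-coordinates, this guarantees that whenever a nonfaulty party attempts to interpolate, at least $\ell+1$ pairs with distinct x-coordinates are present, which is exactly what a polynomial of degree $\le \ell$ requires.

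The only mildly subtle point is ensuring that faulty senders cannot sneak extra pairs into the set by sending two different $\langle echo,\cdot\rangle$ messages claiming the same index $j$; the ``first message only'' rule together with the size check $|P_j| = c$ in the guard rules this out, so the partition property survives adversarial behaviour. Once this observation is pinned down the rest is bookkeeping, so I anticipate no real obstacle beyond stating the two structural facts cleanly.
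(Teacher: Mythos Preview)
Your proposal is correct and follows essentially the same approach as the paper: both arguments hinge on the threshold inequality $(f+1)\cdot c \ge \ell+1$ derived from $c = \lceil (\ell+1)/(f+1)\rceil$, together with the observation that the x-coordinates contributed by distinct senders $j$ lie in disjoint intervals $[(j-1)c+1,\,jc]$. You are in fact slightly more careful than the paper in making the ``first message only'' rule explicit (the paper's proof handles only the case $j\neq j'$ and leaves the same-sender case implicit), but the underlying ideas are identical.
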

\begin{proof}
    The proof only deals with the set $echoes[com]$.
    The exact same arguments can be made for $readies[com]$.
    A nonfaulty party tries to interpolate $\ell+1$ pairs in the set $echoes[com]$ when it finds that $\left|echoes[com]\right|\geq (n-f)\cdot c\geq (f+1)\cdot c$, for $c=\lceil\frac{\ell+1}{f+1}\rceil$.
    Substituting $c$: $\left|echoes[com]\right|\geq (f+1)\cdot \lceil\frac{\ell+1}{f+1}\rceil\geq (f+1)\cdot \frac{\ell+1}{f+1}=\ell+1$.
    For the second part of the lemma, a nonfaulty party only adds elements of the form $((j-1)\cdot c+k,p_{j,k})$ to $echoes[com]$ such that $k\in[c]$ after receiving an echo message from party $j$.
    However, for any pair $j,j'\in\mathbb{N}$ such that $j\neq j'$ and $k,k'\in[c]$, it cannot be the case that $(j-1)\cdot c+k=(j'-1)\cdot c+k'$ because the distance between $(j-1)\cdot c$ and $(j'-1)\cdot c$ is at least $c$.
\end{proof}

\begin{lemma}\label{lem:uniqueReady}
    If two nonfaulty parties $i,j$ send the messages $\langle ready, com, P_i,\pi_i\rangle$ and $\langle ready, com', P_j,\pi_j\rangle$, then $com=com'$.
\end{lemma}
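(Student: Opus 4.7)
The plan is to argue by contradiction, assuming two nonfaulty parties $i,j$ send ready messages with $com \neq com'$. A nonfaulty party fires a ready message through one of exactly two paths: the \emph{echo path}, which requires $(n-f)\cdot c$ entries in $\mathit{echoes}[com]$ (coming from at least $n-f$ distinct $\mathit{echo}$ senders, each contributing exactly $c$ pairs), together with the polynomial-consistency check $\Commit(P')=com$; or the \emph{amplification path}, which requires $(f+1)\cdot c$ entries in $\mathit{readies}[com]$ (coming from at least $f+1$ distinct $\mathit{ready}$ senders).

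First I would rule out the case that both $i$ and $j$ trigger via the echo path for distinct commitments. Because each nonfaulty party sends at most one $\mathit{echo}$ message in its lifetime (it guards on receiving the \emph{first} valid $\langle value,\cdot\rangle$ from the dealer), the $n-f$ echo senders counted by $i$ toward $com$ and the $n-f$ echo senders counted by $j$ toward $com'$ would overlap in at least $2(n-f)-n\geq f+1$ parties, at least one of which is nonfaulty. That nonfaulty party would have sent $\mathit{echo}$ messages for both $com$ and $com'$, a contradiction.

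Next I would eliminate the amplification path by a "first nonfaulty sender" argument. Consider the first nonfaulty party to send any ready message. It cannot have triggered via the amplification path, since the $f+1$ distinct $\mathit{ready}$ senders required include at least one nonfaulty party, and by hypothesis none has yet sent. Hence the first nonfaulty sender triggers via the echo path with some commitment $com^*$. By induction on the time a nonfaulty party first sends a ready, every subsequent nonfaulty ready uses $com^*$: if it triggers via the echo path with $com\neq com^*$ we contradict the previous paragraph (two nonfaulty parties echo-triggering on distinct commitments); if it triggers via the amplification path with $com\neq com^*$, then among the $f+1$ ready senders at least one is nonfaulty and contributed a ready for $com$, contradicting the induction hypothesis. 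Therefore $com=com^*=com'$.

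The main obstacle I anticipate is being precise about the counting: I have to use that $\mathit{echoes}[com]$ (resp.~$\mathit{readies}[com]$) accumulates $c$ pairs \emph{per distinct sender} whose first $\mathit{echo}$ (resp.~$\mathit{ready}$) message for $com$ is processed, so that an $(n-f)\cdot c$ threshold really does correspond to $n-f$ distinct senders. Once that bookkeeping is pinned down, the quorum-intersection and induction-on-first-sender arguments go through cleanly without needing to invoke the commitment's binding property in this lemma.
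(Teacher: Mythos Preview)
Your proposal is correct and uses essentially the same ideas as the paper's proof: both rely on the observation that the \emph{first} nonfaulty party to send a ready with a given commitment must have done so via the echo path (since at most $f$ faulty readies can precede it), and then apply quorum intersection on the $n-f$ echo senders. The paper is slightly more direct---it picks the first nonfaulty sender for each of $com$ and $com'$ separately and immediately applies the intersection argument, avoiding your induction on time---but the substance is the same.
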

\begin{proof}
    Let $i',j'$ be the first nonfaulty parties that sent messages with the values $com,com'$ respectively.
    Since $i'$ is the first nonfaulty party to send such a message, it couldn't have received a $\langle ready, com, P^k, \pi^k\rangle$ message from any party other than the $f$ faulty parties before sending such a message.
    The only other way for $i'$ to send such a message is after finding that $\left|echoes[com]\right|\geq (n-f)\cdot c$.
    Since $i'$ adds $c$ elements to $echoes[com]$ after receiving an $\langle echo, com, P_k,\pi_k\rangle$ from party $k$, this means it received such echo messages from $n-f$ parties.
    Similarly, $j'$ received an $\langle echo, com', P_k,\pi_k\rangle$ message from $n-f$ parties.
    Since $2(n-f)=n+(n-2f)\geq n+f+1$, $i'$ and $j'$ received those ready messages from at least $f+1$ common parties, and at least one of those parties is nonfaulty.
    Note that if some nonfaulty party sends an echo message it sends the same one to all parties, and thus $com=com'$.
\end{proof} 
\begin{lemma}\label{lem:readyProof}
    Let $c=\lceil\frac{\ell+1}{f+1} \rceil$ be defined as it is in the protocol.
    If a nonfaulty party $i$ sends the message $\langle ready, com, P_i,\pi_i\rangle$, then $\left|P_i\right|=c$ and $\Open\Verify(com,P_i,i,\pi_i)=1$.
\end{lemma}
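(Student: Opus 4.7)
The plan is to split by the two triggers under which a nonfaulty party can emit a ready message for commitment $com$: either it has collected $(n-f)\cdot c$ echo pairs in $echoes[com]$, or it has collected $(f+1)\cdot c$ ready pairs in $readies[com]$. In the first (echo-based) case the argument is essentially bookkeeping. Party $i$ interpolates some polynomial $p'$ of degree $\leq\ell$, evaluates it at $c$ specific positions to form $P'_i$ (so $|P'_i|=c$ by construction), assembles the full vector $P'=(P'_1,\ldots,P'_n)$, and only sends the ready message after \emph{explicitly} checking that $\Commit(P')=com$. The claim then follows immediately from correctness of the vector commitment, since $\pi_i=\Open\Prove(P',i)$.

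The real work is in the second (ready-amplification) case, where no explicit commitment check is performed. I will introduce an auxiliary object: let $k^*$ be the nonfaulty party that first (in real time) sends a ready message with commitment $com$. This is well defined because $i$ itself is such a party. A pigeonhole step rules out $k^*$ having triggered via the ready case, since that would require $f+1$ earlier ready messages with $com$, at least one of which must come from a nonfaulty sender, contradicting the minimality of $k^*$. Hence $k^*$ triggered via the echo case, which by the first-case analysis above yields a polynomial $p^*$ of degree $\leq\ell$ and a vector $P^*$ derived from $p^*$ with $\Commit(P^*)=com$.

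Next I will invoke binding of the vector commitment. Every pair $(x,y)$ that any nonfaulty party adds to $readies[com]$ is added only after a successful $\Open\Verify(com,P_j,j,\pi_j)$ call at the corresponding position $j$. Because $\Commit(P^*)=com$, binding forces the accepted opening at position $j$ to coincide with $P^*_j$, whose entries are by definition evaluations of $p^*$ at fixed points. Consequently, all $\ell+1$ pairs party $i$ interpolates in the ready-amplification case lie on $p^*$, and since $\deg p^*\leq \ell$, uniqueness of polynomial interpolation gives $p'=p^*$. Hence $P'=P^*$, so $|P'_i|=c$ and $\pi_i=\Open\Prove(P^*,i)$ verifies under $com$ by correctness of the commitment.

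The main obstacle I anticipate is the bookkeeping around the binding step: the pairs in $readies[com]$ may arrive from faulty senders carrying arbitrary payloads, and one must argue that the very act of $i$ accepting them through $\Open\Verify$ already forces them to agree with $P^*$, even though $i$ itself has not verified that $com$ opens to anything specific. The ``first nonfaulty sender'' trick is precisely the bootstrap that establishes the existence of $P^*$ with $\Commit(P^*)=com$, after which binding uniformly controls all subsequent accepted openings and makes the polynomial reconstruction deterministic.
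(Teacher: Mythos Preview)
Your proposal is correct and follows essentially the same approach as the paper's proof: both split on the two triggers (echo-based versus ready-amplification), dispatch the echo case directly from the explicit $\Commit(P')=com$ check plus commitment correctness, and for the ready case both introduce the first nonfaulty party to send a $\langle ready, com, \ldots\rangle$ message, argue by minimality that this party must have triggered via echoes (hence $com$ commits to some $P^*$ built from a degree-$\le\ell$ polynomial), and then use binding of the vector commitment to force every accepted opening in $readies[com]$ to lie on that polynomial, so interpolation recovers it exactly. If anything, your write-up is slightly more explicit than the paper's in naming the binding property as the mechanism that constrains openings coming from potentially faulty senders.
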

\begin{proof}
    Party $i$ only sends the message $\langle ready, com, P_i,\pi_i\rangle$ if it finds that $\left|echoes[com]\right|\geq (n-f)\cdot c$ or if it finds that $\left|readies[com]\right|\geq (f+1)\cdot c$. 
    This can only happen as a result of receiving messages of the form $\langle echo, com, P_j,\pi_j\rangle$ from $n-f$ parties, or messages of the form $\langle ready, com, P_j,\pi_j\rangle$ from $f+1$ parties which pass verification tests.
    This is because whenever $i$ updates either of its $echoes$ or $readies$ sets, it adds exactly $c$ elements to them.
    If $i$ sent the message after receiving $n-f$ echo messages, then $i$ first interpolates $\ell+1$ of the points $(k,y_k)\in echoes[com]$ to a polynomial $p'$, for every $j\in[n]$ computes $P'_j=(p((j-1)\cdot c+1),\ldots, p(j\cdot c))$, sets $P'=(P'_1,\ldots,P'_n)$, and then checks that $\Commit(P')=com$.
    It then computes $\pi_i =\Open\Prove(P',i)$  and sends the message $\langle ready, com, P'_i,\pi_i\rangle$.
    Note that in that case, $com$ is indeed a commitment to $P'$, so $\Open\Verify(com,P'_i,i,\pi_i)=1$.
    On the other hand, if $i$ sent the message after receiving $f+1$ ready messages, then at least one of those messages was received from a nonfaulty party.
    Observe the first nonfaulty party $j$ that sent a $\langle ready, com, P_j, \pi_j\rangle$ message.
    No nonfaulty party has sent a ready message with the value $com$ at the time $j$ sent the message, so it could have only received ready messages with the value $com$ from the $f$ faulty parties, and thus $\left|readies[com]\right|\leq f\cdot c$.
    This means that before sending the message, it received $n-f$ messages of the form $\langle echo,com,P_k,\pi_k\rangle$, interpolated $\ell+1$ of the values in its $echoes[com]$ set to a polynomial $p'$, for every $l\in[n]$ computed $P'_l=(p'((l-1)\cdot c+1),\ldots p'(l\cdot c))$
    and found that $\Commit((P'_1,\ldots,P'_n))=com$.
    Since interpolating $\ell+1$ points always yields a polynomial of degree $\ell$ or less, this means that $com$ is a commitment to $n$ sets of $c$ points on the polynomial $p'$, which is of degree $\ell$ or less.
    Now, before sending the ready message, $i$ receives $f+1$ messages of the form $\langle ready, com, P_j, \pi_j\rangle$ such that $\forall \Open\Verify(com, P_j, j, \pi_j)=1$, and thus each such $P_j$ is a set of $c$ points on the polynomial $p'$.
    More precisely,  $P_j=(p'((j-1)\cdot c+1),\ldots,p'(j\cdot c))$.
    Party $i$ then interpolates $\ell+1$ of the pairs $(k,p'(k))\in readies[com]$ to a polynomial, and since $p'$ is of degree $\ell$ or less, that polynomial must be $p'$.
    Finally, $i$ computes $P'_j=(p'((j-1)\cdot c+1),\ldots,p'(j\cdot c))$ for every $j\in[n]$, $P'=(P'_1,\ldots,P'_n)$ and $\pi_i=(\Open\Prove(P',i))$.
    After computing those values, $i$ sends $\langle ready,com,P'_i,\pi_i\rangle$ to all parties.
    Clearly, in this case $\left|P'_i\right|=c$.
    In addition, since $com$ is a commitment to $P'$, it is also the case that $\Open\Verify(com, P'_i,i,\pi_i)=1$.
\end{proof}

\begin{theorem}
    Protocol $RB$ is a reliable broadcast protocol resilient to $f<\frac{n}{3}$ Byzantine parties.
\end{theorem}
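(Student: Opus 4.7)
The plan is to establish the three Reliable Broadcast properties (Validity, Agreement, Termination) in turn, leveraging Lemmas~\ref{lem:uniqueReady} and~\ref{lem:readyProof} as the main technical tools. All arguments will ultimately reduce to two invariants: (i) all ready messages sent by nonfaulty parties carry the same commitment $com$, and (ii) any committed set $P'$ that ever survives verification must correspond to a single polynomial $p^*$ of degree at most $\ell$, determined the moment the first nonfaulty ready message is formed.

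For \emph{Validity}, I would assume the dealer is nonfaulty and trace the honest execution: the dealer sends correctly formed $\langle value,com,P_i,\pi_i\rangle$ messages, so every nonfaulty party $i$ verifies its opening and broadcasts an $\langle echo,com,P_i,\pi_i\rangle$. Each nonfaulty party receives at least $n-f$ valid echoes for $com$, collecting $(n-f)c \ge \ell+1$ points on the dealer's polynomial $p$. Interpolation recovers exactly $p$, and the reconstructed vector $P'$ equals the dealer's $P$ since $c$ evaluations per party were fixed. Hence $\Commit(P')=com$, a ready message is emitted, and eventually the output coefficients match $M$. For the nonfaulty-dealer part of \emph{Termination}, the same propagation shows every nonfaulty party broadcasts a ready message for $com$, so each receives $n-f$ such messages and terminates on line~\ref{line:broadcastTermination}.

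For \emph{Agreement}, suppose nonfaulty parties $i,j$ output $M,M'$. Each must have sent a ready message and received $n-f$ verifying ready messages for some commitment. By Lemma~\ref{lem:uniqueReady} the commitments agree on a common $com$. I would then argue, via binding of the vector commitment, that every opening accepted in a ready message for $com$ must coincide with the values of the canonical polynomial $p^*$ of degree $\le\ell$ defined by the first nonfaulty ready-sender (as extracted in the proof of Lemma~\ref{lem:readyProof}). Thus both $i$ and $j$ interpolate $\ell+1$ consistent points on $p^*$, recovering the same coefficients and hence the same message. For the \emph{Termination} amplification (some nonfaulty party terminated, possibly with a faulty dealer), any terminating party received $n-f$ ready messages, of which at least $f+1$ are from nonfaulty parties; every other nonfaulty party eventually sees these $f+1$ ready messages for $com$, and Lemma~\ref{lem:readyProof} guarantees it can construct and broadcast its own valid ready message. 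By Lemma~\ref{lem:uniqueReady} all nonfaulty ready messages use the same $com$, so every nonfaulty party eventually accumulates $n-f$ verifying ready messages and terminates.

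The main obstacle I anticipate is the Agreement argument, specifically the step that pins down \emph{the same} polynomial $p^*$ across all ready messages for $com$ — a faulty party might supply a $P_k$ that individually verifies against $com$ but is crafted to derail interpolation. Handling this cleanly requires invoking binding of the vector commitment to argue that any accepted $(k,P_k,\pi_k)$ must expose the unique $P^*_k$ fixed inside $com$, together with the observation extracted in Lemma~\ref{lem:readyProof} that the first nonfaulty ready message certifies $com$ as a commitment to evaluations of some polynomial of degree $\le\ell$. Once these two facts are combined, interpolation over any $\ell+1$ verified points returns $p^*$ unambiguously, and Agreement and the relevant part of Termination follow with no further computation.
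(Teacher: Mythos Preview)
Your proposal is correct and follows essentially the same route as the paper: establish a unique commitment $com$ across all nonfaulty ready messages (Lemma~\ref{lem:uniqueReady}), use the first nonfaulty ready-sender to certify that $com$ commits to evaluations of a single degree-$\le\ell$ polynomial, invoke binding so every verified opening lies on that polynomial, and use the $f{+}1$ ready amplification for Termination. One small point to tighten in your Validity sketch: you trace the echo path up to ``a ready message is emitted'' and then jump to ``the output coefficients match $M$,'' but the output is computed from $readies[\cdot]$, not $echoes[\cdot]$; the paper closes this by first arguing (exactly via your invariants (i) and binding) that no nonfaulty party can accumulate $(n-f)c$ readies for any $com'\neq com$ and that every point in $readies[com]$ lies on $p$ --- you already have these pieces in your Agreement paragraph, so just apply them in Validity as well.
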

\begin{proof}
    We will prove each property separately.
    In the proof, let $c=\lceil\frac{\ell+1}{f+1}\rceil$, as defined in the protocol.
    
    \textbf{Validity.} If the dealer is nonfaulty, it computes $p(x)=\sum_{k=0}^\ell m_k\cdot x^k$, computes $P_j=(p((j-1)\cdot c+1),\ldots p(j\cdot c))$ for every $j\in[n]$ and sets $P=(P_1,\ldots P_n)$.
    Afterwards, the dealer computes $com=\Commit(P)$ and then for every party $j$ it computes $\pi_j=\Open\Prove(P,j)$, and sends $j$ the message $\langle value, com, P_j, \pi_j\rangle$.
    Every nonfaulty party $j$ that sends an echo message does so after receiving the previous message and sends the message $\langle echo, com, P_j, \pi_j\rangle$.
    The nonfaulty parties send only one echo message, so every nonfaulty party receives no more than $f$ messages of the form $\langle echo, com', P_k, \pi_k\rangle$ with $com'\neq com$.
    Assume by way of contradiction some nonfaulty party sends a ready message $\langle ready, com', P', \pi'\rangle$ with $com'\neq com$, and let $j$ be the first nonfaulty party that doe so.
    Since $j$ is the first nonfaulty party to send such a message, at the time it sent the message it could have only received $\langle ready, com', P_k, \pi_k\rangle$ message with $com'\neq com$ from the $f$ faulty parties.
    Note that $i$ can either add exactly $c$ elements to $echoes[com']$ or no elements at all after receiving each of those messages, and thus at that time $\left|echoes[com']\right|\leq f\cdot c<(n-f)\cdot c$.
    This means $j$ must have sent the message as a result of finding that $\left|echoes[com]\right|\geq (n-f)\cdot c$, which could only happen after receiving $\langle echo, com', P_j, \pi_j\rangle$ messages from $n-f$ parties.
    However, $n-f\geq f+1$, so at least one of those parties is nonfaulty. 
    As discussed above, every nonfaulty party that sends an echo message sends one with the value $com\neq com'$, reaching a contradiction.
    Now observe some nonfaulty party $i$ that completes the protocol.
    Before doing so, it found that for some $com'$ $\left|readies[com']\right|\geq (n-f)\cdot c$.
    Party $i$ adds exactly $c$ elements to $readies[com']$ after receiving $\langle ready,com',P_j,\pi_j\rangle$ from some party $j$ that passes some verification tests.
    As shown above, no more than $f$ such messages could have been sent for any $com'\neq com$, in which case $\left|readies[com']\right|\leq c\cdot f<(n-f)\cdot c$, so $com'=com$.
    Any pair $((j-1)\cdot c+k,p_{j,k})$ that $i$ added to $readies[com]$ was added after finding that $\Open\Verify(com,P_j,j,\pi_j)=1$ and parsing $P_j$ as $(p_{j,1},\ldots,p_{j,c})$.
    Seeing as $com$ is a commitment to $(P_i,\ldots,P_n)$, it must be the case that $p_{j,k}=p((j-1)\cdot c+k)$.
    Now, before completing the protocol $i$ interpolates $\ell+1$ points $(m,p(m))$ on the polynomial $p$ of degree $\ell$ or less, and thus it computes $p$, then computes its coefficients $m_0,\ldots,m_\ell$, and finally outputs $M=(m_0,\ldots,m_\ell)$.
    
    \textbf{Agreement.} Let $i$, $j$ be two nonfaulty parties that output the messages $M,M'$ respectively.
    Before outputting those messages, $i$ found that for some value $com$ $\left|readies[com]\right|\geq (n-f)\cdot c$.
    This means that $i$ received a message of the form $\langle ready, com, P_k, \pi_k\rangle$ from $n-f$ parties such that for each one $\Open\Verify(com,P_k,k,\pi_k)=1$.
    The same can be said about $j$ having received similar messages with some value $com'$.
    Since $2(n-f)=n+(n-2f)\geq n+f+1$, $i$ and $j$ received the aforementioned messages from at least $f+1$ common parties, at least one of which is nonfaulty.
    Note that every nonfaulty party sends only one ready message to all parties throughout the protocol (with the same content), so $com=com'$.
    
    Observe the first nonfaulty party $i^*$ that sent a ready message with the commitment $com$.
    At that time, $i^*$ could have received no more than $f$ ready messages with the commitment $com$, and as discussed in the proof of the Validity property, this means that $\left|readies[com]\right|\leq f\cdot c<(f+1)\cdot c$.
    This means that $i^*$ decided to send the message after finding that $\left|echoes[com]\right|\geq (n-f)\cdot c$, interpolated $\ell+1$ of the values $(k,y_k)\in echoes[com]$ to a polynomial $p'$,
    computed $P'_k=(p'((k-1)\cdot c+1),\ldots,p'(k\cdot c))$ for every $k\in[n]$.
    It then set $P'=(P'_1,\ldots,P'_k)$ and found that $\Commit(P')=com$.
    Since interpolating $\ell+1$ points always yields a polynomial of degree $\ell$ or less, this means that $com$ is a commitment to $n$ sets of $c$ points on a polynomial of degree $\ell$ or less.
    Now, before outputting $M$ and $M'$, $i$ and $j$ found that $\left|readies[com]\right|\geq (n-f)\cdot c$.
    Again, as discussed above, this could only happen after
    receiving $n-f$ messages of the form $\langle ready, com, P_k, \pi_k\rangle$ such that $\left|P_k\right|==c$ and $\Open\Verify(com,P_k,k,\pi_k)=1$.
    $P_k$ is a commitment to a vector of $c$ points on $p'$, and thus $P_k=(p'((j-1)\cdot c+1),\ldots,p'(j\cdot c))$.
    Therefore, after receiving those messages, both $i$ and $j$ add $((k-1)\cdot c+l,p'((k-1)\cdot c+l))$ to $readies[com]$ for every $l\in[c]$.
    Those are the only values added to the set readies, so for every $(k,y_k)\in readies[com]$, $y_k=p'(k)$.
    Choosing any $\ell+1$ points $(k,y_k)\in readies[com]$, both $i$ and $j$ then compute the same polynomial $p'(x)=\sum_{i=0}^\ell m'_i\cdot x^i$, and output the same message $(m'_0,\ldots,m'_\ell)$.
    
    \textbf{Termination.}
    If the dealer is nonfaulty, it computes $p(x)=\sum_{k=0}^\ell m_k\cdot x^k$ and computes $P_j=(p((j-1)\cdot c+1),\ldots,p(j\cdot c))$ for every party $j\in[n]$.
    The dealer then sets $P=(P_1,\ldots,P_n)$, computes $com=\Commit(P)$ and then for every party $j$ it computes $\pi_j=\Open\Prove(P,j)$ and sends $j$ the message $\langle value, com, P_j, \pi_j\rangle$.
    Every nonfaulty party then receives that message, finds that $\left|P^j\right|=c$ and $\Open\Verify(com,P_j,j,\pi_j)=1$ and sends an $\langle echo, com, P_j, \pi_j\rangle$ message to all parties.
    Every nonfaulty eventually receives an $\langle echo, com, P_j,\pi_j\rangle$ message from every nonfaulty party, finds that the same conditions hold, parses $P_j$ as $(p_{j,1},\ldots,p_{j,c})$ and adds $((j-1)\cdot c+k,p_{j,k})$ to $echoes[com]$ for every $k\in[c]$.
    After doing that, every nonfaulty party $j$ finds that $\left|echoes[com]\right|\geq (n-f)\cdot c$, and if it hasn't sent a ready message yet, it interpolates $\ell+1$ points in $echoes[com]$ to a polynomial $p'$ and sends a ready message.
    From Lemma~\ref{lem:uniqueReady}, all of the ready messages sent by nonfaulty parties have the same value $com$, and from Lemma~\ref{lem:readyProof}, if a nonfaulty party sends a message $\langle ready, com, P_j, \pi_j\rangle$ then $\Open\Verify(com,P_j,j,\pi_j)=1$ and $\left|P_j\right|=c$ for every one of those messages. 
    Therefore, after receiving each of those messages, every nonfaulty party updates its $readies[com]$ set and adds $c$ elements to it.
    After adding $c$ such elements for every nonfaulty $j$, every nonfaulty party finds that $\left|readies[com]\right|\geq (n-f)\cdot c$, performs some local computations, and completes the protocol.
    
    For the second part of the property, if some nonfaulty party completes the protocol it received $n-f$ messages of the form $\langle ready,com,P_j,\pi_j\rangle$ with the same value $com$ such that $\Open\Verify(com,P_j,j,\pi_j)=1$ and $\left|P^j\right|=c$.
    Out of those $n-f$ messages, at least $n-2f\geq f+1$ were sent by nonfaulty parties.
    Every nonfaulty party eventually receives those $f+1$ messages, finds the same conditions hold, and adds $c$ elements to $readies[com]$.
    After adding $c$ elements for every one of those $f+1$ parties, every nonfaulty $i$ sees that $\left|readies[com]\right|\geq (f+1)\cdot c$, performs some local computations and sends a message $\langle ready, com, P_i,\pi-i\rangle$ itself, if it hasn't done so earlier.
    From Lemma~\ref{lem:uniqueReady}, every nonfaulty party that sent a ready message previously also sent one with the same value $com$.
    From Lemma~\ref{lem:readyProof}, $\Open\Verify(com,P_i,i,\pi_i)=1$ and $\left|P^i\right|=c$, so after receiving those messages, every nonfaulty party adds $c$ elements to $readies[com]$.
    Finally, after adding $c$ elements to $readies[com]$ for every nonfaulty party, every nonfaulty party finds that $\left|readies[com]\right|\geq (n-f)\cdot c$, performs some local computations, and completes the protocol.
\end{proof}

\subsection{Proof of Theorem~\ref{thm:RBComplexity}}\label{sec:broadcast:proofRBComplexity}
\begin{proof}
    Let the number of words in the message be $\ell+1$.
    Throughout the protocol, the dealer starts by sending a single message to every party containing a commitment and $O(\frac{\ell}{n})$ words and proofs.
    Then, every party sends at most one echo message and one ready message containing a commitment, a proof and a set containing $O(\frac{\ell}{n})$ words.
    Overall, there are $O(n^2)$ messages, each containing $c$ words for the commitment, $p$ words for the proof and $O(\frac{\ell}{n})$ additional words. 
    This yields a total of $O(n^2\cdot (c+p)+\frac{\ell}{n}\cdot n^2)=O(n^2\cdot (c+p)+\ell\cdot n)$ words.
\end{proof}

The protocol can trivially be turned into a Validated Reliable Broadcast protocol, $VRB$, by only having parties output $m'$ in line~\ref{line:broadcastTermination} after checking that $\validate(m')=1$.
This clearly makes the additional part of the Validity property hold, and doesn't change the rest of the proof for the Validity and Correctness properties.
In the proof of the Termination property, first we can note that if some nonfaulty party were to output a message $M'$ when the dealer is nonfaulty, then from the Validity property it must be the case that $M'=M$.
This means that if the dealer does have an input $M$ such that $\validate(M)=1$, all nonfaulty parties would reach that point in the protocol, see that $\validate(M)=1$, and terminate.
In addition, if some nonfaulty party completes the protocol, it must have output some value some $M'$ such that $\validate(M')=1$.
Using the exact same arguments as the one in the proof of the Termination property, all nonfaulty parties eventually reach the end of the protocol.
From the Correctness property, they reach the end of the protocol with the same message $M'$, and thus when checking if $\validate(M')=1$ they all see that the condition holds and output $M'$.

\end{document}